\begin{document}

\setcounter{page}{179}
\publyear{22}
\papernumber{2146}
\volume{188}
\issue{3}

  \finalVersionForARXIV

\title{Structural Liveness of Immediate Observation Petri Nets}

\author{Petr Jan\v{c}ar and Ji\v{r}{\'{i}} Val\r{u}\v{s}ek\thanks{Supported by Grant No.\ IGA\_PrF\_2022\_018 and
           IGA\_PrF\_2021\_022 of IGA of  Palack\'y University Olomouc.}\thanks{Address for correspondence:
           Dept of Computer Science, Faculty of Science, Palack\'y University in Olomouc, Czech Republic.  \newline \newline
                    \vspace*{-6mm}{\scriptsize{Received November 2022; \ accepted February  2023.}}}
\\
Dept of Computer Science, Faculty of Science\\
Palack\'y University in Olomouc\\
Czech Republic\\
\{petr.jancar, jiri.valusek01\}{@}upol.cz}
\maketitle

\runninghead{P. Jan\v{c}ar and J. Val\r{u}\v{s}ek}{Structural Liveness of Immediate Observation Petri Nets}

\maketitle

\begin{abstract}
	We look in detail at the structural liveness problem (SLP) for subclasses
	of Petri nets, namely immediate observation nets (IO nets)
and their generalized variant called branching immediate multi-observation nets (BIMO nets), that were
recently introduced by Esparza, Raskin, and Weil-Kennedy.
We show that SLP is PSPACE-hard for IO nets and in PSPACE for BIMO
	nets. In particular, we discuss the (small) bounds on the
	token numbers in net places that are
	decisive for a marking to be (non)live.

	\textbf{2012 ACM Subject Classification:} Theory of computation
$\rightarrow$ Logic and verification

\end{abstract}

\begin{keywords}
Petri nets, immediate observation nets, structural
    liveness, complexity
\end{keywords}

\section{Introduction}

Petri nets are an established model of concurrent systems,
and a natural part of related research
aims to clarify computational complexity of verifying basic
behavioural properties for various (sub)classes of this model.
A famous example is the reachability problem for standard
place/transition Petri nets, which was recently shown to be
Ackermann-complete~(\cite{DBLP:conf/lics/LerouxS19,DBLP:journals/corr/abs-2104-12695,
DBLP:journals/corr/abs-2104-13866}).

Here we are interested in the \emph{structural liveness problem} (SLP): given
a net $N$, is there an initial marking $M_0$ such that the marked net
$(N,M_0)$ is live? We recall that a marked net $(N,M_0)$ is live if no
transition can become disabled forever, in markings reachable from
$M_0$.
The \emph{liveness problem} (LP), asking if a marked net $(N,M_0)$ is live,
has been long
known to be tightly related to the reachability
problem~\cite{DBLP:phd/ndltd/Hack76};
hence LP has turned out to be Ackermann-complete as well.
Somewhat surprisingly, for SLP even
the decidability status was open until recently~(see,
e.g.,~\cite{DBLP:journals/ipl/BestE16}); the currently known status is
that the problem is
EXPSPACE-hard and decidable~\cite{DBLP:journals/acta/JancarP19}.

Here we look at the complexity of SLP for a subclass of place/transition Petri nets,
namely for the class of \emph{immediate observation Petri nets} (IO
nets), and their generalized variant called
 \emph{branching immediate multi-observation Petri nets} (BIMO nets).
These models
were introduced and studied recently,
in~\cite{DBLP:conf/apn/EsparzaRW19,DBLP:conf/rp/RaskinW20,DBLP:conf/concur/RaskinWE20},
motivated by a study of population protocols
and chemical reaction networks.
\emph{Population protocols}~\cite{DBLP:journals/dc/AngluinADFP06}
are a model of computation where an arbitrary number of
indistinguishable finite-state agents interact in pairs; an
interaction of two agents, being in states $q_1$ and $q_2$, consists
in changing their states to $q_3$ and $q_4$, respectively,
according to a transition function. A~global state of a~protocol
is just a function assigning to each (local) state the number of agents in
this state.
It is natural to represent
population protocols by
Petri nets where places represent (local) states, and markings
represent global states.
The above mentioned IO nets represent a special subclass,
so called  \emph{immediate observation
protocols} (IO protocols), that was introduced
in~\cite{DBLP:journals/dc/AngluinAER07}. Here an agent can change
its state $q_1$ to $q_2$ when observing that another agent is
in state $q_3$.
We also note that BIMO nets can be viewed as a generalization of Petri
nets related to basic parallel processes (BPP nets), studied, e.g.,
in~\cite{DBLP:journals/fuin/Esparza97}. In the BPP nets each
transition $t$ has precisely one input place $p$, the edge $(p,t)$
having the weight $1$. In the BIMO nets, for performing such a
transition $t$ it is not only necessary that the place $p$ has at
least one token but there is also a context-condition, requiring that
some places have sufficient amounts of tokens.
(The relation of BPP nets and BIMO nets resembles the relation of
context-free and context-sensitive grammars, where the
word-concatenation is viewed as commutative.)

Among the results of~\cite{DBLP:conf/apn/EsparzaRW19} is
the PSPACE-completeness of the liveness problem (LP) for IO
nets. The paper~\cite{DBLP:conf/apn/EsparzaRW19}
does not deal with the structural liveness problem
(SLP) directly but it can be derived from its results that SLP is in PSPACE for IO
nets.

\emph{Our contribution.}
We first show that also SLP is PSPACE-hard for IO nets.
Here we proceed similarly as~\cite{DBLP:conf/apn/EsparzaRW19} where the
hardness proof for LP is given; we show a modification of a~standard simulation
of linear bounded automata by $1$-safe nets,
but the construction for SLP is more subtle than for LP.
The PSPACE membership is straightforward for LP on IO nets,
since IO nets are a special case of conservative nets, but it is
not so straightforward for SLP.
We show that for a BIMO net, where $P$ is the set of places and $w$
the maximum edge-weight, the fact whether or not a marking $M$ is live
($M:P\rightarrow\Nat$ assigns the number of tokens to each place)
is determined by the values $M(p)$ that are less than $2\cdot
w\cdot|P|$.
This result allows us to give a simple proof that SLP is in PSPACE (and thus
PSPACE-complete) for BIMO nets.

\emph{The organization of the paper.} In
Section~\ref{sec:basicdef} we give the basic definitions related to
Petri nets, and define the subclasses (BIMO, BIO, IMO, IO nets) in
which we are interested; in part~\ref{subsec:results}
we summarize our results. Section~\ref{sec:PSPACEhardness} shows
the \PSPACE-hardness of the structural liveness problem (SLP) for
ordinary IO nets (where ``ordinary'' means that all edge-weights are
$1$). In Sections~\ref{sec:strategy}, \ref{sec:cruclemma} and
\ref{sec:upperord} we prove the announced results for ordinary BIMO, BIO,
IMO and IO nets. Section~\ref{sec:extensions} extends these results to all BIMO,
BIO, IMO, IO nets, by a simple construction. In
Section~\ref{sec:inpspace} we use the achieved results to show
that SLP for BIMO nets is in PSPACE.
Some additional remarks are given in
Section~\ref{sec:addrem}.

\section{Preliminaries, and results}\label{sec:basicdef}

By $\Nat$ we denote the set $\{0,1,2,\dots\}$ of nonnegative
integers, and we put $[i,j]=\{i,i{+}1,\dots,j\}$ for
$i,j\in\Nat$.\vspace*{-1mm}

\paragraph{Multisets.}
Given a set $U$, called the \emph{universe},
by a \emph{multiset} $M$ \emph{over} $U$ we mean a function $M:U
\rightarrow \Nat$; for $x\in U$ we write $x\in M$ if $M(x)\geq 1$.
We use the notation $M = \Lbag
x_1,x_2,\dots,x_n \Rbag$ for finite multisets; here we have
$M(x) = \sizeof{\{i\in[1,n]\,; x_i=x\}}$,
and we put $\sizeof{M}=n$.
 A set $X\subseteq U$ is
naturally viewed as
a multiset $X:U\rightarrow\{0,1\}$.
By $\emptyset$ we denote the empty (multi)set ($\emptyset(x)=0$
for all $x\in U$).

\medskip
Given two multisets $M,M'$ over $U$, we define
the multisets $M''=M+M'$
and  $M'''=M-M'$
so that $M''(x)=M(x)+M'(x)$
and $M'''(x)=\max\{M(x)-M'(x),0\}$ for all $x\in U$.
By $M\leq M'$
we denote that $M(x)\leq M'(x)$ for all $x\in U$.
We also use the intersection of multisets:
for $M''=M\cap M'$ we have $M''(x)=\min\{M(x),M'(x)\}$ for all
$x\in U$.

\subsection{Standard Petri net definitions}

\paragraph{Nets, subnets (ordinary, and weighted).}
A \emph{net} $N$ is a triple $(P,T,F)$ where $P$ and $T$ are
finite disjoint sets of \emph{places} and
\emph{transitions}, respectively, and $F: (P \times T) \cup
(T \times P) \longrightarrow \mathbb{N}$ is a \emph{flow
function}. A pair $(x,y)\in (P \times T) \cup (T \times P)$
where $F(x,y)\geq 1$ is also called an \emph{edge} in $N$,
and $F(x,y)$ is viewed as its \emph{weight}.
A \emph{net} $N=(P,T,F)$ is \emph{ordinary} if $F$ is of the
type $(P \times T) \cup (T \times P) \longrightarrow
\{0,1\}$ (hence the weights of edges are $1$).

\medskip
We use a standard depiction of nets; for instance, the net
in Figure~\ref{fig:ex1} has $6$ places (circles), $6$ transitions
(boxes), and
the depicted edges; the edge weights $1$ are implicit. If
the weight is larger than $1$, then it is depicted
explicitly, like, e.g., the weight $3$
in Figure~\ref{fig:bimoext} (in Section~\ref{sec:extensions}).

\begin{figure}[!h]
    \centering
    \begin{tikzpicture}[node distance=1.3cm,>=stealth',bend angle=30,auto]
    \tikzstyle{place}=[circle,thick,draw=black!75,fill=blue!20,minimum
    size=5mm]
    \tikzstyle{transition}=[rectangle,thick,draw=black!75,
    fill=black!20,minimum size=4mm]
    \tikzstyle{dots}=[circle,thick,draw=none,minimum
    size=6mm]

    \tikzstyle{every label}=[black]

    \begin{scope}
        \node [place,tokens=4] (p1) [label=above:$p_1$] {};
        
        \node [transition] (t1) [label=above:$t_1$,right of=p1] {}
        edge [pre] (p1);
        \node [place] (p2) [label=above:$p_2$,right of=t1] {}
        edge [pre] (t1);

        \node [place] (p3) [label=below:$p_3$,below of=p2] {}
        edge [pre, bend left] (t1)
        edge [post, bend right] (t1);

        \node [transition] (t2) [label=below:$t_2$,right of=p3] {}
        edge [pre] (p2)
        edge [post] (p3);
        
        \node [transition] (t3) [label=below:$t_3$,left of=p3] {}
        edge [pre] (p3);
        
        \node [place] (p4) [label=below:$p_4$,below of=p1] {}
        edge [pre] (t3)
        edge [pre, bend left] (t1)
        edge [post, bend right] (t1);
        
        \node [transition] (t4) [label=below:$t_4$,left of=p4] {}
        edge [pre] (p4)
        edge [post] (p1);
        
        \node [place] (p5) [label=below:$p_5$,below of=p3] {}
        edge [pre] (t2);

        \node [transition] (t5) [label=below:$t_5$,right of=p5] {}
        edge [pre] (p5);
        
        \node [transition] (t6) [label=below:$t_6$,left of=p5] {}
        edge [post] (p5);

        \node [place,tokens=1] (p6) [label=below:$p_6$,left of=t6] {}
        edge [pre, bend left] (t6)
        edge [post, bend right] (t6);
    \end{scope}
\end{tikzpicture}
    \caption{Example of a marked \oBIMO net.}
    \label{fig:ex1}
\end{figure}

\medskip
Given a net  $N = (P,T,F)$ and sets  $P' \subseteq P$, $T'
\subseteq T$, by $\rst{N}{(P',T')}$ we denote the (sub)net
$(P',T',F')$ where $F'$ arises from $F$ by restricting its
domain to
$(P' \times T') \cup (T' \times P')$.
Sometimes we also deal with subnets arising by removing some edges. \vspace*{-1mm}

\paragraph{Pre-msets, post-msets, siphons.}
Let $N = (P,T,F)$ be a fixed net. For each transition $t\in T$ we
define its \emph{pre-mset} $\premset{t}$
and its \emph{post-mset} $\postmset{t}$
as
multisets over $P$ where $\premset{t}(p) = F(p,t)$
and $\postmset{t}(p) = F(t,p)$, for each place $p\in P$.

  A set of places $S\subseteq P$ is a \emph{siphon} if
for each $t\in T$ we have that $\postmset{t}\cap S\neq\emptyset$
entails  $\premset{t}\cap S\neq\emptyset$.
For instance, the
    set $S=\{p_2,p_3,p_4\}$ in  Figure~\ref{fig:ex1} is a
    siphon, since $\{t\in T\mid \postmset{t}\cap S\neq\emptyset\}=
\{t_1,t_2,t_3\}$ and
    $\{t\in T\mid \premset{t}\cap S\neq\emptyset\}=\{t_1,t_2,t_3,t_4\}$. \vspace*{-1mm}

\paragraph{Markings, marked nets.}
Given  a net  $N = (P,T,F)$, a \emph{marking} $M$ of $N$ is
a multiset over $P$ (hence $M: P \rightarrow \mathbb{N}$),
where $M(p)$
is viewed as the number of \emph{tokens} on the place $p$
(alternatively we also say ``in the place $p$''). For
$P'\subseteq P$, $\rst{M}{P'}$ denotes the restriction of
$M$ to $P'$.
A \emph{place} $p\in P$
is \emph{marked at} $M$ if $M(p)\geq 1$;
a~\emph{set of places} $P'\subseteq P$
is \emph{marked at} $M$ if
$|\rst{M}{P'}|\geq 1$.

When the places in $P$ are ordered, we can also naturally
 view markings as vectors; e.g., the marking $M$ depicted in
 Figure~\ref{fig:ex1} can be given as $(4,0,0,0,0,1)$. By
 $\bzero$ we denote the zero vector (with the dimension
 clear from context).
E.g., for the siphon $S=\{p_2,p_3,p_4\}$ in
    Figure~\ref{fig:ex1} we have $\rst{M}{S}=\bzero$, i.e., the
    siphon $S$ is unmarked at $M$.

A \emph{marked net} (or a \emph{Petri net}) is a tuple
$(N,M_0)$ where $N$ is a net and $M_0$ is a marking of $N$,
called the \emph{initial marking}. \vspace*{-1mm}

\paragraph{Executions, reachability.} Given  a net  $N =
(P,T,F)$, a \emph{transition} $t$ is \emph{enabled at a
marking} $M$, which is denoted by $M\step{t}$,
if $M\geq \premset{t}$
(i.e., $M(p) \geq F(p,t)$ for all $p\in P$).
 If $t$ is enabled at $M$,
 it can \emph{fire} (or \emph{be performed}, or \emph{be
executed}), which yields the marking $M'=(M-\premset{t})+\postmset{t}$
(hence $M'(p)= M(p) - F(p,t) + F(t,p)$ for all $p\in P$); this is denoted
by $M \step{t}M'$.

A sequence  $M_0 \step{t_1} M_1 \step{t_2} M_2\cdots
 \step{t_k} M_k$ is called an \emph{execution}, \emph{from} $M_0$
\emph{to} $M_k$, which is also presented as  $M_0 \step{\sigma}
 M_k$ where $\sigma=t_1t_2\cdots t_k$. A marking $M'$ is
 \emph{reachable from} $M$ if there is an execution $M
 \step{\sigma} M'$. By  $\rset{M}$ we denote the set of all
 markings that are reachable from $M$;
we also write $M\step{*}M'$ instead of $M'\in\rset{M}$.

For instance, an execution of the net in
Figure~\ref{fig:ex1} is
$(1,1,1,1,1,1)\step{t_2}(1,0,2,1,2,1)\step{t_3}(1,0,1,2$, $2,1)\step{t_3}(1,0,0,3,2,1)
\step{t_4}(2,0,0,2,2,1)\step{t_4}(3,0,0,1,2,1)\step{t_4}(4,0,0,0,2,1)
\step{t_5}(4,0,0,0$, $1,1)\step{t_5}(4,0,0,0,0,1)$.
We might also note that generally any unmarked siphon $S$
cannot get marked; i.e., $\rst{M}{S}=\bzero$ entails
$\rst{M'}{S}=\bzero$ for all $M'\in\rset{M}$. \vspace*{-1mm}

\paragraph{Dead and live transitions, liveness and structural liveness.}
 Given $N = (P,T,F)$,
a \emph{transition} $t$ is \emph{dead at} a \emph{marking}
$M$ if there is no $M'\in\rset{M}$ such that $M'\step{t}$
(hence $t$ is disabled in all markings reachable from $M$).
A~\emph{transition} $t$ is \emph{live at} $M$ if it is
non-dead at each $M'\in \rset{M}$. We note that a~transition can
be both non-live and non-dead at $M$.

A \emph{marking} $M$ of $N$ is \emph{live} if all
transitions are live at $M$.
A \emph{marked net} $(N,M_0)$ is \emph{live} if
$M_0$ is live (for $N$).
A net $N$ is
\emph{structurally live} if there is $M_0$ such that
$(N,M_0)$ is live.

For instance, the net in  Figure~\ref{fig:ex1} is clearly
not structurally live (if $M(p_6)=0$, then $M$ is clearly
non-live, and otherwise there is $M'\in\rset{M}$ such that
the siphon $S=\{p_2,p_3,p_4\}$ is unmarked at $M'$, i.e.\
$\rst{M'}{S}=\bzero$). If we removed the edges
$(p_3,t_1),(t_1,p_3)$ and $(p_4,t_1),(t_1,p_4)$, then the
net would become structurally live.

The \emph{liveness problem} (LP) asks if a given marked net
$(N,M_0)$ is live. The \emph{structural liveness problem}
(SLP) asks if a given net $N$ has a marking $M_0$ for which
$(N,M_0)$ is live. \vspace*{-1mm}

\paragraph{Conservative nets, and the (structural) liveness problem.}
We call a \emph{net} $N=(P,T,F)$ \emph{conservative}
if $\sizeof{\premset{t}}=\sizeof{\postmset{t}}$ for each $t\in T$;
in this case $M\step{t}M'$ entails $\sizeof{M}=\sizeof{M'}$
(hence in every execution the
number of tokens is constant). We remark that the
definition of conservative nets in the literature is sometimes more general,
in which case our notion
corresponds to $\mone$-conservative nets.

For the results of this paper, it is particularly useful
to recall the well-known fact:

\begin{proposition}\label{p21}
	The liveness problem (LP) for conservative nets
	is \PSPACE-complete.
\end{proposition}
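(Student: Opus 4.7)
\medskip

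\noindent\textbf{Proof plan.}

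For the upper bound I would use the fact that conservativeness means $|M| = |M_0|$ along every execution, so every marking reachable from $M_0$ stores at most $n := |M_0|$ tokens in each place and can be encoded in space $O(|P|\log n)$, i.e., polynomial in the input size. Non-liveness of $(N,M_0)$ can then be expressed as an alternation of quantifiers of polynomially bounded depth: there exists a transition $t\in T$, there exists $M\in\rset{M_0}$, and for every $M'\in\rset{M}$ one has $M'\not\step{t}$. The two reachability/non-reachability tests each require only a polynomial-space search through the (exponentially many, but polynomially describable) markings; by Savitch's theorem $\mathrm{NPSPACE}=\mathrm{PSPACE}$, so this yields a \PSPACE\ algorithm for non-liveness, and hence for liveness.

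For the lower bound I would follow the classical route of simulating a polynomial-space-bounded (deterministic) Turing machine $\mathcal{M}$ on an input $x$ by a $1$-safe Petri net $N_{\mathcal{M},x}$ whose places encode the tape cells, head position, and control state; the transitions encode single computation steps. Since $N_{\mathcal{M},x}$ is $1$-safe, it is not yet conservative in the sense of the paper, but we can add, for every place $p$, a complementary place $\overline{p}$ and adjust each transition $t$ so that whenever $t$ writes a token to $p$ it also removes one from $\overline{p}$, and vice versa. This standard complementation keeps the net $1$-safe, enforces $|\premset{t}| = |\postmset{t}|$ for every $t$, and hence makes the net conservative. Finally, a dedicated ``success'' transition $t_{\textrm{acc}}$ is attached so that $t_{\textrm{acc}}$ is live in the resulting marked net iff $\mathcal{M}$ has an infinitely often returning accepting computation (e.g.\ by looping $\mathcal{M}$ after acceptance); with $\mathcal{M}$ chosen as the canonical \PSPACE-complete simulator this gives a logspace reduction from a \PSPACE-complete problem to LP on conservative nets.

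The routine part is the PSPACE upper bound, since conservativeness gives immediately the needed polynomial bound on the size of markings. The only real obstacle is to arrange the lower-bound reduction so that a \emph{liveness} question in the constructed net faithfully captures a \emph{reachability/acceptance} question in the Turing machine: one has to make sure that the simulation of $\mathcal{M}$ cannot ``get stuck'' at an intermediate marking for reasons unrelated to the computation, and that a rejecting or non-halting behaviour of $\mathcal{M}$ reliably produces a dead transition in the net. This is handled by the standard trick of wiring a reset/loop gadget that can fire if and only if the simulated computation has reached an accepting configuration, so that liveness of all transitions is equivalent to the acceptance of $\mathcal{M}$ on $x$.
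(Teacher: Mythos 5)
Your proposal is correct and follows essentially the same route as the paper, which establishes membership exactly via the bounded-marking plus NPSPACE$=$PSPACE argument (Remark~\ref{rem:pspace}) and hardness via the standard LBA/Turing-machine simulation with an acceptance-triggered ``free reconfiguration'' gadget that is spelled out in Section~\ref{sec:PSPACEhardness}. The only cosmetic difference is that you obtain conservativeness by adding complement places, whereas the paper's encoding (one state-position token plus one token per tape cell, each transition moving both) is conservative by construction; both variants work equally well.
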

The \PSPACE-hardness follows by a standard reduction from the acceptance
 problem for linear bounded automata (LBA),
even for ordinary conservative nets,
 as we also recall later in detail.
For the membership in \PSPACE we can refer
to~\cite{DBLP:journals/tcs/JonesLL77}; this holds even for the case
with general edge-weights that are given in binary.

\begin{remark}\label{rem:pspace}
It is useful to recall the idea of the \PSPACE-membership:
Given a~conservative net $N$, a transition $t$ and a marking $M$,
deciding if $t$ is non-dead
at $M$ is obviously in NPSPACE (we just perform
a nondeterministically chosen execution from $M$ until covering
$\premset{t}$, i.e., until reaching $M'$ such that
$M'\geq\premset{t}$). Since NPSPACE$=$PSPACE, we deduce that deciding if $t$ is dead
at $M$ is in PSPACE.
Given a conservative net $N$ and a~marking $M$, deciding if
there is $M'\in\rset{M}$ and a transition $t$ that is dead at $M'$ is
thus in NPSPACE, hence in PSPACE, as well.
\end{remark}

 \begin{remark}
 The liveness problem (LP) for general nets is well-known to be
 tightly related to the reachability problem; the recent break-through
 results~\cite{DBLP:journals/corr/abs-2104-13866,DBLP:journals/corr/abs-2104-12695,DBLP:conf/lics/LerouxS19} thus show its huge computational complexity,
 namely the Ackermann-completeness.
The structural liveness problem (SLP) is more unclear so far. For
general nets SLP is known to be \EXPSPACE-hard and
decidable~\cite{DBLP:journals/acta/JancarP19}; for conservative nets
we can show that SLP is \EXPSPACE-hard and elementary~\cite{JLV22}.
 \end{remark}

\subsection{Immediate observation nets, and their (more general) variants}

We recall the notions of
IO (immediate observation) nets and BIO (branching IO) nets, including
 the multi-observer versions: IMO and BIMO nets.
These nets were
introduced in
\cite{DBLP:conf/apn/EsparzaRW19,DBLP:conf/concur/RaskinWE20},
being originally motivated by
 (special types of) population protocols.
They have restricted types of transitions; we start with defining the
most general case.\vspace*{-1mm}

\paragraph{Branching immediate multiple-observation (BIMO) transitions.}
Given a net $N=(P,T,F)$, we say that a transition $t\in T$ is
a~\emph{BIMO transition} if
$\sizeof{\premset{t}-\postmset{t}} \leq 1$
(hence there is at most one place $p$ for which
$\premset{t}(p)>\postmset{t}(p)$, in which case we have
$\premset{t}(p)-\postmset{t}(p)=1$).

\medskip
\noindent
\emph{Convention.}
In our considerations, for each BIMO transition $t$ we will assume that
$\premset{t}\neq\emptyset$. I.e., in the case $\premset{t}=\emptyset$
we tacitly assume an additional ``dummy place'' $\textsc{d}$ such that
 $\premset{t}(\textsc{d})=\postmset{t}(\textsc{d})=1$ and $\textsc{d}$
 is marked in all considered markings.

\medskip

Having the convention in mind,
to each BIMO transition $t$ we fix a presentation
\begin{center}
    $\bimotrans{t}{p_s}{
	    p_{o_1},p_{o_2},\dots,p_{o_\ell}}{p_{d_1},p_{d_2},\dots,p_{d_k}}$
\end{center}
(for some $k,\ell\in\Nat$)
where
\begin{center}
$\premset{t}=\Lbag p_s\Rbag +\Lbag p_{o_1},p_{o_2},\dots,p_{o_\ell}
		\Rbag$, and
$\postmset{t}=\Lbag p_{o_1},p_{o_2},\dots,p_{o_\ell}\Rbag +
\Lbag p_{d_1},p_{d_2},\dots,p_{d_k}  \Rbag$.
\end{center}
We note that the multisets $\Lbag p_s\Rbag$,
$\Lbag p_{o_1},p_{o_2},\dots,p_{o_\ell}\Rbag$, and
$\Lbag p_{d_1},p_{d_2},\dots,p_{d_k}  \Rbag$ are not necessarily
disjoint (their pairwise intersections might be nonempty).

\medskip
The place $p_s$ is called the \emph{source place of} $t$. If
$\sizeof{\premset{t}-\postmset{t}} = 1$, then $p_s$ is the unique
place satisfying $\premset{t}(p_s)=1+\postmset{t}(p_s)$;
if $\sizeof{\premset{t}-\postmset{t}} = 0$ (hence
$\premset{t}(p)\leq\postmset{t}(p)$ for all $p\in P$), then we fix
$p_s$ as one of the places $p$ for which $\premset{t}(p)\geq 1$.

\medskip
For instance, all transitions in Figure~\ref{fig:bimoext}
(in Section~\ref{sec:extensions}) are BIMO-transitions; the
only presentation of $t$ in Figure~\ref{fig:bimoext}(left)
is $\bimotrans{t}{p_1}{p_1}{p_1,p_1,p_2}$, $t'$ in
Figure~\ref{fig:bimoext}(right) can be presented as
$\bimotrans{t'}{p_\lara{1,1}}{p_\lara{1,2}}{p_\lara{1,1},p_\lara{1,3},p_\lara{2,1}}$
or as
$\bimotrans{t'}{p_\lara{1,2}}{p_\lara{1,1}}{p_\lara{1,2},p_\lara{1,3},p_\lara{2,1}}$,
and $t_\lara{1,1}$ only as
$\biotrans{t_\lara{1,1}}{p_\lara{1,1}}{\emptyset}{p_\lara{1,2}}$
(which is also written as
$\biotrans{t_\lara{1,1}}{p_\lara{1,1}}{}{p_\lara{1,2}}$).

\medskip
We observe that performing a BIMO-transition
  $\bimotrans{t}{p_s}{
	    p_{o_1},p_{o_2},\dots,p_{o_\ell}}{p_{d_1},p_{d_2},\dots,p_{d_k}}$
(i.e., executing a~step $M\step{t}M'$)
can be viewed so that a ``source'' token from $p_s$ has ``branched'' into new
tokens in the \emph{destination places} constituting the set
$\{p_{d_1},p_{d_2},\dots,p_{d_k}\}$; the new tokens are created
in the destination places
with
the multiplicities determined by the multiset $\Lbag
p_{d_1},p_{d_2},\dots,p_{d_k}\Rbag$. We note that it is not excluded
that $k=0$ (in which case the source token disappears since there are no destination
places) or that
$p_{d_i}=p_s$
for some $i$ (which is the case of the transition
on the left of Figure~\ref{fig:bimoext}). Performing $t$
is conditioned not only on
the presence of a~token in the source place $p_s$ but also
on the presence of enough tokens in the \emph{observation
places} constituting the set
$\{p_{o_1},p_{o_2},\dots,p_{o_\ell}\}$; this ``enough
tokens'' is determined by the multiset $\Lbag
p_{o_1},p_{o_2},\dots,p_{o_\ell}\Rbag$.\vspace*{-1mm}

\paragraph{BIMO transitions of the type BIO, IMO, IO.}
Given a BIMO transition
\begin{center}
    $\bimotrans{t}{p_s}{
	    p_{o_1},p_{o_2},\dots,p_{o_\ell}}{p_{d_1},p_{d_2},\dots,p_{d_k}}$
\end{center}
we say that $t$ is:
\begin{itemize}
	\item		
a \emph{BIO transition}
		if the multiset $\Lbag p_{o_1},p_{o_2},\dots,p_{o_\ell}\Rbag$
		is a singleton set $\{p_o\}$ or the empty set
		(performing $t$ is
		conditioned on at most
		one observation-token, which is the case
		for all transitions
		in Figure~\ref{fig:bioce});
 in this case we also write
		\begin{center}
			$\biotrans{t}{p_s}{p_{o}}{p_{d_1},p_{d_2},\dots,p_{d_k}}$
			\ or\ \ $\biotrans{t}{p_s}{}{p_{d_1},p_{d_2},\dots,p_{d_k}}$;
		\end{center}			
	\item		
		an \emph{IMO transition}
		if
		the multiset $\Lbag
		p_{d_1},p_{d_2},\dots,p_{d_k}\Rbag$ is a singleton set
		$\{p_d\}$
(the source token does not
		branch, nor disappears, it just moves from $p_s$ to $p_d$);
in this case we also write
		\begin{center}
		$\imotrans{t}{p_s}{p_{o_1},p_{o_2},\dots,p_{o_\ell}}{p_{d}}$;
		\end{center}
	\item		
		an \emph{IO transition} if
it is a BIO and IMO transition;
in this case we also write
		\begin{center}
		$\iotrans{t}{p_s}{p_{o}}{p_{d}}$ or $\iotrans{t}{p_s}{}{p_{d}}$.
		\end{center} \vspace*{-1mm}
\end{itemize}

\paragraph{Nets of the type BIMO, ord-BIMO, BIO, ord-BIO, IMO,
ord-IMO, IO, ord-IO.}
For a type $\textrm{X}\in\{\textrm{BIMO, BIO, IMO, IO}\}$ we say that a net $N$ is an
\emph{X net} if all its transitions are X transitions; moreover,
$N$ is an \emph{ord-X net} if $N$ is an X net
that is
ordinary (the edge weights are just $1$).

We note that for any BIMO transition  $\bimotrans{t}{p_s}{
	    p_{o_1},p_{o_2},\dots,p_{o_\ell}}{p_{d_1},p_{d_2},\dots,p_{d_k}}$
 in an ordinary net it holds
that $\Lbag p_{o_1},p_{o_2},\dots,p_{o_\ell}\Rbag$
and $\Lbag p_{d_1},p_{d_2},\dots,p_{d_k}\Rbag$ are two disjoint
\emph{sets}, and $p_s\not\in\{p_{o_1},p_{o_2},\dots,p_{o_\ell}\}$
(while we still can have  $p_s\in\{p_{d_1},p_{d_2},\dots,p_{d_k}\}$).

\medskip

We observe that IMO nets are conservative, hence
Proposition~\ref{p21} entails:

\begin{proposition}\label{p21corol}
	The liveness problem (LP) for IMO nets
	is in \PSPACE.
\end{proposition}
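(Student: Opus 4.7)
The plan is to reduce the claim to Proposition~\ref{p21} by showing that every IMO net is conservative, and then invoking the known \PSPACE\ upper bound for LP on conservative nets. The conservativity check is essentially a token-counting argument on the fixed presentation of an IMO transition.

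Concretely, I would first unfold the definition of an IMO transition. For an arbitrary IMO transition $\imotrans{t}{p_s}{p_{o_1},\dots,p_{o_\ell}}{p_{d}}$ the excerpt fixes
\[
\premset{t}=\Lbag p_s\Rbag + \Lbag p_{o_1},\dots,p_{o_\ell}\Rbag,\qquad
\postmset{t}=\Lbag p_{o_1},\dots,p_{o_\ell}\Rbag + \Lbag p_d\Rbag,
\]
so both multisets have cardinality $\ell+1$. Hence $\sizeof{\premset{t}}=\sizeof{\postmset{t}}$ for every transition $t$ of an IMO net $N$, which matches the definition of conservativity. I would briefly note that this works uniformly whether or not the source token is counted among the observations, and also covers the dummy-place convention (the dummy place appears once on each side and contributes equally).

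With conservativity established, the proposition is immediate: Proposition~\ref{p21} states that LP is \PSPACE-complete (hence in \PSPACE) for all conservative nets, and Remark~\ref{rem:pspace} recalls the standard NPSPACE$=$PSPACE coverability-style argument that justifies the upper bound even when edge weights are given in binary; this justification transfers to IMO nets without change. I do not foresee a genuine obstacle here—the only minor point to mention cleanly is that the ``ordinary'' assumption is not needed for this bound, since Proposition~\ref{p21} holds for conservative nets with arbitrary edge weights, so the result covers general (non-ordinary) IMO nets as well.
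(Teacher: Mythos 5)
Your proposal is correct and follows exactly the paper's route: the paper likewise observes that IMO nets are conservative (your explicit count $\sizeof{\premset{t}}=\ell+1=\sizeof{\postmset{t}}$ is just the spelled-out version of that observation) and then invokes Proposition~\ref{p21}, noting as you do that the bound holds with edge-weights in binary.
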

We recall that this also holds when the edge-weights
(i.e., the
multiplicities of observation
 places) are given in binary.
 Moreover, the PSPACE-hardness proof for
 conservative nets has been enhanced in~\cite{DBLP:conf/apn/EsparzaRW19}
to show that LP is PSPACE-hard also for IO nets.

 \begin{table}[!b]
 \vspace*{-3mm}
  \caption{Given a structurally live net with maximum edge-weight $w$,
		there is a live marking in which each component is 	bounded by
				the 1st upper bound; moreover, the (non)liveness status of
				any marking does not change if each component greater than
				the 2nd upper bound is replaced with this bound.
			} 	\label{tab:res}\vspace*{-3mm}
				\begin{center}
	\scalebox{0.94}{
		\begin{tabular}{|c|c|c|}
				\hline
			 \textbf{Class of nets} & \textbf{1st upper bound} & \textbf{2nd upper bound} \\
		   \hline
				\oIO and \oIMO & $1$ & $2\cdot\sizeof{P}$ \\
			 IO & $2$ & $4\cdot\sizeof{P}$ \\
			 IMO & $w$ & $2\cdot w\cdot\sizeof{P}$ \\
			 \oBIO and \oBIMO & $\sizeof{P}$ & $2\cdot\sizeof{P}$ \\
			 BIO and BIMO & $w\cdot\sizeof{P}$ & $2\cdot w\cdot \sizeof{P}$ \\
			 \hline
			\end{tabular} }
		\end{center}
		\end{table}

\subsection{Results}\label{subsec:results}
Below we summarize the results of this paper.

\begin{enumerate}[a)]
\item By Theorem~\ref{thm1} and its proof we show that  a modification of the hardness
proof for the liveness problem (LP) for IO nets in~\cite{DBLP:conf/apn/EsparzaRW19} can be enhanced
		to demonstrate the  PSPACE-hardness of the structural liveness problem (SLP) for ord-IO nets.
		(We remark that SLP 		is EXPSPACE-hard for conservative nets~\cite{JLV22}.)

\item Table~\ref{tab:res} summarizes our results concerning the sizes of live
markings in the mentioned net classes, as stated by the following  theorem.
\end{enumerate}

\begin{theorem}\label{thm:upperboundstable}
			Given a structurally live net $N=(P,T,F)$
			of a type in the first
	column of Table~\ref{tab:res}, with the maximum edge-weight
	$w$ (where $w=1$ if $N$ is ordinary), then
	\begin{enumerate}[1.]
		\item
			there is a live marking $M$ of $N$
			in which $M(p)$ is no bigger than
	    the respective 1st upper bound in Table~\ref{tab:res}, for
		each $p\in P$;
    \item
whether or not a marking $M$ of $N$ is live is determined
		by the restriction $\rst{M}{P'}$ where $P'$
		consists of the places $p$ for which
		the values $M(p)$ are
		smaller
than
	    the respective 2nd upper bound in Table~\ref{tab:res}.
\end{enumerate}
	\end{theorem}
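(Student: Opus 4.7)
The plan is to prove the ordinary cases first (\oIO/\oIMO, then \oBIO/\oBIMO), and then to derive the weighted cases (IO, IMO, BIO, BIMO) by the simple ``weight-unfolding'' construction promised in Section~\ref{sec:extensions}, which replaces each weighted edge by parallel copies and thereby scales every bound by $w$. Under this reduction it suffices to prove: for a structurally live ordinary \textrm{X}-net with $\textrm{X}\in\{\textrm{BIMO,BIO,IMO,IO}\}$, some live marking exists with the stated 1st bound, and capping any marking at the 2nd bound preserves the (non)liveness status.

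The main technical engine will be a crucial lemma (the one developed in Section~\ref{sec:cruclemma}) capturing the following intuition about BIMO executions. Because every BIMO transition $\bimotrans{t}{p_s}{p_{o_1},\dots,p_{o_\ell}}{p_{d_1},\dots,p_{d_k}}$ removes a token only from its source place $p_s$, an observation place behaves \emph{monotonically}: enlarging $M(p)$ for a place $p$ that appears only as observer (in some witness) never turns a live marking non-live. Dually, a dead transition must be witnessed by some execution $M\step{\sigma}M'$ after which $t$ is disabled forever; by standard siphon reasoning this forces a specific set $S\subseteq P$ of places to become and remain unmarked. I plan to show that the number of ``essential'' source removals occurring along such a witness can be bounded in terms of $|P|$ (respectively $1$ for the IMO/IO case, where a single token per place suffices), because once a siphon is drained it stays drained and each place can be drained only once.

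From this lemma, part 1 of Theorem~\ref{thm:upperboundstable} is obtained as follows. Fix any live marking $M_0$ guaranteed by structural liveness. Whenever $M_0(p)$ exceeds the 1st bound, I decrement $M_0(p)$ and use the lemma to argue that liveness is preserved: the extra token was either acting as a free observation surplus (harmless) or as a spare source token that, by the counting bound of the lemma, is not needed to reach any marking covering a given transition's pre-mset. Iterating this yields a live marking respecting the 1st bound. For the \oIO/\oIMO case the bound of $1$ comes from the fact that a single token per place, together with monotonicity in observation places, suffices to enable any transition; for \oBIO/\oBIMO the bound $|P|$ is the length bound on a shortest branching-covering path, as made precise in the lemma. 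Part 2 is proved in the same spirit: if $M(p)\ge 2w|P|$, then both before and after any witness of non-liveness (of length linearly bounded by $|P|$) at least $w|P|$ tokens remain in $p$, so $p$ is ``saturated'' as an observer and inexhaustible as a source, hence replacing $M(p)$ by exactly the 2nd bound changes nothing about reachable enabledness patterns.

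The main obstacle will be making precise the counting argument behind the crucial lemma for the branching case. Unlike IMO nets, in BIMO/BIO nets tokens can be created, so one cannot simply invoke conservativity to bound the length of a witness execution. Instead I expect to argue combinatorially on siphons: a shortest witness that drives the system into a marking dead for some transition $t$ can be ``pruned'' so that each place of $P$ is involved in the draining only a controlled number of times, giving an effective bound of the form $c\cdot|P|$ on the relevant source consumptions. Once this pruning is established the two bounds in Table~\ref{tab:res} follow by tracking the small constants (factor $1$ vs.\ factor $2$ between the 1st and 2nd bounds) and, via the Section~\ref{sec:extensions} construction, by inserting the factor $w$ for weighted variants.
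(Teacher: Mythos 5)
Your high-level plan (reduce the weighted cases to ordinary nets by edge-unfolding, then prove the ordinary bounds via a crucial structural lemma about non-liveness witnesses) matches the paper's architecture, but the counting argument you propose for the crucial step would fail. The central false premise is that a witness of non-liveness has ``length linearly bounded by $|P|$'' and that ``each place can be drained only once.'' Neither holds: the execution $M_0\step{\sigma}M\wit$ leading to a marking where some transition is dead can be exponentially long (coverability in these nets is \PSPACE-hard, so no polynomial-length witnesses exist unless \PSPACE$=$\NP), and a place in a BIMO net can be emptied and refilled arbitrarily many times along the way. The paper's Lemma~\ref{lem:crucialingred} bounds the \emph{token counts on the crucial places in the final witness marking} (at most $|P\cruc|-1$ tokens, each place holding $0$ or $1$), not the length of the execution, and the set $P\cruc$ need not be a siphon --- an unmarked siphon is only a special case. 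The quantity that actually makes the $2w|P|$ threshold work is obtained by tracking tokens as individuals across the (arbitrarily long) witness execution: by the tree structure of the token-successor relation, at most $|P\cruc|-1<|P|$ of the \emph{initial} tokens have any descendant on a crucial place at the end, so among $\geq |P|$ initial tokens on a place $p_0$ one is ``irrelevant'' and its whole trajectory can be duplicated (Proposition~\ref{prop:nonlivepmar}) or, with considerably more care about re-timing the red transition occurrences, deleted (Proposition~\ref{prop:liveupper}). Your saturation argument for Part~2 has no access to this because it reasons about execution length rather than about descendants of initial tokens.

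Two further points. First, your appeal to monotonicity of observation places (``enlarging $M(p)$ for a place that appears only as observer never turns a live marking non-live'') is unsafe: the paper exhibits an \oIO net (Figure~\ref{fig:ex2}) where adding a single token to a place destroys liveness, so any such monotonicity must be carefully localized, which is exactly what the $\geq|P|$ and $\geq 2|P|$ thresholds achieve. Second, your sketch for Part~1 does not explain why the 1st bound is $1$ for \oIMO/\oIO but $|P|$ for \oBIO/\oBIMO; the paper shows (Section~\ref{sec:counterBIO}) a structurally live \oBIO net with no live $\{0,1\}$-marking, so the gap between the two rows is real and the \oIMO bound of $1$ needs its own argument (moving superfluous tokens out of poor components using the structure of optimal markings, Lemma~\ref{lem:spread} and Lemma~\ref{lem:IMOsmall}), not just ``a single token per place suffices.''
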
	
	
For instance, if an ord-IMO net
$N=(P,T,F)$
is
structurally live, then there is $M_0:P\rightarrow\{0,1\}$ such that
$(N,M_0)$ is live; moreover, $(N,M)$ is live iff  $(N,M')$ is live
where $M'(p)=M(p)$ if $M(p)< 2\cdot\sizeof{P}$ and
 $M'(p)=2\cdot\sizeof{P}$ otherwise (for all $p\in P$).

 \begin{enumerate}[c)]
\item By Theorem~\ref{thm:BIMOinPSPACE} we show that the structural
	liveness problem (SLP) for BIMO nets is in
	PSPACE (and thus PSPACE-complete).
\end{enumerate}

Regarding the result c), we note that
our bounds from b) for IMO nets
(already the 1st upper bound, in fact)
immediately show that SLP for IMO nets is
in PSPACE (and thus PSPACE-complete), by recalling
Proposition~\ref{p21corol}.
For general BIMO nets we show the PSPACE-membership of SLP by
using further ideas (one of them being captured by
Lemma~\ref{lem:crucialingred} in particular).

\begin{remark}
The papers~\cite{DBLP:conf/apn/EsparzaRW19,DBLP:conf/concur/RaskinWE20},
and in particular the PhD thesis by Chana Weil-Kennedy in preparation,
contain a detailed study of subclasses of BIMO nets, concentrating
mainly on
the general analysis questions like reachability and coverability.
The \PSPACE-membership of structural liveness could be also derived from the published
results for IO nets, while for BIMO nets this will follow from
the mentioned PhD thesis in preparation.
Nevertheless, our direct handling of structural
liveness for BIMO nets yields stronger bounds captured by Table~\ref{tab:res}
and a more specific insight into this problem (that remains so far a bit
elusive for more general nets).
\end{remark}

\section{\PSPACE-hardness of structural liveness for \oIO nets}\label{sec:PSPACEhardness}
In this section, we show that the
structural liveness problem (SLP) for \oIO nets is
\PSPACE-hard; this is achieved by an enhancement of (a
modification of) the construction showing \PSPACE-hardness
of the liveness problem (LP)
from~\cite{DBLP:conf/apn/EsparzaRW19}.
(This lower bound is later matched by a \PSPACE upper
bound that holds for the most general of the considered classes, i.e., for
BIMO nets, in which the edge-weights are given in binary.)

We first introduce the notion of carriers of markings,
and
an observation captured by
Proposition~\ref{prop:greatermarking}:
Informally speaking, if we add a token onto a place $p$ in a marking $M$
where $M(p)\geq 1$ (in an \oIO net $N$), then this additional token
could be imagined as pasted down to an original token with which it can
then be moving together (by repeating the transitions moving the original
token), whereas the resulting marking carrier remains the same.

\begin{remark}
Later we will look at this observation in more detail for ord-BIMO nets,
but now it suffices to deal with ord-IO nets.
\end{remark}

\paragraph{Carriers of markings.}
Given a net $N=(P,T,F)$ and a marking $M$ of $N$, by
$\carrier{M}$ we denote the \emph{carrier of} $M$, i.e.\ the
set $\{p\in P\mid M(p)\geq 1\}$.

\begin{proposition}[Added tokens can be pasted down to original
	tokens in ord-IO nets]\label{prop:greatermarking}
We assume an \oIO net $N = (P,T,F)$ and its execution $M\step{\sigma}M'$.
Then for any $\bar{M}\geq M$ such that $\carrier{\bar{M}}=\carrier{M}$
	there are $\bar{\sigma}$ and $\bar{M}'\geq M'$ for which
	$\bar{M}\step{\bar{\sigma}}\bar{M}'$ and
$\carrier{\bar{M}'}=\carrier{M'}$.
\end{proposition}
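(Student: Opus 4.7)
The plan is to proceed by induction on $|\sigma|$, building $\bar\sigma$ in matching chunks: for each transition of $\sigma$, the sequence $\bar\sigma$ fires the same transition one or more times, while maintaining the invariant that after the $i$-th chunk the reached marking $\bar M_i$ satisfies $\bar M_i\geq M_i$ and $\carrier{\bar M_i}=\carrier{M_i}$. The base case $|\sigma|=0$ is trivial: take $\bar\sigma:=\varepsilon$ and $\bar M':=\bar M$.

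For the inductive step I split $\sigma = t\,\sigma_1$ with $M\step{t}M_1\step{\sigma_1}M'$, write $t$ as $\iotrans{t}{p_s}{p_{o}}{p_{d}}$ (the case without $p_o$ is analogous), and match the first step. Since the net is ordinary and IO, the place $p_o$ is distinct from both $p_s$ and $p_d$, and firing $t$ changes only the counts at $p_s$ and $p_d$, each by $\pm 1$. Because $\carrier{\bar M}=\carrier{M}$ contains $p_s$ and $p_o$, the transition $t$ is enabled at $\bar M$. The key choice is how many times to fire it: let $k := \bar M(p_s)$ if $M(p_s)=1$, and $k := 1$ otherwise. The sequence $t^k$ is enabled at $\bar M$ because $p_o$ is unaffected by firings of $t$ and $\bar M(p_s)\geq k$ by construction.

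Let $\bar M_1$ denote the marking reached. At places other than $p_s, p_d$ nothing changes, so the invariant is inherited. At $p_d$ one checks $\bar M_1(p_d)-M_1(p_d) = (\bar M(p_d)-M(p_d))+(k-1)\geq 0$, and $p_d$ lies in both carriers (or $p_d=p_s$, in which case $\bar M_1$ and $M_1$ agree with $\bar M$ and $M$ there, trivially). At $p_s$, the two clauses defining $k$ are calibrated precisely so that $\bar M_1(p_s)=0$ iff $M_1(p_s)=0$, and otherwise $\bar M_1(p_s)\geq M_1(p_s)\geq 1$. Hence $\bar M_1\geq M_1$ and $\carrier{\bar M_1}=\carrier{M_1}$, so the induction hypothesis applied to $\sigma_1$ with $M_1,\bar M_1$ yields $\bar\sigma_1$ and $\bar M'\geq M'$ with $\carrier{\bar M'}=\carrier{M'}$; then $\bar\sigma := t^k\bar\sigma_1$ completes the construction.

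The argument is essentially bookkeeping; the only subtlety — the content of the "paste down" picture — is the calibration of $k$ in the case $M(p_s)=1$, where all surplus tokens on $p_s$ must be shunted along via additional firings of $t$ to avoid being stranded outside $\carrier{M_1}$. I do not foresee any other real obstacle: the ordinariness of the net and the IO shape of each transition make every bookkeeping step local to the pair $\{p_s, p_d\}$.
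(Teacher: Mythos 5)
Your proof is correct and follows essentially the same route as the paper's: induction on $|\sigma|$, replacing each firing of $t$ by $t^k$ with the exponent calibrated so that the surplus tokens on the source place are shunted to the destination exactly when the source would otherwise drop out of the carrier. The only (immaterial) difference is that the paper uses the uniform exponent $d=1+(\bar M(p_s)-M(p_s))$, which always moves the whole surplus along, whereas your case split ($k=\bar M(p_s)$ when $M(p_s)=1$, else $k=1$) moves it only when necessary; both choices preserve $\bar M_1\geq M_1$ and $\carrier{\bar M_1}=\carrier{M_1}$.
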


\begin{proof}
	We prove the claim by induction on the value $\length{\sigma}$
	(the length of $\sigma$).
We thus assume that
	$M\step{\sigma}M'$, $\bar{M}\geq M$,
	$\carrier{\bar{M}}=\carrier{M}$, and that the claim holds for all
	$\sigma'$ shorter than $\sigma$ (by the induction hypothesis).
In the case $\length{\sigma}=0$ the claim is trivial, so we now assume
	that $\sigma=t\,\sigma'$, where
	$\iotrans{t}{p_s}{p_{o}}{p_{d}}$ (or $\iotrans{t}{p_s}{}{p_{d}}$)
	and $M\step{t}M''\step{\sigma'}M'$.
We thus have
	$\bar{M}(p_s)\geq M(p_s)\geq 1$, and
	$\bar{M}\step{t^d}\bar{M}''$  where
	$d=1+(\bar{M}(p_s){-}M(p_s))$. It is clear that
 $\bar{M}''\geq M''$ and
	$\carrier{\bar{M}''}=\carrier{M''}$; hence we can apply the
	induction hypothesis to $M''\step{\sigma'}M'$ and $\bar{M}''$,
	which entails that
	$\bar{M}\step{t^d}\bar{M}''\step{\bar{\sigma}'}\bar{M}'$ where
	$\bar{M}'\geq M'$ and $\carrier{\bar{M}'}=\carrier{M'}$.
\end{proof}

Now we prove the announced \PSPACE-hardness, by a reduction from the standard
\PSPACE-complete problem asking if a deterministic linear bounded
automaton with a two-letter tape-alphabet accepts a given word.

\begin{theorem}\label{thm1} The structural liveness problem (SLP) for \oIO nets is
    \PSPACE-hard.
\end{theorem}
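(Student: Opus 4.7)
The plan is to reduce from the \PSPACE-complete acceptance problem for a deterministic LBA $A$ over a binary tape alphabet on input $w = w_1\cdots w_n$, constructing in polynomial time an \oIO net $N(A,w)$ that is structurally live iff $A$ accepts $w$.

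The first step is the simulation, essentially as in~\cite{DBLP:conf/apn/EsparzaRW19}: each tape cell $i$ is modelled by two places $C_{i,0},C_{i,1}$, the control by places $H_{q,i}$ indexed by state $q$ and head position $i$, and each LBA move $\delta(q,a)=(q',a',d)$ is implemented by a short chain of IO transitions that first observes $C_{i,a}$ while locking the head into an auxiliary place, then flips the cell symbol token from $C_{i,a}$ to $C_{i,a'}$ by observing the lock, and finally unlocks into $H_{q',i+d}$. The second step is a reset gadget: when the (unique) accepting configuration is marked, a chain of IO transitions restores the initial encoding $M_{\mathrm{init}}$ of $(q_0,1,w)$, so that if $A$ accepts then $M_{\mathrm{init}}$ is a live marking and $N(A,w)$ is structurally live.

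The third step, which is where SLP differs from LP, is to rule out \emph{any} other marking being live. Here I would invoke Proposition~\ref{prop:greatermarking}: if $\bar M \geq M$ with $\carrier{\bar M}=\carrier{M}$, then every carrier reachable from $M$ is reachable from $\bar M$, so it suffices to analyse liveness up to carriers of markings. I would then arrange small gadgets so that any carrier other than $\carrier{M_{\mathrm{init}}}$ admits a permanently dead transition: the reset chain should observe the accepting-configuration places, so \emph{deficient} carriers (missing a head token or a cell-symbol token for some cell) block the reset forever; in a \emph{surplus} carrier, where two conflicting cell-symbol places or two head places are simultaneously marked, the simulation can nondeterministically derail into a trajectory that never reaches the unique accepting configuration, after which the reset is permanently disabled. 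Combining these, the only candidate live markings have carrier exactly $\carrier{M_{\mathrm{init}}}$, and such markings are live iff $A$ accepts $w$.

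The main obstacle will be making the surplus analysis precise. The \oIO syntax is very restrictive (one source, one observer, one destination, all weights $1$), and one cannot directly install a transition that tests for two conflicting tokens. My approach would be to exploit the determinism of $A$ together with Proposition~\ref{prop:greatermarking}: any surplus can be pasted down to a minimal carrier-equivalent marking, and since the LBA has a unique accepting configuration, any erroneous derailed computation must miss it, so the reset becomes dead. Verifying this case by case, in a way that does not also spoil the simulation of legitimate computations, is the delicate part of the construction; the polynomial-time bound of the reduction itself is immediate.
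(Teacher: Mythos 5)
Your first two steps match the paper's construction (the cell/head encoding with an auxiliary ``lock'' place per instruction, and a hard-wired reset chain guarding the return of the control token), but note one repairable inaccuracy: the reset chain cannot ``restore'' the initial encoding, since an \oIO transition can neither merge tokens nor test absence. In the paper the chain only \emph{observes}, cell by cell, that the initial configuration has already been installed by separate free-change transitions (enabled only while $p\free$ is marked), and soundness of this check additionally requires that no instruction re-enters $q_0$, so that no simulation step can interleave with a completed pass through the chain; the abort transitions $t_{\lara{\trev,i}}$ are needed so that the chain itself never creates deadness.

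The genuine gap is in your third step. First, the deficient/surplus dichotomy is not exhaustive and the conclusion ``the only candidate live markings have carrier exactly $\carrier{M_{\mathrm{init}}}$'' is false for any construction of this shape: every marking reachable from a live marking is live, and mid-simulation markings have carriers that are neither $\carrier{M_{\mathrm{init}}}$ nor deficient nor surplus. Second, and more seriously, the surplus argument does not establish non-liveness. Exhibiting one derailed trajectory that misses the accepting configuration makes no transition dead, because deadness quantifies over \emph{all} reachable markings; with conflicting tokens present the net behaves non-deterministically, and other runs (or other tokens) may still mark $p_\lara{q\acc,1}$ and re-enable the whole free-change machinery. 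The paper avoids classifying carriers altogether: assuming some $M$ is live, it \emph{uses that liveness} to force reachability of a pseudoinitial marking $\bar{M}_0\geq M_0$ with $\carrier{\bar{M}_0}=\carrier{M_0}$ (liveness of $t_{\textsc{a}}$ and of the reset chain eventually marks $p\free$, after which the free-change transitions sweep all tokens onto $\carrier{M_0}$), and then applies Proposition~\ref{prop:greatermarking} in the \emph{lifting} direction to the deterministic rejecting computation $M_0\step{\sigma}M'$, whose final marking leaves a nonempty siphon unmarked and hence has all transitions dead; the lifted marking $\bar{M}'$ has the same carrier, so the same siphon is unmarked there, contradicting liveness of $M$. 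You cite the right proposition, but your proposal lacks this pivot from ``one trajectory fails to accept'' to ``a reachable marking has an unmarked siphon,'' which is the heart of the lower bound.
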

\begin{proof}
We show the above announced reduction in a stepwise manner, also using informal
	descriptions that are formalized afterwards.
We thus assume a given deterministic
linear-bounded Turing machine
	\begin{equation}\label{eq:LBAw}
	\textnormal{
	$\lba=(Q,\Sigma,\Gamma,\delta,q_0,\{q\acc,q\rej\})$  and a word
		$w=x_1x_2\cdots x_n$,
	}		
\end{equation}
	where
	$\Sigma=\Gamma=\{a,b\}$, $n\geq 1$, and
	$x_i\in\Sigma=\{a,b\}$ for
$\ininter{i}{1}{n}$. W.l.o.g.\ we assume that the computation of $\lba$
	on $w$ (starting in the configuration $q_0w$)
	finishes in the accepting state $q\acc$ or the
	rejecting state $q\rej$
with the head scanning the cell $1$. We also view the transition
	function $\delta:
	(Q\smallsetminus\{q\acc,q\rej\})\times\Gamma\rightarrow
	Q\times\Gamma\times\{{-}1,{+}1\}$ as the set of
	\emph{instructions} $(q,x,q',x',m)$, writing rather
	$(q,x,q',x',m)\in\delta$ instead of $\delta(q,x)=(q',x',m)$,
	and w.l.o.g.\ we assume that $q'\neq q_0$ for each instruction
	$(q,x,q',x',m)\in\delta$; hence each computation of $\lba$,
	starting with the initial state $q_0$, never returns to $q_0$.

\begin{figure}[!h]
	\centering
        \begin{tikzpicture}[node distance=1.3cm,>=stealth',bend angle=20,auto]
    \tikzstyle{place}=[circle,thick,draw=black!75,fill=blue!20,minimum
    size=4mm]
    \tikzstyle{transition}=[rectangle,thick,draw=black!75,
    fill=black!20,minimum size=3mm]
    \tikzstyle{dots}=[circle,thick,draw=none,minimum
    size=6mm]

    \tikzstyle{every label}=[black]

    \begin{scope}
        \node [place] (cellai1) [label=left:$p_\lara{i,x}$] {};

        \node [place] (cellbi1) [label=right:$p_\lara{i,x'}$,right=2cm of cellai1] {};
    \end{scope}
    \begin{scope}[xshift = 0,yshift = 3cm]
        \node [place] (headqi1) [label=left:$p_\lara{q,i}$] {};
        
        \node [place] (headqj1) [label=right:$p_\lara{q',i+m}$,right=2cm of headqi1] {};
    \end{scope}
    \begin{scope}[xshift = 1.25cm,yshift = 1.5cm]
        \node [transition] (tmove1) [label=left:$t_\lara{\ins,i}$] {}
        edge [pre] (headqi1)
        edge [post] (headqj1)
        edge [pre] (cellai1)
        edge [post] (cellbi1);
    \end{scope}

    \begin{scope}[xshift = 7cm]
        \node [place] (cellai2) [label=left:$p_\lara{i,x}$] {};

        \node [place] (cellbi2) [label=right:$p_\lara{i,x'}$,right=2cm of cellai2] {};
    \end{scope}
    \begin{scope}[xshift = 7cm,yshift = 3cm]
        \node [place] (headqi2) [label=left:$p_\lara{q,i}$] {};

        \node [place] (headqj2) [label=right:$p_\lara{q',i+m}$,right=2cm of headqi2] {};
    \end{scope}
    \begin{scope}[xshift = 8.25cm,yshift = 2cm]
        \node [place] (move) [label=above:$p_\lara{\ins,i}$] {};
        
        \node [transition] (tmove1) [label=left:$t^{\tbegin}_\lara{\ins,i}$,left of=move] {}
        edge [pre] (headqi2)
        edge [post] (move)
        edge [pre, bend left] (cellai2)
        edge [post, bend right] (cellai2);
        
        \node [transition] (tmove2) [label=right:$t^{\tend}_\lara{\ins,i}$,right of=move] {}
        edge [pre] (move)
        edge [post] (headqj2)
        edge [pre, bend left] (cellbi2)
        edge [post, bend right] (cellbi2);
        
        \node [transition] (tmovehead) [label=below:$t^{\tmove}_\lara{\ins,i}$,below of=move] {} edge [pre] (cellai2) edge [post]
        (cellbi2) edge [pre, bend left] (move) edge
        [post,bend right] (move);
    \end{scope}
\end{tikzpicture}
	\caption{$\ins=(q,x,q',x',m)$ mimicked in $N_\lara{\lba,w}$
	(left)\  and in $N'_\lara{\lba,w}$ (right).}
        \label{fig:hardness1}
\end{figure}

\medskip
	It is straightforward
(and standard) to simulate the computation of $\lba$ on
	$w=x_1x_2\cdots x_n$
	with a 1-safe conservative Petri net $(N_\lara{\lba,w},M_0)$, as we now sketch (see
	Figure~\ref{fig:hardness1}(left)); by ``1-safe'' we mean that
	$M(p)\in\{0,1\}$ for all $M\in\rset{M_0}$ and all places
	$p$ of $N_\lara{\lba,w}$. Given~(\ref{eq:LBAw}), we construct
$N_\lara{\lba,w}$ as follows.
	\begin{itemize}	
\itemsep=0.95pt
		\item For each control state $q\in Q$ and each head-position $\ininter{i}{1}{n}$
			we create a place $p_\lara{q,i}$.
		\item For each  tape-cell $i\in[1,n]$ and each
	tape-symbol $x\in\Gamma=\{a,b\}$ we create a place
			$p_\lara{i,x}$.
		\item	 Each
instruction $\ins=(q,x,q',x',m)\in\delta$ is implemented by net
transitions $t_\lara{\ins,i}$, $i\in[1,n]$, where
			$\premset{t_\lara{\ins,i}}=\{p_\lara{q,i},p_\lara{i,x}\}$ and
	$\postmset{t_\lara{\ins,i}}=\{p_\lara{q',i+m},p_\lara{i,x'}\}$,
excluding the cases where $i{+}m\not\in[1,n]$.
(The notation stresses that the multisets $\premset{t_\lara{\ins,i}}$ and
			$\postmset{t_\lara{\ins,i}}$ are sets.)
	\end{itemize}	
A configuration $(q,i,u)$ of $\lba$, where
$q\in Q$, $\ininter{i}{1}{n}$, and
	$u=y_1y_2\cdots y_n\in\Gamma^*=\{a,b\}^*$, is mimicked by
the marking of $N_\lara{\lba,w}$
with one token in $p_\lara{q,i}$ and one
token in $p_\lara{j,y_j}$ for each $\ininter{j}{1}{n}$. The initial
	configuration $q_0w$ of $\lba$ is mimicked by the respective
	\emph{initial marking} $M_0$ in  $N_\lara{\lba,w}$, from which
	there exists only one execution, simulating the computation of
	$\lba$ on $w$.

\medskip
Since $N_\lara{\lba,w}$ is not an \oIO net in general, we transform
$N_\lara{\lba,w}$ to an \oIO net $N'_\lara{\lba,w}$ as depicted in
Figure~\ref{fig:hardness1}:
for each $\ins=(q,x,q',x',m)\in\delta$ and $i\in[1,n]$,
\begin{itemize}
\itemsep=0.9pt
	\item we add a~place $p_\lara{\ins,i}$ and
\item replace the transition
$t_\lara{\ins,i}$ with three IO transitions, namely\vspace*{-1mm}
		\begin{center}
		$\iotrans{t^{\tbegin}_\lara{\ins,i}}{p_\lara{q,i}}{p_\lara{i,x}}{p_\lara{\ins,i}}$,
$\iotrans{t^{\tmove}_\lara{\ins,i}}{p_\lara{i,x}}{p_\lara{\ins,i}}{p_\lara{i,x'}}$,
$\iotrans{t^{\tend}_\lara{\ins,i}}{p_\lara{\ins,i}}{p_\lara{i,x'}}{p_\lara{q',i+m}}$; \vspace*{-1mm}
		\end{center}
			in fact, we omit $t^{\tmove}_\lara{\ins,i}$ if $x=x'$.
\end{itemize}
It is clear that $N'_\lara{\lba,w}$ starting from $M_0$ also simulates
the computation of $\lba$ on $w$.

\medskip
Since the computation of $\lba$ always finishes with the head scanning
the cell $1$,
we get that $\lba$ accepts $w$ iff the ``state-position token''
from $p_\lara{q_0,1}$ in $M_0$ eventually
moves to the place $p_\lara{q\acc,1}$.
Hence the \emph{reachability} (and \emph{coverability}) problem for \oIO
nets is \PSPACE-hard (in fact, \PSPACE-complete).

\begin{figure}[!h]
\vspace*{-3mm}
	\centering
        \begin{tikzpicture}[node distance=1.3cm,>=stealth',bend angle=15,auto]
    \tikzstyle{place}=[circle,thick,draw=black!75,fill=blue!20,minimum
    size=4mm]
    \tikzstyle{transition}=[rectangle,thick,draw=black!75,
    fill=black!20,minimum size=3mm]
    \tikzstyle{dots}=[circle,thick,draw=none,minimum
    size=6mm]

    \tikzstyle{every label}=[black]

    \begin{scope}[yshift = 2.5cm]
        \node [place] (freeold) [label=left:$p\free$] {};

        \node [place] (runold) [label=left:$p\run$,below of=freeold] {};
    \end{scope}
    
    \begin{scope}[xshift = 3cm, yshift = 2.5cm]
        \node [place] (headqa1old) [label=right:$p_\lara{q\acc,1}$] {};

        \node [transition] (taold) [label=below:$t_{\textsc{a}}$,left of=headqa1old] {}
        edge [pre, bend left] (headqa1old)
        edge [post, bend right] (headqa1old)
        edge [pre] (runold)
        edge [post] (freeold);

        \node [transition] (tbold) [label=below:$t_{\textsc{a}}'$,below of=taold] {}
        edge [pre, bend left] (headqa1old)
        edge [post, bend right] (headqa1old)
        edge [pre] (freeold)
        edge [post] (runold);
    \end{scope}

    \begin{scope}[xshift = 8.5cm,yshift = 2cm]
        \node [place] (free) [label=left:$p\free$] {};
    \end{scope}

    \begin{scope}[xshift = 7cm,yshift = 4cm]
        \node [place] (headqi1) [label=left:$p_\lara{q,i}$] {};

        \node [transition] (theadqi1) [label=left:$t_\lara{q,i,q',i'}$,below right of=headqi1] {}
        edge [pre, bend left] (free)
        edge [post, bend right] (free)
        edge [pre] (headqi1);

        \node [transition] (theadqi2) [label=right:$t_\lara{q',i',q,i}$,right of=theadqi1] {}
        edge [pre, bend left] (free)
        edge [post, bend right] (free)
        edge [post] (headqi1);
        
        \node [place] (headqi2) [label=right:$p_\lara{q',i'}$,above right of=theadqi2] {}
        edge [pre] (theadqi1)
        edge [post] (theadqi2);
    \end{scope}

    \begin{scope}[xshift = 7cm]
        \node [place] (cellia) [label=left:$p_\lara{i,a}$] {};

        \node [transition] (tcell1) [label=left:$t_\lara{i,a,i,b}$,above right of=cellia] {}
        edge [pre, bend left] (free)
        edge [post, bend right] (free)
        edge [pre] (cellia);

        \node [transition] (tcell2) [label=right:$t_\lara{i,b,i,a}$,right of=tcell1] {}
        edge [pre, bend left] (free)
        edge [post, bend right] (free)
        edge [post] (cellia);
        
        \node [place] (cellib) [label=right:$p_\lara{i,b}$,below right of=tcell2] {}
        edge [pre] (tcell1)
        edge [post] (tcell2);
    \end{scope}
\end{tikzpicture}\vspace*{-2mm}
        \caption{Marked $p\free$ allows to freely change
the mimicked configurations of $\lba$ in $N''_\lara{\lba,w}$.}
        \label{fig:hardness2}\vspace*{-2mm}
\end{figure}

\medskip
For the \emph{liveness} problem, we construct the \oIO net
$N''_\lara{\lba,w}$ arising from $N'_\lara{\lba,w}$ as depicted in
Figure~\ref{fig:hardness2}:

\begin{itemize}
\itemsep=0.9pt
	\item 	we add a place $p\run$, with a token in the initial
		marking $M_0$, and a place $p\free$, unmarked in
		$M_0$, and transitions
		$\iotrans{t_\textsc{a}}{p\run}{p_\lara{q\acc,1}}{p\free}$
		and
		$\iotrans{t'_\textsc{a}}{p\free}{p_\lara{q\acc,1}}{p\run}$;
	\item
		for each pair	$(\lara{q,i},\lara{q',i'})$ where
		$\lara{q,i}\neq\lara{q',i'}$ we add
		a transition \vspace*{-2mm}
		\begin{center}
		$\iotrans{t_{\lara{q,i,q',i'}}}{p_\lara{q,i}}{p\free}{p_\lara{q',i'}}$,\vspace*{-2mm}
        \end{center}
		and for each $(i,x,y)$ where $i\in[1,n]$,
		$x,y\in\Gamma=\{a,b\}$ and $x\neq y$ we add a transition \vspace*{-2mm}
		\begin{center}		
	$\iotrans{t_{\lara{i,x,i,y}}}{p_\lara{i,x}}{p\free}{p_\lara{i,y}}$. \vspace*{-2mm}
		\end{center}
		\end{itemize}
We can easily verify that $\lba$
accepts $w$ iff the initial marking $M_0$ of $N''_\lara{\lba,w}$
(mimicking $q_0w$ and having a token in $p\run$) is live:
A crucial
fact is that $t_{\textsc{a}}$ gets enabled iff  $\lba$
accepts $w$, after which any configuration (i.e., any marking
mimicking a configuration of $\lba$) can be repeatedly installed, which
makes each
transition live, including all
$t^{\tbegin}_\lara{\ins,i}$, $t^{\tmove}_\lara{\ins,i}$,
$t^{\tend}_\lara{\ins,i}$ (from Figure~\ref{fig:hardness1})
and also
$t_{\textsc{a}}$ due to $t'_{\textsc{a}}$
 (from Figure~\ref{fig:hardness2}).
 Hence the liveness problem is \PSPACE-hard (and
\PSPACE-complete) as well.

\begin{figure}[!b]
\vspace*{-2mm}
	\centering
        \begin{tikzpicture}[node distance=1.5cm,>=stealth',bend angle=15,auto]
    \tikzstyle{place}=[circle,thick,draw=black!75,fill=blue!20,minimum
    size=4mm]
    \tikzstyle{transition}=[rectangle,thick,draw=black!75,
    fill=black!20,minimum size=3mm]
    \tikzstyle{dots}=[circle,thick,draw=none,minimum
    size=6mm]

    \tikzstyle{every label}=[black]

    \begin{scope}[yshift = 3.5cm]
        \node [place] (free) [label=below:$p\free$] {};

        \node [transition] (tinit1) [label=above:$t_\lara{\init{1}}$,right of=free] {}
        edge [pre] (free);
        
        
        \node [place] (init1) [label=below:$p_\lara{\init{1}}$,right of=tinit1] {}
        edge [pre] (tinit1);

        \node [transition] (tinit2) [label=above:$t_\lara{\init{2}}$,right of=init1] {}
        edge [pre] (init1);
        
        
        \node [place] (init2) [label=below:$p_\lara{\init{2}}$,right of=tinit2] {}
        edge [pre] (tinit2); 
        
        \node [dots] (initd1) [right of=init2] {\dots};

        \node [transition] (tinitn) [label=above:$t_\lara{\init{n}}$,right of=initd1] {};
        
        \node [place] (initn) [label=below:$p_\lara{\init{n}}$,right of=tinitn] {}
        edge [pre] (tinitn);
        
        \node [transition] (trun) [label=above:$t\run$,right of=initn] {}
        edge [pre] (initn);
        
        \node [place] (run) [label=above:$p\run$,right of=trun] {}
        edge [pre] (trun);
    \end{scope}
    \begin{scope}[xshift = 9cm]
        \node [place] (headq01) [label=below:$p_\lara{q_0,1}$] {}
        edge [pre, bend left] (trun)
        edge [post, bend right] (trun);


        \node [place] (headqa1)
    [label=below:$p_\lara{q\acc,1}$,right=2cm of
    headq01] {};

        \node [transition] (ta) [label=right:$t_{\textsc{a}}$,above of=headqa1] {}
        edge [pre, bend left] (headqa1)
        edge [post, bend right] (headqa1)
        edge [pre, bend right] (run)
        edge [post, bend left] (free);
    \end{scope}
    \begin{scope}[xshift = 1.5cm]
        \node [place] (cell1a1) [label=below:$p_\lara{1,x_1}$] {}
        edge [pre, bend left] (tinit1)
        edge [post, bend right] (tinit1);

        \node [place] (cell2a2) [label=below:$p_\lara{2,x_2}$,right of=cell1a1] {}
        edge [pre, bend left] (tinit2)
        edge [post, bend right] (tinit2);

        \node [dots] (celld1) [right=0.5cm of cell2a2] {\dots};

        \node [place] (cellnan) [label=below:$p_\lara{n,x_n}$,right=0.5cm of celld1] {}
        edge [pre, bend left] (tinitn)
        edge [post, bend right] (tinitn);
    \end{scope}
    \begin{scope}[xshift = 1.5cm,yshift = 5.5cm]
        \node [transition] (trev1) [label=left:$t_\lara{\trev,1}$] {}
        edge [post,bend right] (free)
        edge [pre,bend left] (init1);
    \end{scope}
    \begin{scope}[xshift = 4.5cm,yshift = 5.5cm]
        \node [transition] (trev1) [label=right:$t_\lara{\trev,2}$] {}
        edge [post,bend right] (free)
        edge [pre,bend left] (init2);
    \end{scope}
    \begin{scope}[xshift = 9cm,yshift = 5.5cm]
        \node [transition] (trevn) [label=right:$t_\lara{\trev,n}$] {}
        edge [pre,bend left] (initn);

        \node [dots] (initd2) [left of=trevn] {\dots};
    \end{scope}
\end{tikzpicture}\vspace*{-2.5mm}
        \caption{$\bar{N}_\lara{\lba,w}$ arises from $N''_\lara{\lba,w}$  by replacing $t'_{\textsc{a}}$
        from Figure~\ref{fig:hardness2} by ``hard-wiring'' $M_0$.}\label{fig:hardness3}\vspace*{-3mm}
\end{figure}

\medskip
If $\lba$ accepts $w$, then the \oIO net $N''_\lara{\lba,w}$ is
structurally live since $M_0$ is a live marking. But to show the
\PSPACE-hardness of the \emph{structural liveness} problem (SLP) we
also need to guarantee that all markings are non-live in the
respective net if $\lba$ rejects $w$. This leads us to a final
step of our construction: the net $N''_\lara{\lba,w}$ is modified
by replacing the transition $t'_{\textsc{a}}$ (recall
Figure~\ref{fig:hardness2}) with a collection of places and
transitions that essentially ``hard-wire'' the initial marking $M_0$,
as depicted in Figure~\ref{fig:hardness3};
thus the final \oIO net $\bar{N}_\lara{\lba,w}$ arises.

\medskip
More concretely, the net $\bar{N}_\lara{\lba,w}$ arises from
$N''_\lara{\lba,w}$
by removing
$t'_{\textsc{a}}$ (with its adjacent edges) and
adding the following objects (we recall that $w=x_1x_2\cdots x_n$):
\begin{itemize}
\itemsep=0.9pt
	\item places $p_\lara{\init{i}}$ for all
    $\ininter{i}{1}{n}$, with zero tokens in $M_0$;
 \item  a transition  $\iotrans{t_\lara{\init{1}}}{p\free}{p_\lara{1,x_1}}{p_\lara{\init{1}}}$;
 \item for each $\ininter{i}{2}{n}$,	transitions\vspace*{-1.6mm}
		\begin{center}
		$\iotrans{t_\lara{\init{i}}}{p_\lara{\init{i-1}}}{p_\lara{i,x_i}}{p_\lara{\init{i}}}$
			and
			$\iotrans{t_{\lara{\trev,i}}}{p_\lara{\init{i}}}{}{p\free}$; \vspace*{-1.6mm}
		\end{center}
 \item a transition	$\iotrans{t\run}{p_\lara{\init{n}}}{p_\lara{q_0,1}}{p\run}$.
\end{itemize}
(We do not need observation-places for transitions
$t_{\lara{\trev,i}}$.)

\medskip
It is easy to check that we also have
\begin{center}
$\lba$ accepts $w$ iff $M_0$ is live in $\bar{N}_\lara{\lba,w}$.
\end{center}
Indeed: We recall that $\lba$ accepts $w$ iff $p_\lara{q\acc,1}$ gets
eventually marked when  $\bar{N}_\lara{\lba,w}$
starts with $M_0$. Then the token from $p\run$ can move to
$p\free$ by $t_{\textsc{a}}$, which allows to freely perform all
transitions that are changing the mimicked configurations of $\lba$ and
that are simulating the instructions $\ins=(q,x,q',x',m)$ from
$\delta$.
The token in  $p\free$
can go back to
$p\run$ only via performing the whole sequence
\begin{equation}\label{eq:initseq}
t_\lara{\init{1}}t_\lara{\init{2}}\cdots
t_\lara{\init{n}}t\run.
\end{equation}
Performing this sequence can be always ``aborted'' by performing one of
the transitions $t_\lara{\trev,i}$; on the other hand,
performing the whole sequence guarantees that the initial
$M_0$ (corresponding to the initial configuration $q_0w$) had been
installed before the sequence~(\ref{eq:initseq}) started; here we use the assumption that in all
$\ins=(q,x,q',x',m)\in\delta$
we have $q'\neq q_0$, which entails that during performing
the whole sequence~(\ref{eq:initseq}) no other transition could be performed
(since the single ``state-position token'' must be in $p_\lara{q_0,1}$
during the whole sequence).

\medskip
It remains to show that
\begin{center}
if $\lba$ rejects $w$, then each marking $M$ of $\bar{N}_\lara{\lba,w}$ is
	non-live.
\end{center}
For the sake of contradiction we assume
that  $\lba$ rejects $w$ and $M$ is a live marking of
$\bar{N}_\lara{\lba,w}$.
Since $M$ is live, it contains at least one token on some
$p_\lara{q,i}$, and for each $i\in[1,n]$ it contains
at least one token on one of the places $p_\lara{i,a}$ and
$p_\lara{i,b}$; moreover, from $M$
we can reach a marking in which
$p\free$ is marked and then install a~``pseudoinitial'' marking $\bar{M}_0\geq
M_0$ where $\carrier{\bar{M}_0}=\carrier{M_0}$ (the carrier of the
pseudoinitial marking coincides with the carrier of the initial
marking).
Since $\lba$ rejects $w$, for the sequence $\sigma$ of transitions of
$\bar{N}_\lara{\lba,w}$ that simulates the computation of $\lba$ on $w$ we
have $M_0\step{\sigma}M'$ where
the state-position token in $M'$ is on $p_\lara{q\rej,1}$. In this
marking $M'$ all transitions are dead: for each transition $t$ in
$\bar{N}_\lara{\lba,w}$ there is a place $p_t\in\preset{t}$ such that
$M'(p_t)=0$. (We note that the set $S=\{p_t\mid t$ is a transition
in $\bar{N}_\lara{\lba,w}\}$ is a non-empty set of places that is
unmarked in $M'$, and $S$ is a siphon.)
By Proposition \ref{prop:greatermarking} we also have
$\bar{M}_0\step{\bar{\sigma}}\bar{M}'$ where
$\carrier{\bar{M}'}=\carrier{M'}$, and thus in $\bar{M}'$ all
transitions are dead as well (the above siphon $S$ is also unmarked in
$\bar{M}'$).
Since $\bar{M}'\in\rset{M}$, we have got
a contradiction with the assumption
that $M$ is live.
\end{proof}

\section{Proof strategy for upper bounds}\label{sec:strategy}

We first recall the monotonicity property of
nets: if $M_1\step{\sigma}M_2$, then
$(M_1+M)\step{\sigma}(M_2+M)$.
This entails a simple fact: if we ignore some places (that can be
understood as marked with infinite amounts of tokens),
then any original execution can be performed in the case with
ignored places as well.

\begin{proposition}[If a transition is dead with ignored places, then it is
	dead originally]\label{prop:ignoredead}
	Let $N=(P,T,F)$ be a (general) net.
For all $M,P',t$, where $M:P\rightarrow\Nat$, $P'\subseteq P$, and
	$t\in T$, we have:
	if $t$ is dead in $(\rst{N}{(P',T)},\rst{M}{P'})$, then $t$ is
	dead in $(N,M)$ as well.
\end{proposition}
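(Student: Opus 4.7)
The plan is to prove the contrapositive: if $t$ is not dead in $(N,M)$, then $t$ is not dead in $(\rst{N}{(P',T)},\rst{M}{P'})$. So I would start by assuming there exist a firing sequence $\sigma$ and a marking $M'$ such that $M\step{\sigma}M'$ in $N$ and $M'\geq\premset{t}$, and aim to produce a corresponding witness that $t$ is enabled at some marking reachable from $\rst{M}{P'}$ in the restricted net.

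The key step is the observation that the restriction operation $\rst{\cdot}{P'}$ commutes with transition firing, once one acknowledges that the pre- and post-multisets of a transition $t'$ in $\rst{N}{(P',T)}$ are precisely $\rst{\premset{t'}}{P'}$ and $\rst{\postmset{t'}}{P'}$. Concretely, I would establish by a one-step argument that whenever $M_1\step{t'}M_2$ in $N$, one also has $\rst{M_1}{P'}\step{t'}\rst{M_2}{P'}$ in $\rst{N}{(P',T)}$: enabledness is inherited because $M_1\geq\premset{t'}$ on all of $P$ implies $\rst{M_1}{P'}\geq\rst{\premset{t'}}{P'}$, and the effect on the restricted marking matches the token update in the restricted net place by place.

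Iterating this one-step observation along $\sigma$ (a trivial induction on the length of $\sigma$) gives $\rst{M}{P'}\step{\sigma}\rst{M'}{P'}$ in $\rst{N}{(P',T)}$. Since $M'\geq\premset{t}$ in $N$ entails $\rst{M'}{P'}\geq\rst{\premset{t}}{P'}$, the transition $t$ is enabled at $\rst{M'}{P'}$ in the restricted net, so $t$ is not dead there, completing the contrapositive. I do not expect any real obstacle here; the statement is essentially the monotonicity principle recalled just before the proposition, applied to the special additive ``perturbation'' given by the tokens sitting on the ignored places $P\setminus P'$, which can only help transitions fire and therefore cannot turn a non-dead transition into a dead one.
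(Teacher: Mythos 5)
Your proposal is correct and matches the paper's own argument: both prove the contrapositive by observing that any execution $M\step{\sigma}M'$ in $N$ projects to an execution $\rst{M}{P'}\step{\sigma}\rst{M'}{P'}$ in $\rst{N}{(P',T)}$, so a non-dead transition remains non-dead after restriction. The paper states this projection as ``obvious'' while you spell out the one-step commutation and induction, but the substance is identical.
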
	
\begin{proof}
	For any execution $M\step{\sigma}M'$ in $N$ there is obviously
	the execution
	$\rst{M}{P'}\step{\sigma}\rst{M'}{P'}$ in $\rst{N}{(P',T)}$.
	Hence if $t$ is non-dead in  $(N,M)$ (we have an execution
	$M\step{\sigma t} M'$ of $N$), then $t$ is non-dead in
 $(\rst{N}{(P',T)},\rst{M}{P'})$ as well.
\end{proof}

A crucial ingredient for proving the upper-bound results stated by
Theorem~\ref{thm:upperboundstable} in
Section~\ref{subsec:results} is captured by the following lemma, which
will be proven in Section~\ref{sec:cruclemma}.
(We use the designation ``Lemma'' rather than ``Proposition''
for the claims that we highlight as crucial for our theorems.)

\begin{lemma}[For ord-BIMO nets, if $M_0$ is non-live, then there is
	a simple witness $M\wit\in\rset{M_0}$]\label{lem:crucialingred}
A marking $M_0$ of
an ord-BIMO net $N=(P,T,F)$ is
 non-live iff
there are
	\begin{itemize}
		\item			
$M\wit\in\rset{M_0}$
	(a~witness marking),
\item
			$P\cruc\subseteq P$ (a set of crucial
	places), and
\item
			a nonempty set $T\dead\subseteq T$ (a set of dead
	transitions)
	\end{itemize}			
	such that
	\begin{enumerate}
		\item
			$|\rst{(M\wit)}{P\cruc}|<|P\cruc|$ (hence in $M\wit$ the sum of tokens
			on the places from $P\cruc$
			is at most $|P\cruc|-1$), and we can, moreover,
			require that $M\wit(p)\in\{0,1\}$ for each
			$p\in P\cruc$;
		\item
			$\rst{N}{(P\cruc,T\smallsetminus T\dead)}$ is
			an ord-IMO net (and is thus
			conservative);
		\item
all transitions from $T\dead$ are dead
			in $(\rst{N}{(P\cruc,T)},\rst{(M\wit)}{P\cruc})$.
	\end{enumerate}
\end{lemma}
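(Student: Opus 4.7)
\medskip
\noindent\textbf{Proof plan.}
The lemma is an equivalence, and the two directions are of very different difficulty.

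The backward direction (``if'') is routine. Assume the triple $(M\wit,P\cruc,T\dead)$ is given with $T\dead\neq\emptyset$. Pick any $t\in T\dead$. Condition~3 states that $t$ is dead in the restricted marked net $(\rst{N}{(P\cruc,T)},\rst{(M\wit)}{P\cruc})$. Applying Proposition~\ref{prop:ignoredead} with $P':=P\cruc$, this entails that $t$ is dead in $(N,M\wit)$. Since $M\wit\in\rset{M_0}$, we have exhibited a reachable marking from $M_0$ at which some transition is dead, so $M_0$ is non-live.

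The forward direction is the substantive one. Assuming $M_0$ is non-live, fix some $M^*\in\rset{M_0}$ and some transition that is dead at $M^*$. My plan is to build the triple by a monotone saturation. Initialize $T\dead$ with a single witnessing dead transition and let $P\cruc$ be a ``starved'' subset of its preset (a set of places whose scarcity at $M^*$ blocks the transition). Then iterate two kinds of repair steps: (a) if some $t\in T\smallsetminus T\dead$ is still a genuinely branching transition inside $P\cruc$ (i.e.\ $\sizeof{\postmset{t}\cap P\cruc}\geq 2$) so condition~2 fails, either show that $t$ is in fact dead (in the current restricted system) and move it into $T\dead$, or trim $P\cruc$ so that $t$'s destinations project to at most one element; (b) if the token count $\sizeof{\rst{(M\wit)}{P\cruc}}$ is not strictly below $\sizeof{P\cruc}$, enlarge $P\cruc$ by adding an empty place, which is available precisely because of the scarcity that caused a transition to be dead. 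Because the sets $T\dead$ grow monotonically and $P\cruc$ lives in the finite set $P$, this loop terminates, and conservativity of the ord-IMO subnet (once condition~2 holds) keeps the total token count on $P\cruc$ invariant, so the loop exit already gives conditions~1 and~2, while condition~3 is guaranteed by always adding transitions to $T\dead$ only after verifying that they cannot be enabled in the restricted system (using that the preset of such a transition over-demands the fixed token budget on $P\cruc$).

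To finally strengthen $M\wit(p)\in\{0,1\}$ for $p\in P\cruc$, I would fire transitions of the conservative subnet $\rst{N}{(P\cruc, T\smallsetminus T\dead)}$ to redistribute excess tokens, in the spirit of Proposition~\ref{prop:greatermarking} (generalized to ord-BIMO, as promised by the remark after it): extra tokens on any place of $P\cruc$ can be ``pasted down'' onto companion tokens and shuttled out via transitions leaving $P\cruc$, without changing the dead/live status of $T\dead$ since these firings live in $T\smallsetminus T\dead$ and preserve the token budget on $P\cruc$.

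The main obstacle I expect is making the saturation consistent: enlarging $T\dead$ to enforce condition~2 requires an actual proof of deadness of the added transitions (not merely a desire to classify them that way), and this proof must be robust under the subsequent enlargements of $P\cruc$ needed for condition~1. The intended safeguard is the invariant, maintained throughout the loop, that every $t\in T\dead$ is dead in $(\rst{N}{(P\cruc,T)},\rst{(M\wit)}{P\cruc})$ because the restricted preset of $t$ either contains a place whose token demand cannot be met under the current scarcity budget on $P\cruc$, or reduces (via firings of non-dead IMO transitions) to such a configuration; this is where the ord-BIMO structure and the conservativity of the retained subnet are crucial.
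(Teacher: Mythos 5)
The ``if'' direction of your argument matches the paper's (both invoke Proposition~\ref{prop:ignoredead}), but the ``only-if'' direction has a genuine gap: the saturation loop is a plan whose individual steps are exactly the hard parts, and none of them is backed by an argument. In step (a) you need a dichotomy --- every live transition branching inside $P\cruc$ can either be proved dead or eliminated by trimming $P\cruc$ --- but nothing forces this: a live transition with two destination places in $P\cruc$ may simply exist, and trimming $P\cruc$ to avoid it can destroy the token deficiency of condition~1 or the deadness certificates of condition~3. In step (b) you assert that an empty place is ``available precisely because of the scarcity,'' but no invariant you maintain guarantees that the final $P\cruc$ carries strictly fewer than $|P\cruc|$ tokens. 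Likewise, your invariant for condition~3 (``the restricted preset over-demands the fixed token budget'') presupposes conservativity of the retained subnet on $P\cruc$, which is condition~2 --- the very thing the loop is still trying to establish --- so the argument is circular. Finally, the $\{0,1\}$ strengthening cannot be obtained by ``shuttling tokens out'' of $P\cruc$: $M\wit$ must be reachable from $M_0$, and redistributing or discarding excess tokens after the fact does not produce a reachable marking.

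The missing idea is the paper's notion of an \emph{optimal} marking. The paper first reaches a DL-marking $M\dl$ (every transition dead or live, at least one dead; Proposition~\ref{prop:dlmarking}), then makes it carrier-maximal and self-coverable in the live subnet $\rst{N}{(P,T_L)}$ (Proposition~\ref{prop:liveentailscmsc}), and then applies Lemma~\ref{lem:spread}: in an optimal marking the poor strongly connected components of the relaxed net are top components, carry at most one token per place, and induce an ord-IMO subnet. Taking $P\cruc$ to be the union of the poor components delivers conditions~1 and~2 simultaneously and for free, and condition~3 follows by lifting any hypothetical covering execution in the restricted net back to $N$ using that every place outside $P\cruc$ is marked. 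Your proposal never reaches for carrier-maximality or self-coverability, which is precisely the mechanism that rules out the problematic branching transitions and pins down the token counts; without an analogue of Lemma~\ref{lem:spread}, the saturation loop cannot be made to close.
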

We note that $\rst{N}{(P\cruc,T)}$ can have transitions $t$
for which $\premset{t}=\postmset{t}=\emptyset$; they are
live but have no effect. Technically we view such
transitions also as IMO-transitions (we might again imagine
a marked dummy place serving as both the source and the
destination).

\medskip
In a particular case demonstrating
Lemma~\ref{lem:crucialingred}, $P\cruc$ can be a siphon of
$N$ that is unmarked at $M\wit$. Recall Figure~\ref{fig:ex1}
with the siphon $\{p_2,p_3,p_4\}$ where
$T_D=\{t_1,t_2,t_3,t_4\}$.
A more general case is exemplified
by the \oBIMO net $N=(P,T,F)$ in
Figure~\ref{fig:ex4}, with $M_0$ satisfying
$M_0(p)=1$ for all $p\in P$. (The ``observation edges'' in
Figure~\ref{fig:ex4} are drawn
as dotted, for better lucidity.)
 By executing $M_0 \step{t_3t_4t_2t_1t_1t_6t_6t_6} M\wit$
we get the marking depicted in Figure~\ref{fig:ex4}, where we put
$P\cruc=\{p_1,p_2,p_3,p_4,p_5\}$ and $T\dead=\{t_1,t_6,t_7\}$.

\begin{figure}[!h]
    \centering
        \begin{tikzpicture}[node distance=1.7cm,>=stealth',bend angle=15,auto]
    \tikzstyle{place}=[circle,thick,draw=black!75,fill=blue!20,minimum
    size=5mm]
    \tikzstyle{transition}=[rectangle,thick,draw=black!75,
    fill=black!20,minimum size=4mm]
    \tikzstyle{dots}=[circle,thick,draw=none,minimum
    size=6mm]

    \tikzstyle{every label}=[black]

    \begin{scope}
        \node [place] (pB1) [label=above:$p_1$] {};

        \node [transition] (tB1) [label=below:$t_1$,below right of=pB1] {}
        edge [pre, very thick] (pB1);
        
        \node [transition] (tB2) [label=above left:$t_2$,right of=pB1] {}
        edge [post, very thick] (pB1);
        
        \node [place,tokens=1] (pB2) [label=below:$p_2$,right of=tB2] {}
        edge [post, very thick] (tB2)
        edge [pre, bend left, densely dotted, thick] (tB1)
        edge [post, bend right, densely dotted, thick] (tB1);
        
        \node [transition] (tB3) [label=below:$t_3$,above right of=pB2] {}
        edge [post, very thick] (pB2);
        
        \node [place,tokens=1] (pB3) [label=above:$p_3$,above left of=tB3] {}
        edge [post, thick] (tB3)
        edge [pre, bend left, densely dotted, thick] (tB2)
        edge [post, bend right, densely dotted, thick] (tB2);

        \node [transition] (tB4) [label=above:$t_4$,left of=pB3] {}
        edge [post, very thick] (pB3)
        edge [pre, bend left, densely dotted, thick] (pB2)
        edge [post, bend right, densely dotted, thick] (pB2);

        \node [place] (pB4) [label=above:$p_4$,below left of=tB4] {}
        edge [post, very thick] (tB4)
        edge [pre, bend left, densely dotted, thick] (tB3)
        edge [post, bend right, densely dotted, thick] (tB3);
        
        \node [transition] (tB5) [label=above:$t_5$,left of=pB4] {}
        edge [post, very thick] (pB4)
        edge [pre, very thick] (pB1);

        \node [place] (pA1) [label=below:$p_5$,below right of=pB2] {}
        edge [pre, very thick] (tB1);

        \node [transition] (tA1) [label=below:$t_6$,right of=pA1] {}
        edge [pre, very thick] (pA1);

        \node [place,tokens=4] (pA2) [label=above:$p_6$,above right of=tA1] {}
        edge [pre, very thick] (tA1);

        \node [transition] (tA2) [label=above:$t_7$,above of=tA1] {}
        edge [post, bend left, densely dotted, thick] (pB2)
        edge [pre, bend right, densely dotted, thick] (pB2)
        edge [pre, bend left, densely dotted, thick] (pA1)
        edge [post, bend right, densely dotted, thick] (pA1)
        edge [post, very thick] (pB3)
        edge [pre, very thick] (pA2);

        \node [place,tokens=4] (pA3) [label=above:$p_7$,right of=tA1] {}
        edge [pre, very thick] (tA1);
        \node [transition] (tA3) [label=above:$t_8$,above right of=pA3] {}
        edge [pre, very thick] (pA2)
        edge [post, very thick] (pA3);
        \node [transition] (tA4) [label=above:$t_9$,right of=pA3] {}
        edge [pre, very thick] (pA3)
        edge [post, very thick] (pA2);
    \end{scope}
\end{tikzpicture}\vspace*{-3.5mm}
\caption{From $(1,1,1,1,1,1,1)$ we reach the depicted $M\wit$, with
	$P\cruc=\{p_1,p_2,p_3,p_4,p_5\}$. }
    \label{fig:ex4}\vspace*{0.5mm}
\end{figure}

We observe that the ``if'' direction of Lemma~\ref{lem:crucialingred} is
clear (by recalling Proposition~\ref{prop:ignoredead}).
We also observe that, given an ord-BIMO net $N=(P,T,F)$,
 $M\wit:P\rightarrow\Nat$,
$P\cruc\subseteq P$, and $T\dead\subseteq T$,  conditions
$1$ and $2$ of Lemma~\ref{lem:crucialingred} can be checked trivially,
while deciding condition $3$ is surely in PSPACE
(since 	$\rst{N}{(P\cruc,T\smallsetminus T\dead)}$ is conservative):
we recall Remark~\ref{rem:pspace}, and note that we can
verify that in $(\rst{N}{(P\cruc,T\smallsetminus
T\dead)},\rst{(M\wit)}{P\cruc})$ we
cannot cover $\premset{t}\cap P\cruc$ for any $t\in T\dead$.

Lemma~\ref{lem:crucialingred} (whose ``only if'' direction is proven
in Section~\ref{sec:cruclemma})
will allow us to derive the upper bounds in Table~\ref{tab:res}
for ordinary nets in Section~\ref{sec:upperord}, which
is extended
to non-ordinary nets
by a simple construction
in Section~\ref{sec:extensions}.
These upper bounds also enable to give a smooth proof that reachability
of a ``simple witness'' $M\wit$ of non-liveness of $M_0$ in BIMO nets can
be verified in polynomial space as well; this is shown in
Section~\ref{sec:inpspace}.

\section{Proof of Lemma~\ref{lem:crucialingred}}\label{sec:cruclemma}

We first sketch the organization of the proof.
We start by introducing the notion of optimal markings and an
observation that guarantees
reachability of such markings,
in the case of general live Petri nets. Then we study optimal
markings in the case of \oBIMO nets;  here we also use the notion of
``relaxed nets'', with omitted observation edges. A crucial fact is
then captured by
Lemma~\ref{lem:spread}.
Finally, another
general notion, of DL-markings where each transition is
either dead or live and at least one is dead,
leads to finishing the whole proof of Lemma~\ref{lem:crucialingred}.

\paragraph{Optimal markings, i.e.\ carrier-maximal and self-coverable
markings.}\hfill\\
Let $N=(P,T,F)$ be a net, and $M$ a marking of $N$. We define the
following notions:
\begin{itemize}
\itemsep=0.95pt
	\item $M$ is \emph{carrier-maximal}
if for each $M'\in\rset{M}$ we have $|\carrier{M'}|\leq
|\carrier{M}|$;
\item
 $M$ is \emph{self-coverable} if there
is an execution  $M\step{\sigma}M'$ where $M\leq M'$
		and $\sigma$ is \emph{full}, i.e.,
 each  $t\in T$ has at least one occurrence in $\sigma$;
\item
	$M$ is \emph{optimal} if $M$ is both carrier-maximal and
	self-coverable.
\end{itemize}
For instance, the marking $M\wit$ in Figure~\ref{fig:ex4} is optimal
for the net $N'$ arising from the depicted net $N$ by removing all
dead transitions ($t_1,t_6,t_7$) with their incident edges.

\begin{proposition}\label{prop:liveentailscmsc}
Let $M_0$ be a live marking of a net $N=(P,T,F)$.
	Then there is an optimal marking $M\in\rset{M_0}$.
\end{proposition}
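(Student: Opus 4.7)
The plan is to apply Dickson's lemma along a carefully constructed infinite execution from $M_0$, after a preparatory step that stabilizes the maximum reachable carrier. A direct Dickson-style argument would give a self-coverable marking but not necessarily a carrier-maximal one, so the preparation is needed. I would introduce the auxiliary function $K(M):=\max\{|\carrier{M'}|:M'\in\rset{M}\}$, which is well-defined since $\carrier{M'}\subseteq P$ is finite, and non-increasing along reachability because $M\step{*}M''$ implies $\rset{M''}\subseteq\rset{M}$. Since $K$ takes only finitely many values, I choose some $M^{**}\in\rset{M_0}$ minimizing $K$ over $\rset{M_0}$; by monotonicity in one direction and minimality in the other, $K(M')=K(M^{**})=:K^*$ holds for every $M'\in\rset{M^{**}}$. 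Picking $M\in\rset{M^{**}}$ with $|\carrier{M}|=K^*$ (such $M$ exists by the definition of $K(M^{**})$) already yields a carrier-maximal marking, because any $M'\in\rset{M}$ satisfies $|\carrier{M'}|\leq K(M)=K^*=|\carrier{M}|$.

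The remaining task is to reach from $M$ a marking that is simultaneously carrier-maximal and self-coverable. Using the liveness of $M$ (inherited from $M_0$, since every marking in $\rset{M_0}$ is live), I would build an infinite execution $M=M^{(0)}\step{\sigma_1}M^{(1)}\step{\sigma_2}M^{(2)}\cdots$ in which each $\sigma_l$ is \emph{full}, i.e., contains every transition at least once, and each $M^{(l)}$ satisfies $|\carrier{M^{(l)}}|=K^*$. A single block is built from $M^{(l)}$ by first using liveness to fire, one after another, every transition of $T$, reaching some intermediate $\tilde M\in\rset{M^{(l)}}$, and then extending by a sequence that reaches a carrier-maximal $M^{(l+1)}\in\rset{\tilde M}$, which is possible because $K(\tilde M)=K^*$. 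Applying Dickson's lemma in $\Nat^{|P|}$ to $(M^{(l)})_{l\geq 0}$ yields indices $i<j$ with $M^{(i)}\leq M^{(j)}$; the concatenation $\sigma_{i+1}\cdots\sigma_j$ is full (as a concatenation of full sequences) and witnesses that $M^{(i)}$ is self-coverable. Combined with $|\carrier{M^{(i)}}|=K^*$, this makes $M^{(i)}\in\rset{M_0}$ optimal.

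The main obstacle is precisely the need to enforce both properties simultaneously. A naive Dickson argument along a generic infinite execution from $M_0$ would deliver a self-coverable marking whose carrier could be strictly smaller than the largest carrier reachable from $M_0$, hence not carrier-maximal. The stabilization detour through $M^{**}$ — where $K$ becomes constantly $K^*$ on the entire forward reachability set — is exactly what allows every intermediate marking $M^{(l)}$ to be steered back to carrier size $K^*$, so that the Dickson-selected marking inherits carrier-maximality alongside self-coverability.
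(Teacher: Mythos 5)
Your proof is correct. It shares the paper's skeleton---an infinite execution partitioned into \emph{full} blocks, followed by Dickson's lemma to extract a self-coverable block endpoint---but it secures carrier-maximality by a different mechanism. The paper builds the infinite execution directly from $M_0$ and, at each block, greedily chooses $M_i$ to maximize $|\carrier{M}|$ over all markings reachable from $M_{i-1}$ by a full sequence; carrier-maximality of each $M_i$ then follows from the observation that a hypothetical $M_i\step{\sigma}M$ with a larger carrier would make $\sigma_i\sigma$ a full sequence from $M_{i-1}$ contradicting that choice. You instead perform a preliminary stabilization: you introduce the non-increasing quantity $K(M)=\max\{|\carrier{M'}|:M'\in\rset{M}\}$, descend to a marking $M^{**}$ minimizing it, so that $K$ is constant on the whole forward reachability set, and then only need to steer each block endpoint to a marking of carrier size $K^*$. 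Your route costs one extra step (the descent to $M^{**}$) but makes the invariant explicit and renders carrier-maximality of every block endpoint immediate, whereas the paper's greedy choice is more economical but hides the same fact inside the concatenation argument. Both are complete proofs of the proposition.
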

\begin{proof}
Let $M_0$ be a live marking of $N=(P,T,F)$;
	we assume  $T\neq\emptyset$ (otherwise the claim is trivial).
	We consider an infinite execution
	${M}_0\step{{\sigma}_1}{M}_1\step{{\sigma}_2}{M}_2\step{{\sigma}_3}\cdots$
    where
    \begin{enumerate}
    \itemsep=0.95pt
        \item  for each $i\geq 1$, ${\sigma}_i$ is
		    full (contains all $t\in T$), and
        \item for each $i\geq 1$ we have
              $\sizeof{\carrier{{M}_{i}}}=
              \max\,\{\,\sizeof{\carrier{M}};
		    \gt{{M}_{i-1}}{\sigma}{M}$ for some full
              $\sigma\}$.
    \end{enumerate}
	Such an execution obviously exists since $M_0$ is live (which
	entails that all $M\in\rset{M_0}$ are live).

\medskip
	We observe that $M_i$ are carrier-maximal for all $i\geq 1$: if we had
	$M_i\step{\sigma}M$ where $|\carrier{M_i}|<|\carrier{M}|$,
	then $M_{i-1}\step{\sigma_i\sigma}M$ would violate the
	condition $2$ (since $\sigma_i\sigma$ is full).
	By Dickson's lemma
	there are $i_1, i_2$ such that $1\leq i_1<i_2$
	and ${M}_{i_1}\leq{M}_{i_2}$; hence
 $M_{i_1}$ is both
	carrier-maximal and self-coverable
	(which also entails that
	$|\carrier{M_{i_1}}|=|\carrier{M_{i_2}}|$).
\end{proof}

	\paragraph{The relaxed net $\prlx{N}$ associated with an
	\oBIMO net $N$.} Let $N = (P,T,F)$ be an \oBIMO net, and let
	$E\subseteq (P\times T)\cup (T\times P)$ be the set of
	its ``moving'' edges, i.e.\ the least set such that each
	transition
	$\iotrans{t}{p_s}{\{p_{o_1},\dots,p_{o_\ell}\}}{\{p_{d_1},\dots,p_{d_k}\}}$
	entails that the edges
	$(p_s,t)$ and $(t,p_{d_j})$ ($j\in[1,k]$) are in $E$.
	The \emph{relaxed net related to} $N$ is the subnet
\begin{center}	
	$\prlx{N}	= (P,T,F')$
\end{center}
	of $N$
where for all $(x,y)\in
	(P\times T)\cup (T\times P)$ we have $F'(x,y)=1$ if
	$(x,y)\in E$, and $F'(x,y)=0$ otherwise.
(We recall that we assume that each transition $t$ has a fixed
source place $p_s$, and
that here we deal with \emph{sets}
$\{p_{o_1},\dots,p_{o_\ell}\}$
and $\{p_{d_1},\dots,p_{d_k}\}$ since the considered nets are
ordinary.)

\medskip

\begin{figure}
    \centering
        \begin{tikzpicture}[node distance=1.6cm,>=stealth',bend angle=10,auto]
    \tikzstyle{place}=[circle,thick,draw=black!75,fill=blue!20,minimum
    size=5mm]
    \tikzstyle{transition}=[rectangle,thick,draw=black!75,
    fill=black!20,minimum size=3mm]
    \tikzstyle{dots}=[circle,thick,draw=none,minimum
    size=6mm]

    \tikzstyle{every label}=[black]

    \begin{scope}
        \node [place] (p1) [label=left:$p_1$] {};
        \node [transition] (t1) [label=below:$t_1$,right of=p1] {}
        edge [pre, very thick] (p1);
        \node [place] (p2) [label=below:$p_2$,right of=t1] {}
        edge [pre, very thick] (t1);
        \node [transition] (t2) [label=below:$t_2$,right of=p2] {}
        edge [pre, very thick] (p2);
        \node [transition] (t3) [label=left:$t_3$,below of=p2] {}
        edge [post, very thick] (p1);
        \node [place] (p3) [label=right:$p_3$,right of=t2] {}
        edge [pre, very thick] (t2)
        edge [post, very thick] (t3);
        
        \node [transition] (t4) [label=above:$t_4$,above of=p2] {}
        edge [pre, very thick] (p1)
        edge [pre, bend left, densely dotted, thick] (p2)
        edge [post, bend right, densely dotted, thick] (p2)
        edge [post, bend left, densely dotted, thick] (p3)
        edge [pre, bend right, densely dotted, thick] (p3);
        \node [place] (p4) [label=above:$p_4$,left of=t4] {}
        edge [pre, very thick] (t4);
        \node [transition] (t5) [label=above:$t_5$,left of=p4] {}
        edge [pre, bend left, densely dotted, thick] (p4)
        edge [post, bend right, densely dotted, thick] (p4)
        edge [pre, very thick] (p1);

        \node [place] (p5) [label=left:$p_5$,below of=t3] {};
        \node [transition] (t6) [label=left:$t_6$,left of=t3] {}
        edge [pre, very thick] (p1)
        edge [post, very thick] (p5)
        edge [pre, bend left, densely dotted, thick] (p2)
        edge [post, bend right, densely dotted, thick] (p2);
        \node [transition] (t7) [label=right:$t_7$,right of=t3] {}
        edge [post, very thick] (p3)
        edge [pre, very thick, bend left] (p5)
        edge [post, very thick, bend right] (p5);
    \end{scope}
\end{tikzpicture}
    \caption{A structurally live \oBIMO net.}
    \label{fig:ex3}
\end{figure}

For instance, in the nets depicted in
Figures~\ref{fig:ex4},~\ref{fig:ex3}, and~\ref{fig:ex2}
the thick edges are those remaining in the respective
relaxed nets.
We note that the edges between $p_5$ and $t_7$
in Figure~\ref{fig:ex3} are thick, since $t_7$ is understood as
$\obiotrans{t_7}{p_5}{}{p_5,p_3}$.

\medskip
Informally speaking, $\prlx{N}$ represents the behaviour of $N$ when the
observation-condition is relaxed, which means that a transition can fire even
if its observation places are unmarked.

\begin{remark}
For \oIMO nets
their relaxation nets
are $S$-nets in terminology of, e.g.,
\cite{desel_esparza_1995} (pre-msets and post-msets of all transitions
are singletons).
Generally, for \oBIMO nets
their relaxation nets
are BPP-nets in terminology of, e.g.,
\cite{DBLP:journals/fuin/Esparza97} (pre-msets of all transitions
are singletons).
\end{remark}

\paragraph{Nets $\prlx{N}$ as directed bipartite graphs, paths,
(bottom, top) components.}
We recall that any ordinary net $N=(P,T,F)$ (with $F:(P\times
    T)\cup(T\times P)\rightarrow \{0,1\}$) can be naturally
    viewed as a directed bipartite graph where $P\cup T$ is
    the set of vertices and  $\{(x,y)\in(P\times
    T)\cup(T\times P)\mid F(x,y)=1\}$ is the set of edges.
    We thus use the standard notions like \emph{a path in} $\prlx{N}$ or
    \emph{a~strongly connected component (scc) of}
    $\prlx{N}$.
	By a~\emph{proper successor of} an \emph{scc} $C$ we mean an
	scc $C'$ such that $C'\neq C$ and
there is a path from $C$ to $C'$ (in $\prlx{N}$).
An scc $C$ is a \emph{bottom component} if there is
no proper successor of $C$;
$C$ is a \emph{top component} if there is no $C'$ such that $C$ is
a proper successor of $C'$.

    For an scc $C$, by $P_C$ we denote the set of places
    in $C$ (which is empty if $C$ consists just of one transition).
    We also say just ``a component'' instead of ``an scc''.

\medskip
For instance, $\prlx{N}$ related to $N$ in
Figure~\ref{fig:ex4} has just one (strongly connected) component,
in Figure~\ref{fig:ex3} we have four components,
three of them being trivial (i.e.\ singletons), namely $\{t_5\}, \{p_4\},
\{t_4\}$,
and in Figure~\ref{fig:ex2} we have two components,
in this case both
nontrivial.

\paragraph{Rich and poor components in marked ord-BIMO nets.}
We assume a given \oBIMO net $N=(P,T,F)$.
For a marking $M:P\rightarrow\Nat$, we say that
\[
	\textnormal{an \emph{scc} $C$ of $\prlx{N}$
is }
\begin{cases}
	\textnormal{\emph{ rich at} } M \ \dots \  \textnormal{if } M(P_C) \geq
	\sizeof{P_C},\\
	\textnormal{\emph{ poor at} } M \ \dots \  \textnormal{if } M(P_C) <
	\sizeof{P_C}.\\
\end{cases}	
\]
If $C$ is a rich (or poor) scc in $\prlx{N}$ at $M$, then
we also say that $C$ is a \emph{rich} (or \emph{poor}) \emph{component
in} $(N,M)$. We note that any component consisting of a single
transition is always rich.

\medskip
For instance, the only component of $\prlx{N}$ depicted in
Figure~\ref{fig:ex4} is rich at both $(1,1,1,1,1,1,1)$ and $M\wit$.
We recall the net $N'$ arising by removing the transitions $t_1,t_6,t_7$
(from the net $N$ depicted in Figure~\ref{fig:ex4}),
and note that all
transitions of $N'$ are live at $M\wit$. We have three components in
$(N',M\wit)$ (one of them consisting just of $p_5$), one rich and two
poor. We also recall that $M\wit$ is an optimal marking in $N'$.

\medskip
Now we show a crucial fact (called ``Lemma''
rather than ``Proposition'' in line with our remark
before the statement of Lemma~\ref{lem:crucialingred}).

\begin{lemma}[In optimal markings, tokens are ``spread'' and
	poor components are on top]\label{lem:spread}
	Let $N=(P,T,F)$ be an ord-BIMO net,
and $M_0$ an optimal marking.
	Then:
	\begin{enumerate}
\itemsep=0.9pt
		\item for each rich component $C$ in $(N,M_0)$ we have $M_0(p)\geq 1$ for all
	$p\in P_C$;
\item  for each poor component $C$  in $(N,M_0)$ we have
	$M_0(p)\in\{0,1\}$ for all $p\in P_C$;
\item for each poor component $C$ in $(N,M_0)$ it holds that $C$
			is a top component	of $\prlx{N}$, and that $\rst{N}{(P_C,T)}$ is an
			\oIMO net (i.e., each transition in $C$ has the source
			place in $P_C$ and precisely one destination place in
			$P_C$).
	\end{enumerate}
	\end{lemma}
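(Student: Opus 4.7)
My plan is to prove all three claims by a single contradiction scheme: an optimal $M_0$ cannot support a ``token transport'' along a path in $\prlx{N}$ that would enlarge its carrier. The core technical step is the following: given an optimal $M_0$ and a simple path $p'=q_0\to t_1\to q_1\to\cdots\to t_m\to q_m=p^*$ in $\prlx{N}$ lying inside one scc, if $M_0(p')\geq 2$ and $M_0(p^*)=0$ then from $M_0$ one can reach a marking whose carrier strictly contains $\carrier{M_0}$, contradicting carrier-maximality of $M_0$.

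To prove the core step I will exploit self-coverability. Iterating $\sigma$ $K$ times gives $M_K\in\rset{M_0}$ with $\carrier{M_K}=\carrier{M_0}$ and arbitrarily large counts on every place $p$ where $M_1(p)-M_0(p)>0$. I will then execute $m$ further iterations of $\sigma$ from $M_K$, inserting in the $i$-th of these iterations an extra firing of $t_i$ at the exact moment when $\sigma$ itself is about to fire $t_i$. At that moment the observations of $t_i$ are already marked (the scheduled firing is enabled), and the source $q_{i-1}$ has its pure count ($\geq 1$) augmented by the cumulative deposits of the previously inserted extras $t_1,\dots,t_{i-1}$; in particular the extra $t_{i-1}$ has deposited $+1$ on $q_{i-1}$, so the modified source count is at least $2$. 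For $i=1$ the surplus comes from $M_0(p')\geq 2$ together with a $K$ chosen large enough to absorb any adverse prefix effect. The net effect of the $m$ extras on $M_K$ is $\sum_{i=1}^m (\postmset{t_i}-\premset{t_i})$: a token is deposited on $p^*$, $p'$ retains at least one token, and no intermediate $q_i$ is depleted; hence the reached marking's carrier strictly contains $\carrier{M_0}\cup\{p^*\}$.

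Claims~1 and~2 are immediate applications: for claim~1, a rich $C$ with some $p^*\in P_C$ satisfying $M_0(p^*)=0$ forces, by pigeonhole on $M_0(P_C)\geq|P_C|$, some $p'\in P_C$ with $M_0(p')\geq 2$, and strong connectivity of $C$ in $\prlx{N}$ supplies the path; for claim~2, $C$ poor with $M_0(p')\geq 2$ for some $p'\in P_C$ yields $p^*\in P_C$ with $M_0(p^*)=0$ by poorness, and again the path lies inside $C$. Claim~3 reduces to claim~1 by a pumping argument: if a poor $C$ were not a top component, some $t\in T_{C'}$ in a predecessor scc has a destination in $P_C$ (edges in $\prlx{N}$ between sccs must leave a transition and enter a place, since the source of every transition lies in the transition's own scc), so fullness of $\sigma$ yields $M_1(P_C)-M_0(P_C)\geq 1$, hence $M_K(P_C)\geq|P_C|$ for large $K$; $M_K$ is still optimal, and claim~1 applied to $M_K$ forces every place of $P_C$ into $\carrier{M_K}=\carrier{M_0}$, contradicting the $p^*\in P_C$ with $M_0(p^*)=0$ supplied by poorness of $C$. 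If some $t\in T_C$ had two or more destinations in $P_C$, firing $t$ contributes net $\geq 1$ to $M(P_C)$ per occurrence and the same pumping-to-rich argument applies; combined with the scc-cycle constraint that every $t\in T_C$ must have at least one destination in $P_C$, this gives the ord-IMO structure of $\rst{N}{(P_C,T)}$.

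The main obstacle will be the executability bookkeeping of the core step when some observation place $o$ of a path transition lies outside $\carrier{M_0}$: such $o$ is marked only transiently during $\sigma$, so the extra firing must be synchronized exactly with the scheduled occurrence of $t_i$, and the source-count accounting at each insertion point is delicate. The argument rests on the cascading deposits among successive extras and on the reserve built by $\sigma^K$; corner cases (an observation of some $t_i$ coinciding with a path place, or the prefix of $\sigma$ before the first scheduled $t_1$ already depleting $p'$) are resolved by choosing a later $t_1$-occurrence in the iterated $\sigma^K$ and by using that any place $p$ with $M_1(p)-M_0(p)>0$ necessarily lies in $\carrier{M_0}$.
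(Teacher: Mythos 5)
Your overall architecture (a single ``transport an excess token along a path to enlarge the carrier'' step, plus pumping reductions for the non-top and branching cases of claim~3 and pigeonhole reductions for claims~1 and~2) matches the paper's, and your insertion-with-doubling device is exactly the paper's Case~C1 argument. The gap is in the core step, precisely in the case the paper isolates as Case~C2: a top component $C$ all of whose transitions have exactly one destination in $P_C$. There the dynamics restricted to $P_C$ are conservative, no transition outside $C$ deposits into $P_C$, and each iteration of the full sequence $\sigma$ returns $\rst{M}{P_C}$ to exactly $\rst{(M_0)}{P_C}$. Consequently both of your proposed remedies are vacuous: iterating $\sigma^K$ builds no reserve on any place of $P_C$ (there is no $p\in P_C$ with $M_1(p)-M_0(p)>0$), and ``choosing a later $t_1$-occurrence'' only cycles through the same periodic local configurations. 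The hypothesis $M_0(p')\geq 2$ does not by itself certify an excess token at $p'$: the scheduled execution may genuinely need both tokens (two source-$p'$ occurrences can fire before any token returns to $p'$), and after your inserted extra firing the count at $p'$ sits permanently one below the pure schedule, so every later scheduled source-$p'$ occurrence would need pure count $\geq 2$ at its moment --- a condition you never establish. Your cascading argument for $i\geq 2$ is sound (the $+1$ deposited on $q_{i-1}$ persists by monotonicity), but the base case $i=1$ is exactly where the difficulty lives, and claim~2 as well as the conservative instances of claim~1 fall entirely into this case.

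This is why the paper's proof of Case~C2 does not insert firings but first \emph{deletes} them: it iteratively removes from $\sigma_1$ the last occurrence depositing into a tracked place $p_i$, moves the tracked excess to the source place of the deleted occurrence, and proves that the resulting chain closes back at $p_1$ with a reduced sequence $\sigma_k$ in which $p_1$ is neither a source nor a destination of any occurrence; only then is one of the two tokens at $p_1$ certifiably unused, after which the doubling transport goes through. Without a substitute for that omission construction (or for the invariants the paper maintains along the sequence $(\sigma_i,p_i)$), the core step is unproven in the very case it is most needed.
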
	
\begin{proof}
	Let $N = (P,T,F)$, $M_0$ satisfy	the assumptions.

\medskip
For the sake of contradiction we assume that the statement does not
	hold, and we choose 	a~component $C_0$ (an scc of $\prlx{N}$) such that $C_0$
	violates some of  conditions $1-3$, while 	all proper successors of $C_0$ (if some exist) do not violate
	these conditions. Each proper successor $C$ of $C_0$ is thus a
	rich component and satisfies 	$M_0(p)\geq 1$ for all $p\in P_{C}$ (which is trivial when
	$P_{C}=\emptyset$, i.e., 	when $C$ 	consists of a single 	transition).
By $P_\textsc{succ}$ we denote the union of the sets $P_{C}$ for all proper successors $C$ of $C_0$
	(hence $M_0(p)\geq 1$ for all $p \in P_\textsc{succ}$).

\medskip
We fix
	two places $p_1,p'_1 \in P_{C_0}$ such that
	\begin{equation}\label{eq:poneponeprime}
		{M}_0(p_1)\geq 2\textnormal{ and }
		{M}_0(p'_1)=0;
	\end{equation}	
this is obviously possible in the case when $C_0$ is rich and violates
	condition $1$ as well as in the case when
	$C_0$ is poor and violates condition $2$.

\medskip
Since $M_0$ is self-coverable, we can fix a full sequence $\sigma\in T^*$
	(containing all transitions from $T$) such that
	$M_0\step{\sigma}\bar{M}$ where $M_0\leq \bar{M}$;
	we note that $\carrier{M_0}=\carrier{\bar{M}}$, since $M_0$ is
	carrier-maximal.
	We can easily verify that we also have
	\begin{equation}\label{eq:sigmaone}
		\textnormal{${M}_0\step{\sigma_1}{M}_{1}$ and
		$M_0\leq M_{1}$ (where $\carrier{M_0}=\carrier{M_1}$)}
	\end{equation}
for $\sigma_1$ arising 	from $\sigma$ by omitting all occurrences of
	transitions contained in the proper successors of $C_0$.
This follows from the facts that $M_0(p)\geq 1$ for all $p\in
	P_\textsc{succ}$, 	and that all destination places of 	transitions in the proper successors of $C_0$ are in
	$P_\textsc{succ}$; hence $\sigma_1$ is performable from $M_0$,
	keeping each place in 	$P_\textsc{succ}$ marked during the whole execution
	${M}_0\step{\sigma_1}{M}_{1}$.

\medskip
We get the desired contradiction in both
 cases C1 and C2 below; the cases cover all possibilities, since each
 transition in $C_0$ has at least one destination place in $P_{C_0}$
(which follows from the fact that $C_0$ does not consist of a single
transition). \vspace*{-1mm}

		\paragraph{Case C1.}		\emph{The (violating) scc $C_0$ is not a top component of
			$\prlx{N}$, or there is a transition in $C_0$
			that has at
			least two destination places in $P_{C_0}$.}

\medskip

		In this case there is some $p_0\in P_{C_0}$ such that
			$M_0(p_0)<M_{1}(p_0)$, referring
			to~(\ref{eq:sigmaone}), since no transition in
			$\sigma_1$ decreases the token count in $C_0$
			while at least one transition in $\sigma_1$
			increases this count.
			Since $C_0$ is a strongly connected component of
			$\prlx{N}$, there
must be a path from $p_0$
			to $p'_1$ in $C_0$ (referring to $p'_1$
			in~(\ref{eq:poneponeprime}));
let the sequence of the transitions on this path be
$t'_1t'_2\cdots t'_m$.
Now we can consider the execution
\begin{center}
	$M_0\step{\sigma_1}M_1\step{\sigma_{1,1}}\step{\sigma_{1,2}}\cdots
	\step{\sigma_{1,m}}M'$
\end{center}
where $\sigma_{1,j}$ arises from $\sigma_1$ by replacing an
occurrence of $t'_j$ with $t'_jt'_j$ (the second occurrence
of $t'_j$ moves the ``excessive'' token on its way from
$p_0$ towards $p'_1$, maybe also generating additional tokens on the way).

\medskip
We clearly have
$\carrier{M'}\supseteq \carrier{M_0}\cup\{p'_1\}$,
which contradicts the assumption that $M_0$ is carrier-maximal.\vspace*{-1mm}

\paragraph{Case C2.}	\emph{The (violating) scc $C_0$ is a top component and
			each transition $t$ in
			$C_0$ has (the source place and) precisely
			one destination place in $P_{C_0}$.}

\medskip
Referring to the execution ${M}_0\step{\sigma_1}{M}_{1}$ from~(\ref{eq:sigmaone}),
we observe that we have
		$M_0(p)=M_{1}(p)$ for all $p\in P_{C_0}$ (in our
		case C2). Though
			we have $M_0(p_1)\geq 2$, and there is a path
			from $p_1$ to $p'_1$ (we refer to the places
			from~(\ref{eq:poneponeprime})), we cannot
			immediately say that a token on $p_1$ is
			``excessive''; to use an analogous idea as in C1,
we first adjust $\sigma_1$ by a careful omission of some transition occurrences, to
make clear that we actually get an excessive token after all.

\medskip
In the respective ``omitting'' construction,
for $t$ in $C_0$ we use the expression  $p\rightarrow t$ to denote
that  $p$ is the source place of $t$ (we necessarily have $p\in
P_{C_0}$);
the expression
$t\rightarrow p$ denotes that $p$ is the destination
	place of $t$ that belongs to $P_{C_0}$.

\medskip
Now we imagine constructing a sequence
\begin{equation}\label{eq:seqsigmaipi}
seq = (\sigma_1,p_1),(\sigma_2,p_2),\dots,(\sigma_k,p_k)
\end{equation}
	where 	$M_0\step{\sigma_1}M_1$
	(recall~(\ref{eq:sigmaone})),
and $p_1$ is the place from~(\ref{eq:poneponeprime}),
hence 	$M_0(p_1)\geq 2$.
Given $(\sigma_i,p_i)$, we construct $(\sigma_{i+1},p_{i+1})$ as
follows:

\begin{itemize}
\itemsep=0.95pt
     \item if there is no transition $t$ in
				$\sigma_i$ where
              $t\rightarrow p_i$, then halt (and
              put $k=i$);
        \item
(otherwise) we write $\sigma_i=\sigma't\sigma''$
where $t\rightarrow p_i$ and there is no $t'$ in $\sigma''$ for which
$t'\rightarrow p_i$, and put  $\sigma_{i+1}=\sigma'\sigma''$
				(the last transition occurrence
				putting a token in $p_i$ has been
				omitted);
              the
				source place of $t$  is taken as $p_{i+1}$
				(hence we have
				$p_{i+1}\rightarrow t\rightarrow p_i$).
\end{itemize}

We show that the construction of~(\ref{eq:seqsigmaipi}) keeps the following
conditions $1-3$
(for all $i=1,2,\dots,k$);
we recall that
$P_\textsc{succ}$ denotes the set of places of
the proper successors of $C_0$.
   \begin{enumerate}
   \itemsep=0.95pt
	 \item $M_0\step{\sigma_i}M_i$ (for some $M_i$) where
 $\carrier{M_0}=\carrier{M_i}$ and $M_i(p_i)\geq 2$;
\item 	$p_i\in P_{C_0}$ (and we might have $p_i=p_{i'}$ for
              $i\neq i'$);
\item
 	\begin{enumerate}
\item each place from $P_{\textsc{succ}}$ is marked during the whole execution
$M_0\step{\sigma_i}M_i$;
\item for each $p\in P\smallsetminus
 (P_{\textsc{succ}}\cup\{p_1,p_i\})$ we have
$M_i(p)=M_1(p)$   (hence 	$M_i(p)=M_0(p)$ for all
		$p\in  P_{C_0}\smallsetminus\{p_1,p_i\}$);
                  \item if $p_i=p_1$, then
			  $M_i(p_1)=M_0(p_1)$;
                  \item if $p_i\neq p_1$, then
		$M_i(p_1)=M_0(p_1)-1\geq 1$ and
			      $M_i(p_i)=M_0(p_i)+1$.
              \end{enumerate}
 \end{enumerate}
 The conditions surely hold for $i=1$, since
		   $M_0\step{\sigma_1}M_1$ and $M_1(p)=M_0(p)$ for all
		   $p\in P_{C_0}$; the validity of condition $3$(a)
		   was noted around~(\ref{eq:sigmaone}).
		   We also note that $p_k=p_1$
		   (in~(\ref{eq:seqsigmaipi})), since
in the case $p_i\neq p_1$ we have
		    $M_0\step{\sigma_i}M_i$ and
		   $M_i(p_i)=M_0(p_i)+1$, which entails that there is $t$ in
		   $\sigma_i$ such that $t\rightarrow p_i$.

\medskip
Now we assume that the conditions hold for
		   $i\in[1,k{-}1]$,
and show that they hold for $i+1$ as well.
  We recall that $\sigma_{i+1}=\sigma'\sigma''$ where
  $\sigma_i=\sigma't\sigma''$, $t\rightarrow p_i$, and there is
no $t'$ in $\sigma''$ for which  $t'\rightarrow p_i$.
		   Let us write
		   \begin{center}
		   $M_0\step{\sigma_i}M_i$ as
		   $M_0\step{\sigma'}\bar{M}_1\step{t}\bar{M}_2\step{\sigma''}M_i$.
		   \end{center}			
Since the number of tokens on $p_i$ never increases
in the execution   $\bar{M}_2\step{\sigma''}M_i$
and $M_i(p_i)\geq 2$,
the place $p_i$ is marked with at least two
		   tokens during the whole execution
		   $\bar{M}_2\step{\sigma''}M_i$.
We verify that $\sigma''$ is enabled
at $\bar{M}_1$:
we recall that the destination places of $t$ constitute a~subset
of $\{p_i\}\cup P_{\textsc{succ}}$,
$\bar{M}_1(p)\geq 1$ for all $p\in
		   P_{\textsc{succ}}$ (by condition $3$(a)),   and
there is no transition in $\sigma''$ with the source place in
$P_{\textsc{succ}}$ (since $\sigma''$ is a subsequence of $\sigma_1$
defined in~(\ref{eq:sigmaone})).

\medskip
In the execution
		   $M_0\step{\sigma'}\bar{M}_1\step{\sigma''}M_{i+1}$
		   (which is   ${M}_0\step{\sigma_{i+1}}M_{i+1}$)
the place $p_i$ is marked with at least one token during the whole
		   segment $\bar{M}_1\step{\sigma''}M_{i+1}$,
		   while each place from $P_{\textsc{succ}}$ is marked
		   during the whole execution
		   ${M}_0\step{\sigma_{i+1}}M_{i+1}$.

\medskip
Since 	$M_{i+1}=({M}_i-\postmset{t})+\premset{t}$, we have:
\begin{itemize}
\itemsep=0.95pt
	\item	 if $p_i=p_{i+1}$ (i.e., $p_i\rightarrow t\rightarrow p_i$), then
$\rst{(M_{i+1})}{(P\smallsetminus P_{\textsc{succ}})}=
\rst{(M_{i})}{(P\smallsetminus P_{\textsc{succ}})}$;
\item 	if $p_i\neq p_{i+1}$, then
		\begin{itemize}
			\item		$M_{i+1}(p)=M_i(p)$ for all
		$p\in P\smallsetminus
				(P_{\textsc{succ}}\cup\{p_i,p_{i+1}\})$,
	\item	 $M_{i+1}(p_i)=M_i(p_i)-1$, and
    \item $M_{i+1}(p_{i+1})=M_i(p_{i+1})+1$.
		\end{itemize}
\end{itemize}
		Since $M_i(p_i)\geq 2$, and thus $M_{i+1}(p_i)\geq 1$,
we have $\carrier{M_{i}}\subseteq\carrier{M_{i+1}}$. The condition $\carrier{M_i}=\carrier{M_0}$
and the assumption that $M_0$ is carrier-maximal thus entail that
$\carrier{M_{i+1}}=\carrier{M_0}$; hence $M_{i}(p_{i+1})\geq 1$ and  $M_{i+1}(p_{i+1})\geq 2$.

\medskip
The validity of conditions $1-3$ is thus clear for all $i\in[1,k]$.  We note that the
condition $\carrier{M_i}=\carrier{M_0}$ entails that $M_i(p'_1)=M_0(p'_1)=0$  (referring to $p'_1$
from~(\ref{eq:poneponeprime})), and thus $p_i \neq p_1'$.

The final element $(\sigma_k,p_k)$ of the sequence~(\ref{eq:seqsigmaipi})
satisfies $p_1=p_k$, $\gt{M_0}{\sigma_k}M_k$ where $M_0\leq M_k$, and
$p_1$ cannot be a destination place, and neither the
source place, of any transition in $\sigma_k$. Since
$M_0(p_1)\geq 2$, we get that $\gt{M'_0}{\sigma_k}{M'_k}$ where
$M'_0$ arises from $M_0$, and $M'_k$ from $M_k$, by removing a token
on $p_1$.
This entails that there are executions $\gt{M_i}{\sigma_k}{\bar{M}_i}$
(where $M_i\leq\bar{M}_i$)
for all
$i=1,2,\dots,k$ (since $M_i$ arises from $M'_0$ by adding a token on $p_i$).

\medskip
Since $\sigma_k$ has arisen from $\sigma_1$, in which all
transitions from $C_0$ occur,
for some $i\in[1,k]$ there must be a path from $p_i$
to $p'_1$ (in $C_0$) such that all transitions on the path occur
in $\sigma_k$; let the sequence of the transitions on this path be
$t'_1t'_2\cdots t'_m$. Now we can consider the execution
\begin{center}
	$M_0\step{\sigma_i}M_i\step{\sigma_{k,1}}\step{\sigma_{k,2}}\cdots
	\step{\sigma_{k,m}}M'$
\end{center}
where $\sigma_{k,j}$ arises from $\sigma_k$ by replacing an
occurrence of $t'_j$ with $t'_jt'_j$ (the second occurrence
of $t'_j$ moves the ``excessive'' token on its way from
$p_i$ towards $p'_1$). We clearly have
$\carrier{M'}\supseteq\carrier{M_i}\cup\{p'_1\}=\carrier{M_0}\cup\{p'_1\}$,
which contradicts the assumption that $M_0$ is carrier-maximal.
The proof is thus finished.
\end{proof}

Now we recall a natural notion and a simple fact (for general nets).\vspace*{-1mm}

\paragraph{DL-markings.}
A marking $M$ of a (general) net $N$ is a \emph{DL-marking} if
    each transition is either dead or live at $M$ and at
    least one transition is dead at $M$.

\begin{proposition}[From a non-live marking we can reach a DL-marking]\label{prop:dlmarking}\hfill\newline

 \vspace*{-12mm}
	\begin{enumerate}
\itemsep=0.9pt
		\item  A marking $M$ of a net $N$ is non-live  iff there is a
              DL-marking $M\dl\in\rset{M}$.
      \item  Given a net $N=(P,T,F)$, if $M\dl$ is
a DL-marking of $N$ and $T_L\subseteq T$ is the set of transitions that
			are live at $M\dl$, then  $M\dl$ is live in the net  $\rst{N}{(P,T_L)}$.
	\end{enumerate}
	\end{proposition}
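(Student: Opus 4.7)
The plan is to prove the two items largely independently, using the monotonicity of ``deadness'': once a transition $t$ is dead at some $M'$, it stays dead at every marking reachable from $M'$, because $\rset{M''}\subseteq\rset{M'}$ whenever $M''\in\rset{M'}$. The $(\Leftarrow)$ direction of item~1 is immediate from this: if $M\dl\in\rset{M}$ is a DL-marking, at least one transition $t$ is dead at $M\dl$, hence $t$ is not live at $M$, so $M$ is non-live.

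For the hard $(\Rightarrow)$ direction of item~1, I would build $M\dl$ by an iterative ``kill one more transition'' procedure. Starting from $M_0=M$, maintain a current reachable marking $M_i\in\rset{M}$ and the set $D_i\subseteq T$ of transitions dead at $M_i$. If $M_i$ is already a DL-marking, we are done. Otherwise, some transition $t$ is neither dead nor live at $M_i$; by non-liveness of $t$ at $M_i$ we can reach a marking $M_{i+1}\in\rset{M_i}$ at which $t$ is dead. By the monotonicity observation above, every transition in $D_i$ remains dead at $M_{i+1}$, so $D_{i+1}\supsetneq D_i$. Since $T$ is finite, this strictly increasing chain must terminate, and it can only terminate at a marking where every transition is either dead or live; the first killed transition ensures the ``at least one dead'' clause, so this terminal marking is the desired $M\dl$.

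For item~2, the key observation is that executions from $M\dl$ in $N$ and in $N':=\rst{N}{(P,T_L)}$ coincide. Indeed, any transition $t\in T\setminus T_L$ is dead at $M\dl$, so by definition it cannot fire in any execution issuing from $M\dl$; hence every such execution uses only transitions of $T_L$, which are exactly the transitions of $N'$. Consequently $\rset{M\dl}$ is the same in $N$ and in $N'$. Fix any $t\in T_L$ and any $M'\in\rset{M\dl}$ (in $N'$, equivalently in $N$). Since $t$ is live at $M\dl$ in $N$, there exists an execution $M'\step{\sigma}M''\step{t}$ in $N$; by the same argument the transitions in $\sigma$ lie in $T_L$, so this execution is also valid in $N'$. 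Thus $t$ is non-dead at every $M'\in\rset{M\dl}$ in $N'$, which is exactly what it means for $t$ to be live at $M\dl$ in $N'$. Since this holds for every $t\in T_L$, the marking $M\dl$ is live in $N'$.

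I do not expect serious obstacles here; the only subtle point is the bookkeeping in the iteration of item~1, namely checking that dead transitions really persist as dead when we extend the execution, which is handled by the one-line inclusion $\rset{M_{i+1}}\subseteq\rset{M_i}$.
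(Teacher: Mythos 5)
Your proposal is correct and follows essentially the same route as the paper: the only-if direction of item~1 is proved by the same iterative argument (repeatedly reaching a marking that kills one more non-live transition, using that dead transitions persist along reachability), and the paper dismisses item~2 and the if-direction as trivial, which you simply spell out in full. No gaps.
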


\begin{proof}
The claim $2$ and the ``if'' part of the claim $1$ are trivial, hence
it remains to look
at the ``only-if'' part of the claim $1$.
	If $t$ is a transition that is non-live at $M$, then there is
    $M'\in\rset{M}$ where $t$ is dead; if $t'$ is a
     transition that is non-live at $M'$, then there is
    $M''\in\rset{M'}$ where $t'$ is dead, as well as $t$. By
    repeating this reasoning we arrive at some DL-marking
    $M\dl\in\rset{M}$, for any non-live $M$.
\end{proof}

\bigskip\noindent \textbf{Proof of Lemma~\ref{lem:crucialingred} (the ``only-if'' direction).}

\smallskip\noindent
Let $M_0$ be a non-live marking of
an ord-BIMO net $N=(P,T,F)$.
By Proposition~\ref{prop:dlmarking}(1) there is a DL-marking
in $\rset{M_0}$;
by
Propositions~\ref{prop:dlmarking}(2) and~\ref{prop:liveentailscmsc}
there is even a DL-marking
$M\dl\in\rset{M_0}$
that is optimal (carrier-maximal and
self-coverable) in $\rst{N}{(P,T_L)}$ where $T_L$ is the set
of transitions of $N$ that are live at $M\dl$. (The marking
$M\wit$ depicted in Figure~\ref{fig:ex4} is an example of such
marking.)

\medskip
Let us choose such an optimal $M\dl$,
and let $P\cruc$ be the union of places from the strongly connected components of
$\prlx{\rst{N}{(P,T_L)}}$ that are poor at $M\dl$ (hence we surely
have $|\rst{(M\dl)}{P\cruc}|<|P\cruc|$).
By
Lemma~\ref{lem:spread} we derive that $M\dl(p)\geq 1$ for all
$p\in P\smallsetminus P\cruc$, $M\dl(p)\in\{0,1\}$ for all $p\in
P\cruc$, and
	$\rst{N}{(P\cruc,T_L)}$ is an \oIMO net.
	We also note that each transition
$\iotrans{t}{p_s}{\{p_{o_1},\dots,p_{o_\ell}\}}{\{p_{d_1},\dots,p_{d_k}\}}$
where $p_s\in P\smallsetminus P\cruc$ satisfies
$\{p_{d_1},\dots,p_{d_k}\}\subseteq P\smallsetminus P\cruc$
(if the source place is outside $P\cruc$, then all destination places are
outside $P\cruc$).

\medskip
To finish proving Lemma~\ref{lem:crucialingred}, it suffices
to show
that the transitions from the set
$T\dead=T\smallsetminus T_L$ (consisting of the transitions that are dead
at $M\dl$ in $N$) are dead in
$(\rst{N}{(P\cruc,T)},\rst{(M\dl)}{P\cruc})$.
For the sake of contradiction we assume
that $\rst{(M\dl)}{P\cruc}\step{\sigma}M$ ($M:P\cruc\rightarrow \Nat$)
in $\rst{N}{(P\cruc,T)}$ where $\sigma\in (T_L)^*$ and
$M\geq(\premset{t}\cap P\cruc)$ for some $t\in T\dead$.
We recall that $M\dl(p)\geq 1$ for all $p\in P\smallsetminus P\cruc$,
hence for
$\sigma'$ arising from $\sigma$ by omitting
the transitions whose source place is outside $P\cruc$
(and whose destination places are thus outside $P\cruc$ as well)
we get
$M\dl\step{\sigma'}M'$ in $N$. We have $M'\geq \premset{t}$,
since $M'$ coincides with $M$ on $P\cruc$, and $M'(p)\geq 1$ for all
$p\in P\smallsetminus P\cruc$; but this contradicts
the assumption that $t$ is dead at $M\dl$ in $N$.
\hfill $\square$

\section{Upper bounds for ordinary nets}\label{sec:upperord}

In Section~\ref{sec:liveinoBIMO} we show an important lemma that
clarifies the main bounds in Table~\ref{tab:res} (in
Section~\ref{subsec:results}) for ordinary nets; to this aim
 we also have a more detailed look at executions by viewing
 tokens as individual objects.
In Section~\ref{sec:oIMOcase} we give the strengthened bound for
structurally live ord-IMO
nets, showing that here there are live markings having at most one token in
each place; in Section~\ref{sec:counterBIO} we show that this does not hold for
ord-BIO nets.

\subsection{Live markings in ord-BIMO nets}\label{sec:liveinoBIMO}

In this section we prove the next lemma, which has a
straightforward consequence for live markings in ord-BIMO nets.

\begin{lemma}\label{lem:liveP}
Let $N=(P,T,F)$ be an ord-BIMO net, and let $M,M'$ be two markings of $N$ that
coincide except of one place $p_0$ for which we have
	$2|P|-1\leq M(p_0)=M'(p_0)-1$. Then $M$ is live iff $M'$ is 	live.
\end{lemma}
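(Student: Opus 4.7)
My plan is to prove both directions of the biconditional by applying Lemma~\ref{lem:crucialingred}, which characterizes non-liveness of a marking as the existence of a reachable simple witness $(M\wit,P\cruc,T\dead)$. The driving intuition is that the threshold $M(p_0)\geq 2\sizeof{P}-1$ provides a slack of roughly $\sizeof{P\cruc}$ tokens on $p_0$, which is more than enough to absorb the addition of one token when converting a witness for one marking into a witness for the other.

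For the direction ``$M$ non-live $\Rightarrow M'$ non-live'', I would take a witness $(M\wit,P\cruc,T\dead)$ with $M\step{\sigma}M\wit$. By monotonicity $M'\step{\sigma}M\wit+\Lbag p_0\Rbag$. If $p_0\notin P\cruc$, then $\rst{(M\wit+\Lbag p_0\Rbag)}{P\cruc}=\rst{M\wit}{P\cruc}$, so $(M\wit+\Lbag p_0\Rbag,P\cruc,T\dead)$ is immediately a witness for $M'$, as conditions~2 and~3 of Lemma~\ref{lem:crucialingred} depend only on $(P\cruc,T\dead)$ and on $\rst{M\wit}{P\cruc}$.

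For the direction ``$M'$ non-live $\Rightarrow M$ non-live'', I would take a witness $(M\wit',P\cruc',T\dead')$ with $M'\step{\sigma'}M\wit'$. Because $M'(p_0)\geq 2\sizeof{P}$ while $M\wit'(p_0)\leq 1$ whenever $p_0\in P\cruc'$, at least $2\sizeof{P}-1$ tokens are consumed from $p_0$ along $\sigma'$. Viewing tokens as individuals, as Section~\ref{sec:liveinoBIMO} foreshadows, I would designate the ``extra'' token in $M'$ as one that is never fired, so that the same $\sigma'$ is firable from $M$ and yields $M\wit'-\Lbag p_0\Rbag\in\rset{M}$. The triple $(M\wit'-\Lbag p_0\Rbag,P\cruc',T\dead')$ then satisfies condition~1 (the sum on $P\cruc'$ weakly decreases), condition~2 (unchanged), and condition~3, because removing a token from the witness marking only restricts the set of reachable markings, so every dead transition remains dead (this is the standard ``downward'' monotonicity used e.g.\ in Proposition~\ref{prop:ignoredead}).

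The main obstacle is the residual subcase of the forward direction in which $p_0\in P\cruc$: adding a token to $M\wit$ may either break the constraint $M\wit(p)\in\{0,1\}$ on $P\cruc$ (if $M\wit(p_0)=1$) or push $\sizeof{\rst{M\wit}{P\cruc}}$ from $\sizeof{P\cruc}-1$ up to $\sizeof{P\cruc}$, violating condition~1. This is precisely where the threshold $M(p_0)\geq 2\sizeof{P}-1\geq 2\sizeof{P\cruc}-1$ is essential: the large token count on $p_0$ in $M$ forces $\sigma$ to contain at least $2\sizeof{P\cruc}-2$ firings with source $p_0$, and by the individual-token bookkeeping one can perform one additional source-$p_0$ firing from $M\wit+\Lbag p_0\Rbag$, restoring a valid witness (possibly after shrinking $P\cruc$ by moving $p_0$ into $P\smallsetminus P\cruc$ once its count has been brought back into the allowed range). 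Formalizing this extension carefully, through the individual-token view promised in Section~\ref{sec:liveinoBIMO}, will be the technical heart of the proof.
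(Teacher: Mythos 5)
Your skeleton matches the paper's: the equivalence is split into an ``add a token'' half and a ``remove a token'' half (Propositions~\ref{prop:nonlivepmar} and~\ref{prop:liveupper}), each proved by transporting a witness from Lemma~\ref{lem:crucialingred} via individual-token bookkeeping. But the two steps you defer are exactly where the work lies, and your sketches for them do not go through. In the adding direction with $p_0\in P\cruc$, ``perform one additional source-$p_0$ firing from $M\wit+\Lbag p_0\Rbag$'' is not available: a transition with source $p_0$ occurring in $\sigma$ need not be enabled at $M\wit+\Lbag p_0\Rbag$ because its observation places may be unmarked there; if it branches it may deposit tokens on other places of $P\cruc$ and break condition~1 again; and shrinking $P\cruc$ destroys conditions~2 and~3 (a transition with source $p_0$ and a destination inside $P\cruc$ ceases to be an IMO transition of the restricted net, and deadness is not preserved when more places are ignored). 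The paper avoids any case split on $p_0$: it calls an initial ID \emph{black} if it has a successor on $P\cruc$ in $M\wit$, notes that at most $\sizeof{P}-1$ IDs are black while $M(p_0)\geq\sizeof{P}$, picks a non-black ID on $p_0$, and lets the extra token be its \emph{twin} by doubling each of that ID's transition occurrences \emph{in place} inside $\sigma$, where enabledness is guaranteed; all successors of the twin then land outside $P\cruc$.

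The removing direction has the more serious gap. If the designated extra token is left idle on $p_0$, every intermediate marking of the reduced run has exactly one token fewer on $p_0$ and is otherwise unchanged; hence any transition of $\sigma'$ that \emph{observes} $p_0$ at a moment when the original run holds exactly one token there (necessarily the idle one) is no longer enabled, and $\sigma'$ is not firable from $M$. Nothing in the hypotheses prevents the run from draining $p_0$ before observing it, so ``the same $\sigma'$ is firable'' is unjustified. This is precisely why Proposition~\ref{prop:liveupper} needs $M'(p_0)\geq 2\sizeof{P}$ rather than a slack of one: it declares all of the at least $\sizeof{P}+1$ non-black IDs on $p_0$ red and replays the red transition occurrences with carefully chosen multiplicities $n_i$ so that a red token is present on a place exactly during the window in which the original run kept one there (serving the observations), while no red token survives on $P\cruc$ at the end. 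Your proposal omits this mechanism entirely; the parts you do justify (the $p_0\notin P\cruc$ subcase, and the downward preservation of conditions~1--3 once the modified execution exists) are correct but are the easy parts.
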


\begin{corollary}
Let $N=(P,T,F)$ be an ord-BIMO net. The set $\calL_N$ of live markings of $N$ is
	determined by its finite (basic) subset \vspace*{-1mm}
	\begin{center}
	$\calB_N=\{M\in\calL_N\mid M(p)\leq B$ for all $p\in
	P\}$ where $B=2|P|{-}1$,\vspace*{-1mm}
	\end{center}
		since \vspace*{-1mm}
	\begin{center}
		$\calL_N=\{M\in\Nat^P\mid M'\in\calB_N$ where
	$M'(p)=M(p)$ if $M(p)\leq B$, and $M'(p)=B$
	otherwise$\}$. \vspace*{-1mm}
	\end{center}		
\end{corollary}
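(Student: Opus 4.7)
The plan is to derive this corollary by iterated application of Lemma~\ref{lem:liveP}. First I would note that for any marking $M\in\Nat^P$ the truncation $M'$ defined by $M'(p)=M(p)$ if $M(p)\leq B$ and $M'(p)=B$ otherwise automatically satisfies $M'(p)\leq B$ for all $p\in P$; so as soon as I have established that $M$ is live iff $M'$ is live, both displayed set-equalities in the statement follow immediately: $\calB_N$ is then exactly the subset of $\calL_N$ already meeting the bound, and $M\in\calL_N$ is equivalent to $M'\in\calB_N$.

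To prove the equivalence ``$M$ live iff $M'$ live'' I would process the places $p$ on which $M$ exceeds $B$ one at a time, and inside each such place remove the excess tokens one by one. Fix $p$ with $M(p)=k>B$, and form the chain $M=M^{(0)},M^{(1)},\dots,M^{(k-B)}$ where $M^{(i)}$ agrees with $M$ outside $p$ and $M^{(i)}(p)=k-i$. For each $i\in[0,k-B-1]$ the smaller marking $M^{(i+1)}$ satisfies $M^{(i+1)}(p)=k-i-1\geq B=2|P|-1$ and $M^{(i+1)}(p)=M^{(i)}(p)-1$, which is exactly the hypothesis of Lemma~\ref{lem:liveP} (with $M^{(i+1)}$ playing the role of the lemma's $M$ and $M^{(i)}$ the role of the lemma's $M'$); hence $M^{(i)}$ and $M^{(i+1)}$ have the same liveness status. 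A straightforward induction on $i$ yields that $M$ is live iff $M^{(k-B)}$ is live. Repeating the argument on each other place whose value still strictly exceeds $B$ (the places already processed have value $B\leq B$ and are not touched again) then shows that $M$ is live iff $M'$ is live, as required.

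There is no real obstacle in this argument; the only point that needs attention is to check that the threshold $2|P|-1$ of Lemma~\ref{lem:liveP} is respected at every single-token removal step, which is guaranteed precisely by stopping the truncation at $B=2|P|-1$ rather than at any smaller value. Finiteness of $\calB_N$ is then automatic since each coordinate takes at most $B+1$ values and $P$ is finite.
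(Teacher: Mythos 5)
Your argument is correct and is exactly the route the paper intends: the corollary is stated as an immediate consequence of Lemma~\ref{lem:liveP}, obtained by iterating that lemma one token at a time on each place exceeding $B=2|P|-1$, and your chain $M^{(0)},\dots,M^{(k-B)}$ respects the lemma's threshold at every step since the smaller marking always retains at least $2|P|-1$ tokens on the place being reduced. Nothing is missing.
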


\begin{remark}
We note that liveness is not generally monotonic, even for \oIO
nets. Figure~\ref{fig:ex2} shows an example of an \oIO net with a  live
marking (as can be easily verified);
adding a~token to $p_2$ yields a non-live marking, since in this case firing $t_5t_2t_6$ makes all
transitions dead.
Another example is
the \oBIMO net in Figure~\ref{fig:ex3}; we can check that
the markings $(1,0,0,0,0)$ and $(3,0,0,0,0)$ are non-live, while
$(2,0,0,0,0)$ and  $(3,0,0,0,1)$ are live.
\end{remark}

\begin{figure}
	\centering
        \begin{tikzpicture}[node distance=1.7cm,>=stealth',bend angle=25,auto]
    \tikzstyle{place}=[circle,thick,draw=black!75,fill=blue!20,minimum
    size=4mm]
    \tikzstyle{transition}=[rectangle,thick,draw=black!75,
    fill=black!20,minimum size=3mm]
    \tikzstyle{dots}=[circle,thick,draw=none,minimum
    size=6mm] \tikzstyle{every label}=[black]

    \begin{scope}
        \node [place,tokens=1] (p1) [label=above:$p_1$] {};
        
        \node [transition] (t1) [label=below right:$t_1$,right of=p1] {}
        edge [pre,very thick] (p1);
        
        \node [place] (p2) [label=above:$p_2$,right of=t1] {}
        edge [pre,very thick] (t1);
        
        \node [transition] (t2) [label=right:$t_2$,above right of=p2] {}
        edge [pre,very thick] (p2)
        edge [post,very thick] (p1);
        
        \node [place,tokens=1] (p3) [label=above right:$p_3$,below left of=p1] {};

        \node [transition] (t3) [label=below:$t_3$,below left of=p3] {}
        edge [pre, bend left, densely dotted, thick] (p1)
        edge [post, bend right, densely dotted, thick] (p1)
        edge [pre,very thick] (p3);

        \node [transition] (t4) [label=above:$t_4$,above left of=p3] {}
        edge [post, bend left, densely dotted, thick] (p1)
        edge [pre, bend right, densely dotted, thick] (p1)
        edge [post,very thick] (p3);

        \node [place] (p4) [label=below:$p_4$,below right of=t3] {}
        edge [pre, bend left, densely dotted, thick] (t1)
        edge [post, bend right, densely dotted, thick] (t1)
        edge [pre,very thick] (t3)
        edge [post,very thick] (t4);

        \node [transition] (t5) [label=above:$t_5$,below right of=p2] {}
        edge [pre, bend left, densely dotted, thick] (p2)
        edge [post, bend right, densely dotted, thick] (p2)
        edge [pre,very thick] (p3);

        \node [place] (p5) [label=below:$p_5$,right of=t5] {}
        edge [pre, bend left, densely dotted, thick] (t2)
        edge [post, bend right, densely dotted, thick] (t2)
        edge [pre,very thick] (t5);
        
        \node [transition] (t6) [label=below right:$t_6$,below left of=t5] {}
        edge [pre, bend left, densely dotted, thick] (p1)
        edge [post, bend right, densely dotted, thick] (p1)
        edge [pre,very thick] (p5);

        \node [place,tokens=1] (p6) [label=below:$p_6$,below of=t6] {}
        edge [pre,very thick] (t6);
        
        \node [transition] (t7) [label=below:$t_7$,above left of=p6] {}
        edge [pre, bend left, densely dotted, thick] (p2)
        edge [post, bend right, densely dotted, thick] (p2)
        edge [pre,very thick] (p6)
        edge [post,very thick] (p3);
    \end{scope}
\end{tikzpicture}
    \caption{A live marking of an \oIO net $N$; adding a token on
	$p_2$ makes it non-live.}
    \label{fig:ex2}
\end{figure}

\paragraph{Executions with individual tokens, ID-valuations,
relations $\imsucct{t}$, $\imsucc$, $\imsucc^*$.}\hfill \smallskip\\
Let us have  a more detailed look at a fixed execution
	\begin{equation}\label{eq:execIDs}
M_0\step{t_1}M_1\step{t_2}M_2\cdots\step{t_m}M_m
	\end{equation}
for an ord-BIMO net $N=(P,T,F)$. Now we view tokens as individual objects; each
token is identified with its unique ID from an infinite domain $\idset$ (e.g.,
we can have $\idset=\Nat$). A~marking $M:P\rightarrow \Nat$
thus has a related \emph{id-marking} $\idm:P\rightarrow \PF(\idset)$, where the finite sets
$\idm(p)$ and $\idm(p')$ of IDs are disjoint when $p\neq p'$
(and $\sizeof{\idm(p)}=\sizeof{M(p)}$ for all $p\in P$).
We also define the notion of \emph{an ID-valuation of the execution}~(\ref{eq:execIDs}), as
a sequence \vspace*{-1mm}
\begin{center}
$\idm_0\step{t_1}\idm_1\step{t_2}\idm_2\cdots\step{t_m}\idm_m$ \vspace*{-1mm}
\end{center}
that is stepwise created as follows:
We start with attaching a unique ID to
each token in the initial marking $M_0$, getting $\idm_0:P\rightarrow
\PF(\idset)$.
For $i\in[1,m]$, $\idm_{i}$ arises from $\idm_{i-1}$ by the following
change, based on the transition
	$\iotrans{t_{i}}{p_s}{\{p_{o_1},\dots,p_{o_\ell}\}}{\{p_{d_1},\dots,p_{d_k}\}}$:
	\begin{itemize}	
		\item
			one (arbitrarily chosen) ID
			$\textsc{i}_0\in\idm_{i-1}(p_s)$
			is removed
			(from the source place $p_s$),
and it is not used anymore (i.e.,
			$\textsc{i}_0$ will not occur in $\idm_j(p)$ for
			any $j\in[i,m]$ and $p\in P$);
\item			
		for each $j\in[1,k]$, a fresh ID $\textsc{i}_j$ (so far not occurring in the created
			ID-valuation)
			is added to
			the destination place $p_{d_j}$;
\item			
moreover, we say that each of the ``destination IDs'' $\textsc{i}_j$
is an 	\emph{immediate successor of} the ``source ID''
			$\textsc{i}_0$, which is denoted by
			$\textsc{i}_0\imsucc\textsc{i}_{d_j}$;
			in more detail we also write
			$\textsc{i}_0\imsucct{t_i}\textsc{i}_{d_j}$.
	\end{itemize}
(The ``observation IDs'' on the places $p_{o_j}$ used by $t_i$ are not
changed, being the same
 in $\idm_{i}$ as in $\idm_{i-1}$.) We note that if
	the source token from $p_s$ returns, i.e.\
	$p_s\in\{p_{d_1},\dots,p_{d_k}\}$, then the returning token
	has a new ID, which is an immediate successor of the original
	one.

We have thus defined what we mean
by an ID-valuation
$\idm_0\step{t_1}\idm_1\step{t_2}\idm_2\cdots\step{t_m}\idm_m$
of the execution~(\ref{eq:execIDs}),
while we have also introduced the \emph{immediate-successor relation}
$\imsucc$ on the set of IDs used in this ID-valuation.
We also define the \emph{successor relation} $\imsucc^*$, as the reflexive and transitive
	closure of $\imsucc$.

The relation $\imsucc^*$ is clearly a partial order that can be
presented by a set of (disjoint) trees where the roots constitute the set
$\bigcup_{p\in P}\idm_0(p)$; this is captured by the next
observation.
\begin{fact}[Each used ID is a successor of a unique ID from $\bar{M}_0$]\label{obs:IDtree}

\noindent 	Given an ID-valuation
$\idm_0\step{t_1}\idm_1\step{t_2}\idm_2\cdots\step{t_m}\idm_m$
	(of an execution
	$M_0\step{t_1}M_1\step{t_2}M_2\cdots\step{t_m}M_m$),
	for each triple $i\in[0,m]$, $p\in P$, and
	$\textsc{i}\in \idm_i(p)$
	there is a~unique place $p'\in P$ and
a~unique ID	$\textsc{i}'\in\idm_0(p')$
	such that $\textsc{i}'\imsucc^*\textsc{i}$.
\end{fact}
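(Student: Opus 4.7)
The plan is to prove the claim by induction on $i$, the index of the id-marking in the ID-valuation. The statement splits naturally into existence and uniqueness, and both follow from a single structural property of the immediate-successor relation: every ID used in the ID-valuation that is not already present in $\idm_0$ has exactly one immediate predecessor under $\imsucc$, namely the source ID consumed by the transition that created it.

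The base case $i=0$ is immediate: for any $p\in P$ and $\textsc{i}\in\idm_0(p)$ set $p'=p$ and $\textsc{i}'=\textsc{i}$; then $\textsc{i}'\imsucc^*\textsc{i}$ by reflexivity. For the inductive step, assume the claim for $i-1$ and let $\textsc{i}\in\idm_i(p)$. Write $t_i$ as $\bimotrans{t_i}{p_s}{p_{o_1},\dots,p_{o_\ell}}{p_{d_1},\dots,p_{d_k}}$ and consider the two cases arising from the definition of the step $\idm_{i-1}\step{t_i}\idm_i$. Either (a) $\textsc{i}$ is one of the freshly introduced destination IDs $\textsc{i}_j$ on some $p_{d_j}$, in which case $\textsc{i}_0\imsucc\textsc{i}$ for the uniquely removed source ID $\textsc{i}_0\in\idm_{i-1}(p_s)$; applying the induction hypothesis to $\textsc{i}_0\in\idm_{i-1}(p_s)$ yields a root $\textsc{i}'\in\idm_0(p')$ with $\textsc{i}'\imsucc^*\textsc{i}_0\imsucc\textsc{i}$, hence $\textsc{i}'\imsucc^*\textsc{i}$. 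Or (b) $\textsc{i}$ was already in $\idm_{i-1}(p)$ (and was not the removed source ID, which by definition is never used again), in which case the induction hypothesis applied at $i-1$ gives the required root directly.

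For uniqueness, I would observe that the relation $\imsucc$ is ``backward-functional'' on the set of IDs actually used: every ID is either a root (lies in $\bigcup_{p\in P}\idm_0(p)$, and has no $\imsucc$-predecessor, since at step $i$ only freshly introduced IDs receive predecessors) or is a freshly introduced ID at exactly one step $i$, whose unique immediate predecessor is the source ID $\textsc{i}_0$ that was consumed at that step. Consequently, iterating the predecessor operation from any $\textsc{i}$ produces a well-defined finite chain that terminates at a unique root. This chain-construction is essentially the converse of the existence argument, and it pins down the pair $(p',\textsc{i}')$ uniquely.

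The argument is routine once one unpacks the definition of the ID-valuation carefully; the only point that requires a moment's thought is that a freshly inserted destination ID cannot coincide with any ID already present (guaranteed by ``fresh, so far not occurring''), so case (a) and case (b) are disjoint and the predecessor of a non-root ID is truly unique. No obstacle beyond this bookkeeping is anticipated.
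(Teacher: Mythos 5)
Your proof is correct and coincides with the reasoning the paper relies on: the paper states this Fact without an explicit proof, merely remarking that $\imsucc^*$ forms a forest of disjoint trees whose roots constitute $\bigcup_{p\in P}\idm_0(p)$, and your induction on $i$ for existence together with the backward-functionality of $\imsucc$ (every non-root ID is freshly introduced at exactly one step and thus has exactly one immediate predecessor, while roots have none) is precisely the bookkeeping that justifies that remark. No gap.
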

The next two propositions entail Lemma~\ref{lem:liveP}.\vspace*{-1mm}

\begin{proposition}\label{prop:nonlivepmar}
    Let $N=(P,T,F)$ be an \oBIMO net, and let $M_0,M'_0$ be two markings of $N$ that
coincide except of one place $p_0$ for which we have
	$|P|\leq M_0(p_0)=M'_0(p_0)-1$. If $(N,M_0)$ is non-live then
	$(N,M'_0)$ is
	non-live as well.
\end{proposition}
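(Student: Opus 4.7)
The plan is to apply Lemma~\ref{lem:crucialingred} to the non-live marking $M_0$, obtaining a witness $M\wit \in \rset{M_0}$, a set $P\cruc\subseteq P$, and a set $T\dead\subseteq T$ satisfying its three conditions, and then to exhibit a marking $\tilde{M}\in\rset{M'_0}$ satisfying the same conditions with the same $P\cruc$ and $T\dead$; the ``if'' direction of Lemma~\ref{lem:crucialingred} will then immediately give non-liveness of $M'_0$. Since conditions 2 and 3 depend only on $\rst{N}{(P\cruc,T)}$ and the restriction of the witness to $P\cruc$, it suffices to arrange that $\rst{\tilde{M}}{P\cruc}=\rst{M\wit}{P\cruc}$.

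First I would fix an execution $M_0\step{\sigma}M\wit$ together with an associated ID-valuation. By Fact~\ref{obs:IDtree}, every token lying on $P\cruc$ at the end has a unique $\imsucc^*$-ancestor ID in the initial ID-marking. Since $\sizeof{\rst{M\wit}{P\cruc}}\leq |P\cruc|-1 < |P|$, the number of initial IDs that have at least one descendant on $P\cruc$ at the end of $\sigma$ is strictly less than $|P|$. Because $M_0(p_0)\geq |P|$, pigeonholing over the IDs initially placed on $p_0$ yields one such ID, call it $\tau$, whose entire $\imsucc^*$-descendance in $\idm_m$ avoids $P\cruc$.

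To construct $\tilde{M}$, let $M'_0$ be obtained from $M_0$ by placing one fresh ID $\tau'$ on $p_0$, and let $\tau'$ ``shadow'' $\tau$ throughout $\sigma$: whenever the original execution fires a transition $t$ whose source ID belongs to $\tau$'s current descendant set, fire $t$ a second time immediately afterwards, consuming $\tau'$'s shadow token on the same source place and creating twin tokens on the destinations. A straightforward induction shows that these extra firings are always enabled: the modified state dominates the original one pointwise at every step (observation-places are not consumed, so the observation conditions are monotonic), and the extra source token is present precisely because the shadow lineage occupies the same multiset of places as $\tau$'s lineage. At termination, $\tau'$'s surviving descendants sit on the same places as those of $\tau$; by choice of $\tau$ they all avoid $P\cruc$, hence $\tilde{M}=M\wit+E$ with $E$ supported on $P\smallsetminus P\cruc$, so $\rst{\tilde{M}}{P\cruc}=\rst{M\wit}{P\cruc}$, as required.

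The main obstacle is formalising the shadow construction and, in particular, maintaining the invariant that at every intermediate step the multiset of places occupied by $\tau'$'s currently-living descendants coincides with that of $\tau$'s currently-living descendants. Branching BIMO transitions, which can turn one source token into several destination tokens, make this bookkeeping subtler than in the ord-IO setting of Proposition~\ref{prop:greatermarking}, and the duplicated firings must be scheduled immediately after each corresponding original firing so that $\tau'$'s shadow source is in place exactly when it is needed.
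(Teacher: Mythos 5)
Your proposal is correct and follows essentially the same route as the paper's proof: it invokes Lemma~\ref{lem:crucialingred} to obtain $M\wit$, $P\cruc$, $T\dead$, uses Fact~\ref{obs:IDtree} to count the at most $|P|-1$ initial IDs with a descendant on $P\cruc$ (the paper's ``black'' IDs), picks a clean ID on $p_0$ by pigeonhole (the paper's ``red'' ID $\textsc{i}_0$), and doubles each transition occurrence in its lineage so that a twin token shadows it onto don't-care places. The paper formalises your ``shadow'' bookkeeping exactly via its twin-ID construction in the ID-valuation, so the obstacle you flag at the end is handled by the machinery you already set up.
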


\begin{proof}
Let $N,M_0,M'_0,p_0$ be as in the statement, and let us assume
	that $(N,M_0)$ is non-live; we aim to show that $(N,M'_0)$ is
	non-live as well.

\smallskip
By Lemma~\ref{lem:crucialingred} there are an execution $M_0\step{\sigma}M\wit$, $P\cruc\subseteq P$, and  a
	nonempty set $T\dead\subseteq T$ 	such that 		$|\rst{(M\wit)}{P\cruc}|<|P\cruc|$,
and all  transitions from $T\dead$ are dead 		in $(\rst{N}{(P\cruc,T)},\rst{(M\wit)}{P\cruc})$ (and thus also in
		$(N,M\wit)$ by Proposition~\ref{prop:ignoredead}).
We call the \emph{places in $P\cruc$ crucial}, while 	the places 	in $P\smallsetminus P\cruc$ are \emph{don't care
		places}.	

\medskip
Let the execution
		\begin{center}
		$M_0\step{\sigma}M\wit$ be of the
form $M_0\step{t_1}M_1\step{t_2}M_2\cdots\step{t_m}M_m=M\wit$,
		\end{center}
			and
let us fix an ID-valuation
		\begin{equation}\label{eq:IDvalorigexec}
\idm_0\step{t_1}\idm_1\step{t_2}\idm_2\cdots\step{t_m}\idm_m
		\end{equation}
	of this execution. We aim to show an execution $M'_0\step{\sigma'}M'$ where
		$\rst{(M')}{P\cruc}=\rst{(M\wit)}{P\cruc}$, which 	demonstrates that $(N,M'_0)$ is non-live.

\medskip
		We say that an ID $\textsc{i}$ in $\idm_0(p)$ is	\emph{black} (or \emph{crucial}) if
			there is a place $p'\in P\cruc$ and some  $\textsc{i}'\in\idm_m(p')$ such that
			$\textsc{i}\imsucc^*\textsc{i}'$ 	(hence if the initial individual token
			$\textsc{i}$ has a successor on a~crucial place 	in $M\wit$).
Since there are at most $|P|-1$ tokens on the crucial places in
	$M\wit$, by Fact~\ref{obs:IDtree} we deduce that there are at most
	$|P|-1$ black IDs in $\bigcup_{p\in P}\idm_0(p)$.
Since $M_0(p_0)\geq |P|$, at 	least one ID in $\idm_0(p_0)$ is not black;
we fix one non-black ID $\textsc{i}_0\in\idm_0(p_0)$, and we further view
		$\textsc{i}_0$  as $\emph{red}$.
Moreover, if $\textsc{i}\imsucct{t_i}\textsc{i}'$ and $\textsc{i}$ is
viewed as red, then we also view $\textsc{i}'$ as red, and we also call the
	\emph{transition occurrence} $t_i$ \emph{red}.
Hence the red IDs are precisely the successors of $\textsc{i}_0$, and \vspace*{-1mm}
	\begin{center}
	$\sigma=t_1t_2\cdots t_m$ \vspace*{-1mm}
can be written as \vspace*{-1mm}
	$\sigma_0t'_1\sigma_1t'_2\sigma_2\cdots t'_r\sigma_r$ \vspace*{1mm}
	\end{center}
	where $\sigma_i\in T^*$ (for 	$i\in[0,r]$) and $t'_i$, $i\in[1,r]$, are precisely the red transition occurrences.

\eject

\noindent We put
\begin{equation}\label{eq:doublered}
\sigma'=\sigma_0t'_1t'_1\sigma_1t'_2t'_2\sigma_2\cdots
t'_rt'_r\sigma_r
\end{equation}
	(each red transition occurrence in $\sigma$ has been doubled),
and we finish the overall proof by the following claim, which entails
that $(N,M'_0)$ is non-live (as we have aimed to prove).

\smallskip \emph{Claim.}	 There is an execution $M'_0\step{\sigma'}M'$ (for $\sigma'$ in~(\ref{eq:doublered})),
and $\rst{(M')}{P\cruc}=\rst{(M\wit)}{P\cruc}$.

\medskip
The claim thus entails that all $t\in T\dead$ are dead in $(N,M')$, by
another use of Proposition~\ref{prop:ignoredead}.
To prove the claim,
we recall that $M'_0$ differs from $M_0$ just by an additional token
on $p_0$, and we construct an ID-valuation \vspace*{-1mm}
\begin{center}
$\idm'_0\step{\sigma_0t'_1t'_1\sigma_1t'_2t'_2\sigma_2\cdots
t'_rt'_r\sigma_r}\idm'$
\end{center}
by adjusting the above ID-valuation~(\ref{eq:IDvalorigexec})
so that each red ID gets a ``twin'' ID as follows:
\begin{itemize}	
\itemsep=0.95pt
	\item $\idm'_0$ arises from $\idm_0$	by equipping
the additional token on $p_0$ with a fresh ID $\textsc{i}'_0$ that
		is viewed as a \emph{twin of the red ID} $\textsc{i}_0$;
	\item 		for each $j\in[1,r]$, when the (red) $\textsc{i}$ is the
		source ID of the original (red)
		transition occurrence $t'_j$, then the twin ID of
		$\textsc{i}$ is the source of
		the added occurrence $t'_j$; moreover,
		each (red) destination ID of the original $t'_j$
gets a twin due to the added $t'_j$.
\end{itemize}
It is clear that $\idm'$ differs from $\idm_m$ only so that each red
ID in $\idm_m(p)$ has an additional twin in $\idm'(p)$ (for each $p\in
P$).
Since all successors of the red ID $\textsc{i}_0$ in $\idm_m$
are on
the don't care places (i.e.,
outside the set $P\cruc$ of crucial places),
all successors of its twin ID $\textsc{i}'_0$
are in
$\idm'$ on the don't care places as well; this entails that
$\rst{(M')}{P\cruc}=\rst{(M\wit)}{P\cruc}$.
\end{proof}

\begin{remark}
We note that the main idea
of the proof of the next proposition (Proposition~\ref{prop:liveupper})
is related to the proof idea
of Replacement Lemma in~\cite{DBLP:conf/concur/RaskinWE20}, though
it is presented in a different technical framework.
\end{remark}

\begin{proposition}\label{prop:liveupper}
Let $N=(P,T,F)$ be an ord-BIMO net, and let $M_0,M'_0$ be two markings of $N$ that
coincide except of one place $p_0$ for which we have
	$2|P|-1\leq M'_0(p_0)=M_0(p_0)-1$. If $(N,M_0)$ is non-live
	then
	$(N,M'_0)$ is
	non-live as well.
\end{proposition}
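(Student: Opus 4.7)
The plan is to prove Proposition~\ref{prop:liveupper} by reversing the construction of Proposition~\ref{prop:nonlivepmar}: instead of duplicating a non-black initial token on $p_0$ together with all its descendants, I would identify a non-black initial token $\textsc{i}_0\in\idm_0(p_0)$ whose entire $\imsucc^*$-successor set can be removed from a witness execution, together with the transition occurrences having those IDs as source, while keeping the remaining execution valid.

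I would start by invoking Lemma~\ref{lem:crucialingred} on $M_0$ to obtain an execution $M_0\step{\sigma}M\wit$, a crucial set $P\cruc\subseteq P$, and a nonempty $T\dead\subseteq T$; inspecting the proof of that lemma I may additionally assume $M\wit(p)\geq 1$ for every $p\in P\smallsetminus P\cruc$. Fixing an ID-valuation $\idm_0\step{t_1}\cdots\step{t_m}\idm_m$ of $\sigma$, an initial ID is black if some of its $\imsucc^*$-successors belongs to $\idm_m(p)$ for some $p\in P\cruc$, and by Fact~\ref{obs:IDtree} the total number of black IDs is at most $|\rst{M\wit}{P\cruc}|\leq|P\cruc|-1\leq|P|-1$. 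Given a non-black $\textsc{i}_0\in\idm_0(p_0)$ that is \emph{removable}, meaning that at every step $i$ whose source is not a $\imsucc^*$-successor of $\textsc{i}_0$ and every observation place $p$ of $t_i$ the set $\idm_{i-1}(p)$ contains at least one non-successor of $\textsc{i}_0$, the rest of the proof writes $\sigma=\sigma_0t'_1\sigma_1\cdots t'_r\sigma_r$ with $t'_1,\dots,t'_r$ the red transition occurrences (sourced in the $\imsucc^*$-successors of $\textsc{i}_0$), sets $\sigma'=\sigma_0\sigma_1\cdots\sigma_r$, and checks that $M'_0\step{\sigma'}M'$ holds with $\rst{M'}{P\cruc}=\rst{M\wit}{P\cruc}$ (non-blackness of $\textsc{i}_0$ ensures no red descendants land on $P\cruc$ at $M'$). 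Proposition~\ref{prop:ignoredead} then yields that $T\dead$ is dead at $M'$, so $M'_0$ is non-live.

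The main obstacle is proving that such a removable non-black $\textsc{i}_0$ exists in $\idm_0(p_0)$ under the hypothesis $M_0(p_0)\geq 2|P|$. A naive attempt to inject the non-black unsafe IDs into $P$, by mapping each to its witnessing observation place, does not succeed, because the same place can be uniquely owned by different initial IDs at different moments of the execution (a place observed repeatedly can alternate between distinct red trajectories). I expect to overcome this by a replacement-style argument in the spirit of the Replacement Lemma of~\cite{DBLP:conf/concur/RaskinWE20}: by reshaping or shortening the witness execution one should bound the number of non-black but unsafe initial IDs on $p_0$ by $|P|$; combined with the at most $|P|-1$ black IDs, this leaves at least $2|P|-(|P|-1)-|P|=1$ removable non-black $\textsc{i}_0\in\idm_0(p_0)$, which is exactly what the hypothesis $M_0(p_0)\geq 2|P|$ is tailored to provide.
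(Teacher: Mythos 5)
Your setup coincides with the paper's: invoke Lemma~\ref{lem:crucialingred}, fix an ID-valuation, count at most $|P|-1$ black initial IDs via Fact~\ref{obs:IDtree}, and note that non-blackness guarantees the final restriction to $P\cruc$ is unaffected. The divergence, and the gap, is at the one step you yourself flag as ``the main obstacle'': the existence of a \emph{removable} non-black ID in $\idm_0(p_0)$. Your argument needs that at most $|P|$ of the non-black IDs on $p_0$ are unsafe, but no proof of this bound is given --- only the hope that ``a replacement-style argument'' or ``reshaping the witness execution'' will deliver it. As you correctly observe, the naive charge of each unsafe ID to a witnessing observation place fails because one place can, at different times, be the sole carrier of successors of many distinct initial IDs; nothing in the proposal replaces that failed injection. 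Since the whole proof reduces to this existence claim, the argument is incomplete at its crux, and it is not evident that the claim even holds for the unmodified witness execution.

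The paper sidesteps this entirely: it never tries to delete the trajectory of a single token. Instead it colours \emph{all} $\geq|P|{+}1$ non-black IDs on $p_0$ red, deletes one, and rebuilds the execution as $\sigma'=\sigma_0(t'_1)^{n_1}\sigma_1\cdots(t'_r)^{n_r}\sigma_r$ with carefully chosen multiplicities $n_i\geq 0$, so that in the new ID-valuation a place $p$ carries a red token at step $i$ exactly when, in the original, $p$ carried a red token both at some step $\leq i$ and at some step $\geq i$ (condition 2 of the paper's proof). This ``interval-hull'' invariant guarantees every non-red occurrence still finds its red observation tokens, while at most one red token per place --- hence at most $|P|$ in total, created along a causality forest of first-red-destination occurrences rooted at $p_0$ --- is ever needed, which is precisely what the remaining $|P|$ red IDs on $p_0$ supply; the last-red-source occurrences then perform the cleaning needed to keep $P\cruc$ unchanged. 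If you want to salvage your single-token-removal route, you would first have to normalize the witness execution so that the removability bound becomes provable; as it stands, that is the missing idea, and the paper's global red-token bookkeeping is doing real work that your sketch does not replicate.
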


\begin{proof}
Let $N,M_0,M'_0,p_0$ be as in the statement, and let us assume
	that $(N,M_0)$ is non-live; we aim to show that $(N,M'_0)$ is
	non-live as well. (Now $M'_0(p_0)$ arises from
	$M_0(p_0)$ by removing a token, while in
	Proposition~\ref{prop:nonlivepmar} one token was added.)

\medskip
We fix
$M_0\step{\sigma}M\wit$, $P\cruc\subseteq P$,
	and $T\dead\subseteq T$ guaranteed by
	Lemma~\ref{lem:crucialingred}
	(in particular we recall that $|\rst{(M\wit)}{P\cruc}|<|P\cruc|$).
Similarly as in the proof of Proposition~\ref{prop:nonlivepmar},
we aim to show an execution
$M'_0\step{\sigma'}M'$
where $\rst{(M')}{P\cruc}=\rst{(M\wit)}{P\cruc}$, by which the proof
	will be finished.
We assume that $M_0\step{\sigma}M\wit$ is of the
form
\begin{equation*}
M_0\step{t_1}M_1\step{t_2}M_2\cdots\step{t_m}M_m=M\wit,
\end{equation*}
			and for this execution
we fix an ID-valuation
		\begin{equation}\label{eq:IDvalorigexectwo}
\idm_0\step{t_1}\idm_1\step{t_2}\idm_2\cdots\step{t_m}\idm_m.
		\end{equation}

	We view an ID $\textsc{i}$ in $\idm_0(p)$ as
			\emph{black}
if			there is a place $p'\in P\cruc$ and some
			$\textsc{i}'\in\idm_m(p')$ such that
			$\textsc{i}\imsucc^*\textsc{i}'$; hence there
			are at most $|P\cruc|-1$ black IDs in the set
			$\bigcup_{p\in P}\idm_0(p)$.
Now $M_0(p_0)\geq 2|P|$,
		and we thus have at least $|P|{+}1$ non-black IDs in
		$\idm_0(p_0)$
		(since $2|P|-(|P\cruc|-1)\geq |P|{+}1$). We view all non-black IDs in $\idm_0(p_0)$
		as \emph{red}, and all their successors
		(in~(\ref{eq:IDvalorigexectwo})) are viewed as red as well;
this also determines which \emph{transition occurrences} $t_i$
	in~(\ref{eq:IDvalorigexectwo}) are \emph{red}. We note that
	each red transition occurrence
		changes the distribution of red IDs while not
		affecting non-red IDs; on the other hand, each non-red
		transition occurrence does not affect red IDs (though
		it might need red IDs on its observation places).

\medskip
We write 	$\sigma=t_1t_2\cdots t_m$ 	as
	$\sigma_0t'_1\sigma_1t'_2\sigma_2\cdots t'_r\sigma_r$
		where 	$t'_i$, $i\in[1,r]$, are precisely the red
	transition occurrences in~(\ref{eq:IDvalorigexectwo}), and we
	also 	present~(\ref{eq:IDvalorigexectwo}) as
\begin{equation}\label{eq:IDvalorigexecthree}
\idmr_0\step{\sigma_0t'_1}\idmr_1\step{\sigma_1t'_2}\idmr_2\cdots
			\step{\sigma_{r-1}t'_r}\idmr_r
			\step{\sigma_r}\idmr_{r+1}
\end{equation}
where $\idmr_0=\idm_0$ and $\idmr_{r+1}=\idm_m$.
(The superscript $\textsc{r}$ in $\idmr_i$
is just a symbol that might be viewed as referring to ``oRiginal''
id-markings, which differ
from the constructed id-markings $\idm'_i$ in~(\ref{eq:IDvalorigexecfour}).)
We observe that $\idmr_{r}$ and $\idmr_{r+1}$ have the same
distribution of red IDs, since the segment $\idmr_{r}\step{\sigma_r}\idmr_{r+1}$
does not contain any red transition occurrence; in particular,
in both  $\idmr_{r}$ and $\idmr_{r+1}$
there
are no red IDs on the ``crucial'' places, i.e.\ on the places from $P\cruc$.

\medskip
We recall that we aim to construct a suitable execution
$M'_0\step{\sigma'}M'$; we will have
\begin{equation}\label{eq:sigmaprime}
	\sigma'=\sigma_0(t'_1)^{n_1}\sigma_1(t'_2)^{n_2}\sigma_2\cdots
	(t'_r)^{n_r}\sigma_r
\end{equation}
for certain multiplicities $n_1,n_2,\dots,n_r$ (also allowing $n_i=0$).
In the proof of Proposition~\ref{prop:nonlivepmar} we had $n_i=2$ for all $i\in[1,r]$ but here
(when $M'_0(p_0)=M_0(p_0)-1$)
the situation is more complicated.
The idea is that we aim to modify~(\ref{eq:IDvalorigexecthree})
so that we get
\begin{equation}\label{eq:IDvalorigexecfour}
	\idm'_0\step{\sigma_0(t'_1)^{n_1}}\idm'_1\step{\sigma_1(t'_2)^{n_2}}\idm'_2\cdots
			\step{\sigma_{r-1}(t'_r)^{n_r}}\idm'_r
			\step{\sigma_r}\idm'_{r+1}
\end{equation}
where $\idm'_0$ arises from $\idm_0$ (i.e., from $\idmr_0$) by
removing one red ID from $p_0$
(hence at least $|P|$ red IDs remain in $\idm'_0(p_0)$),
and where the red transition occurrences are used so
that~(\ref{eq:IDvalorigexecfour})
is an ID-valuation of
a desired execution $M'_0\step{\sigma'}M'$.
In~(\ref{eq:IDvalorigexecfour}) we will also view as red precisely
those IDs that are successors of the red IDs in $\idm'_0(p_0)$.
We will construct~(\ref{eq:IDvalorigexecfour}) so that the red
transition occurrences (changing just the distribution of red IDs)
will be precisely those in the segments $(t'_i)^{n_i}$ (for all $i\in[1,r]$);
each non-red transition occurrence, in a~segment $\sigma_i$
($i\in[0,r]$),
will cause the same ID-change as its corresponding non-red
transition occurrence in~(\ref{eq:IDvalorigexecthree}).
More concretely, we aim to choose the multiplicities $n_i$
so that $\sigma'$ (defined in~(\ref{eq:sigmaprime})) is performable from
$M'_0$ and allows us
to construct a~respective
ID-valuation~(\ref{eq:IDvalorigexecfour}) so that
the following conditions hold for all $i\in[0,r{+}1]$:
\begin{enumerate}
	\item the id-marking $\idm'_i$
		in~(\ref{eq:IDvalorigexecfour}) coincides with
		$\idmr_i$ in~(\ref{eq:IDvalorigexecthree}) when the red
		IDs are ignored;
\item	 for each $p\in P$: there is at least one red ID in
		$\idm'_i(p)$ in~(\ref{eq:IDvalorigexecfour}) if, and
		only if,
 in~(\ref{eq:IDvalorigexecthree}) there is a~red ID in some of the sets
	$\idmr_0(p),\idmr_1(p),\dots,\idmr_i(p)$ and
	there is a red ID in some of the sets
				$\idmr_i(p),
		\idmr_{i+1}(p),\dots,
		\idmr_{r+1}(p)$.
\end{enumerate}		
Hence in $\idm'_i(p)$ in~(\ref{eq:IDvalorigexecfour}) we aim to
remember, by a presence of a red ID, if in the corresponding situation
in~(\ref{eq:IDvalorigexecthree})
there has been a red ID on $p$ in the past or at present;
but there should be no red ID in $\idm'_i(p)$ if
in the corresponding situation
in~(\ref{eq:IDvalorigexecthree})
there is no red ID on $p$ at present and in the future.

\medskip
Conditions $1$ and $2$
clearly hold in the case $i=0$, by our definition of $\idm'_0$
(arising from $\idmr_0$). Moreover, if conditions $1$ and $2$
hold in the case $i=r{+}1$ (and $\sigma'$ is performable from $M'_0$),
then we are done, since in this case
$\rst{(\idm'_{r+1})}{P\cruc}=\rst{(\idmr_{r+1})}{P\cruc}$
(due to the fact that there are no red IDs in $\bigcup_{p\in
P\cruc}\idmr_{r+1}(p)$), which entails
$\rst{(M')}{P\cruc}=\rst{(M\wit)}{P\cruc}$.

\medskip
Informally speaking, at most $|P|$ red IDs will suffice to serve as
the required ``memory''; we will create them from the successors
of at most $|P|$ red IDs in $\idm'_0(p_0)$ (recall that $\idm'_0(p_0)$ contains at least $|P|$ red
IDs). But we should also take care of the required ``cleaning''
(to get $\rst{(M')}{P\cruc}=\rst{(M\wit)}{P\cruc}$); e.g.,
if $p_0\in P\cruc$, then we have to remove all red IDs from $p_0$.

\medskip
To define the multiplicities $n_i$
in~(\ref{eq:IDvalorigexecfour}) rigorously, we first introduce a few
technical notions.
Referring to~(\ref{eq:IDvalorigexecthree}), for $i\in[0,r{+}1]$ we put
\begin{center}
		$\textsc{R}_i=\{p\in P\mid \idmr_i(p)$ contains a red ID$\}$.
\end{center}
As we observed after defining~(\ref{eq:IDvalorigexecthree}),
we have $\textsc{R}_r=\textsc{R}_{r+1}$, and $\textsc{R}_r\cap P\cruc=\emptyset$.
We define ``first-red'' sets and ``last-red'' sets for
		all $i\in[0,r{+}1]$:
\begin{itemize}
\itemsep=0.9pt
	\item	 $\textsc{FR}_i=\{p\in P\mid  p\in \textsc{R}_i$ and $p\not \in \textsc{R}_0\cup
			\textsc{R}_1\cdots\cup  \textsc{R}_{i-1}\}$;
	\item $\textsc{LR}_i=\{p\in P\mid  p\in \textsc{R}_i$ and $p\not \in
				\textsc{R}_{i+1}\cup  \textsc{R}_{i+2}\cdots\cup
			\textsc{R}_{r+1}\}$.
\end{itemize}
We note that $\textsc{FR}_0=\{p_0\}$,
the sets $\textsc{FR}_i$ ($i\in[0,r{+}1]$) are pairwise disjoint (hence
$|\bigcup_{i\in[0,r{+}1]}\textsc{FR}_i|\leq |P|$),
and for 	$i\in[1,r]$ each place in $\textsc{FR}_i$ is a destination place of $t'_i$ (which
holds trivially in the case $\textsc{FR}_i=\emptyset$).
For 	$i\in[0,r{-}1]$, if the set $\textsc{LR}_i$ is nonempty
 then it is a singleton 	consisting of the source place of $t'_{i+1}$.
For 	$i\in[1,r]$, we say that $t'_i$ 	in~(\ref{eq:IDvalorigexecthree}) is
		\begin{itemize}
\itemsep=0.9pt
			\item a \emph{first-red-destination 	transition-occurrence}, a \emph{first-rdto}
				for short, if 	$\textsc{FR}_i\neq\emptyset$;
	\item a \emph{last-red-source transition-occurrence},
 a \emph{last-rsto} for short,  if  $\textsc{LR}_{i-1}\neq\emptyset$.
	\end{itemize}
Since  the sets $\textsc{FR}_i$
are pairwise disjoint and
$\textsc{FR}_0=\{p_0\}$, there are at most $|P|{-}1$ first-rdtos.
We define a \emph{causality relation $\causr$ on
		the set of first-rdtos}:
\begin{center}
		$t'_i\causr t'_j$ if the source
		place of $t'_j$ belongs to $\textsc{FR}_i$ (and thus is
		a destination place of $t'_i$).
\end{center}

Hence each first-rdto $t'_i$ either has $p_0$ as the source place,
in which case there is no $j$ such that $t'_j\causr t'_i$,
or there is
precisely one $j$ ($j<i$) such that $t'_j\causr t'_i$.
Viewing the relation $\causr$ as a directed graph, it has
the form of a forest (a set of directed disjoint trees).
By $\causr^*$ we denote the reflexive and transitive closure of
$\causr$, and for each first-rdto $t'_i$ we put
\begin{center}
	$\postcr{t'_i}=\{j\mid t'_i\causr^* t'_j\}$.
\end{center}	
\eject

We define the numbers $n_i$, $i\in[1,r]$, in~(\ref{eq:IDvalorigexecfour})
as follows, depending on $t'_i$ in~(\ref{eq:IDvalorigexecthree}):

\begin{itemize}
\item if $t'_i$ is a first-rdto but not a last-rsto, then
		$n_i=|\postcr{t'_i}|$;
	\item if $t'_i$ is a last-rsto, with the source place $p_s$, then
		$n_i=|\{\textsc{i}\mid \textsc{i}$ is a red ID in
			$\idm'_{i-1}(p_s)\}|$;
	\item if $t'_i$ is neither a first-rdto nor a last-rsto, then $n_i=0$.
\end{itemize}
We note that we have not excluded that some $t'_i$ in~(\ref{eq:IDvalorigexecthree})
is both a first-rdto
(some of its destination places gets a red ID for the first time) and a last-rsto
(its source place gets rid of red IDs for the rest of the
execution). We also note that $n_i\geq 1$ for each first-rdto $t'_i$.

\medskip
To show the validity of the above conditions $1$ and $2$, it is useful
to add the following condition, for the cases $i\in[1,r]$:

\begin{enumerate}
		\setcounter{enumi}{2}
	\item
		If $t'_i$ is a first-rdto
		(in~(\ref{eq:IDvalorigexecthree}))
		and $p_s$ is its source place, then
\begin{center}
the set $\idm'_{i-1}(p_s)$
	(in~(\ref{eq:IDvalorigexecfour}))
	contains at least $1+|\postcr{t'_{i}}|$ red IDs.
\end{center}
\end{enumerate}

\noindent This condition should thus hold also in the case when
$t'_i$ is both a first-rdto and a last-rsto.

\medskip
Now we show that the chosen $n_i$ indeed fulfill our goals, i.e.,
$\sigma'$ (in~(\ref{eq:sigmaprime})) is performable from $M'_0$ and
we construct a respective ID-valuation~(\ref{eq:IDvalorigexecfour}) so
that conditions $1-3$ are satisfied for all $i\in[0,r{+}1]$.

\medskip
We first assume that condition $3$ holds for all $i\in[1,r]$,
and under this assumption we show that $1$ and $2$ are then satisfied for all
$i=0,1,\dots,r{+}1$;
we use an induction on $i$. We have already noted that
 $1$ and $2$ are satisfied for $i=0$ (where condition $3$
does not apply).
In the induction step we fix $j\in[0,r]$ and assume
that $\sigma_0(t'_1)^{n_1}\sigma_1(t'_2)^{n_2}\cdots
	\sigma_{j-1}(t'_j)^{n_j}$ is performable from $M'_0$ and
	conditions $1$ and $2$ are satisfied for $i=j$, in the so far constructed ID-valuation
$\idm'_0\step{\sigma_0(t'_1)^{n_1}}\idm'_1\cdots\step{\sigma_{j-1}(t'_j)^{n_j}}\idm'_j$.
To extend the validity to the case $i={j+1}$, we
first define the segment $\idm'_j\step{\sigma_j}\idm'_{\textsc{a}}$ in~(\ref{eq:IDvalorigexecfour})
by mimicking the segment $\idmr_j\step{\sigma_j}\idmr_{\textsc{a}}$
in~(\ref{eq:IDvalorigexecthree}); the substrict $\textsc{a}$ is just
auxiliary. Both segments thus perform the same (non-red) ID-changes,
which is possible due to conditions $1$ and $2$ for the case $i=j$;
in particular we note that for each red ID used in observation places in the segment
$\idmr_j\step{\sigma_j}\idmr_{\textsc{a}}$ there is a respective red
ID that can be used in $\idm'_j\step{\sigma_j}\idm'_{\textsc{a}}$.
In the case $j=r$ we have $\idm'_{\textsc{a}}=\idm'_{r+1}$, and $1$
and $2$ obviously hold for $i=j{+}1$ as well.
If $j<r$, then  conditions $1$ and $2$ for $i=j$
and condition $3$ for $i=j{+}1$ (which is so far just assumed)
guarantee that we can add a~segment $\idm'_\textsc{a}\step{(t'_{j+1})^{n_{j+1}}}\idm'_{j+1}$ in which all
transition occurrences are red. Moreover, both conditions $1$ and $2$
are satisfied for $i=j{+}1$:

\begin{itemize}
\itemsep=0.98pt
	\item Condition $1$ follows from the fact that the segment
		$\idm'_j\step{\sigma_j}\idm'_{\textsc{a}}$
		mimics the segment 	$\idmr_j\step{\sigma_j}\idmr_{\textsc{a}}$ (making
		precisely the same ID-changes for non-red IDs), while
		the segments  $\idmr_{\textsc{a}}\step{t'_{j+1}}\idmr_{j+1}$ and
		$\idm'_\textsc{a}\step{(t'_{j+1})^{n_{j+1}}}\idm'_{j+1}$
		affect only red IDs.
	\item	If $t'_{j+1}$ is a first-rdto but not a last-rsto
		(in~(\ref{eq:IDvalorigexecthree})), 	then $n_{j+1}\geq 1$ and performing
		$(t'_{j+1})^{n_{j+1}}$ in~(\ref{eq:IDvalorigexecfour})
		leaves at least one red ID in  the source place of
		$t'_{j+1}$; hence condition $2$ is surely kept.
	\item 	If $t'_{j+1}$ is a first-rdto and a last-rsto, 	
		then $n_{j+1}\geq 1$ and performing
		$(t'_{j+1})^{n_{j+1}}$ in~(\ref{eq:IDvalorigexecfour})
		leaves no red ID in  the source place of
		$t'_{j+1}$; hence condition $2$ is kept as well.
	\item If $t'_{j+1}$ is not a first-rdto (while it might be or not be
		a last-rsto), keeping 	condition $2$ is also obvious.		
\end{itemize}		
It remains to deal with condition $3$, which should hold for all
$i\in[1,r]$ for which $t'_i$ are first-rdtos; we recall that there are
at most $|P|-1$ first-rdtos.
Now we use an induction based on the relation $\causr$ (defined on the
set of first-rdtos); we recall
that $\causr$ can be naturally viewed as a set of disjoint directed
trees.
We fix the source place $p$ of some first-rdto $t'_i$ and define
\begin{center}
	$\textsc{S}_p=\{i\in[1,r]\mid t'_i$ is a
first-rdto in~(\ref{eq:IDvalorigexecthree}) for which $p$ is the
	source place$\}$.
\end{center}
Let $\textsc{S}_p=\{i_1,i_2,\dots,i_k\}$ where $i_1< i_2\cdots<
i_k$. We note that
if there exists a last-rsto $t'_j$ whose source
place is $p$, then $i_k\leq j$.
We have either $p=p_0$,
or there is a unique $i_0$ ($i_0<i_1$) such that $t'_{i_0}\causr t'_i$ for all
$i\in\textsc{S}_p$ (in which case $p$ is a destination place of
$t'_{i_0}$, belonging to $\textsc{FR}_{i_0}$). We deal with these two cases
below (implicitly using the fact that the definition of $\causr$ entails that the sets
$\postcr{t'_{i_1}}$, $\postcr{t'_{i_2}}$,
$\dots$, $\postcr{t'_{i_k}}$ are pairwise disjoint):

\begin{itemize}		
	\item
		$p=p_0$ (basis of our induction)

		Since $\sum_{i\in\textsc{S}_{p_0}}|\postcr{t'_{i}}|\leq|P|-1$ and there are at
		least $|P|$ red IDs in $\idm'_0(p_0)$, condition $3$
		clearly holds for all $i\in\textsc{S}_{p_0}$.
	\item
 $t'_{i_0}\causr t'_i$ for all
		$i\in\textsc{S}_p$ (induction step, assuming that
		condition $3$ holds for $i_0$)

		We note that $\idm'_{i_0}(p)$ contains $n_{i_0}$ red
		IDs, hence at least $|\postcr{t'_{i_0}}|$
		red IDs. (Here we use the inductive assumption for
		$i_0$, which guaranteed that in the segment
		$(t'_{i_0})^{n_{i_0}}$ of~(\ref{eq:IDvalorigexecfour})
		we could indeed make all transition occurrences red.)
Since
\begin{equation*}
	n_{i_0}\geq|\postcr{t'_{i_0}}|=1+\sum_{i,t'_{i_0}\causr t'_{i}}
	|\postcr{t'_{i}}|\geq 1+\sum_{i\in\textsc{S}_p}
	|\postcr{t'_{i}}|, \vspace*{-2mm}
\end{equation*}
condition $3$ holds for all $i\in\textsc{S}_p=\{i_1,i_2,\dots,i_k\}$.
\end{itemize}
\end{proof}

\subsection{An ord-IMO net is structurally live iff there is a live  $\{0,1\}$-marking}\label{sec:oIMOcase}

In this section we prove the following fact for ord-IMO nets
(thus proving another part of Theorem~\ref{thm:upperboundstable}
that refers to Table~\ref{tab:res});
Section~\ref{sec:counterBIO} shows that this does not hold for ord-BIO
nets.

\begin{lemma}[For structural liveness of ord-IMO nets the $\{0,1\}$ markings are
	decisive]\label{lem:IMOsmall}
	An \oIMO net $N=(P,T,F)$
	is structurally live if{f} there is a marking $M_{0}:P
	\rightarrow \{0,1\}$ such that $(N,M_{0})$ is live.
\end{lemma}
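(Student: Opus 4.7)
The \emph{if} direction is immediate from the definition of structural liveness. For the \emph{only if} direction, I would start from an arbitrary live marking $M_0$ of $N$ and apply Proposition~\ref{prop:liveentailscmsc} to obtain an optimal (carrier-maximal and self-coverable) marking $M \in \rset{M_0}$; as $M$ is reachable from the live $M_0$, it is itself live. Applying Lemma~\ref{lem:spread} to $M$ yields that every place in each rich component of $\prlx{N}$ is marked by $M$, every place in each poor component is marked by at most one token, and every poor component is a top component that forms an ord-IMO sub-net. The candidate $\{0,1\}$-marking I would propose is $M_{01} := \chi_{\carrier{M}}$, which satisfies $M_{01} \leq M$, $\carrier{M_{01}} = \carrier{M}$, and already agrees with $M$ on every poor component (where $M$ is $\{0,1\}$), differing only in that multi-token rich-component places are flattened to one token.

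To show $(N,M_{01})$ is live I would argue by contradiction using Lemma~\ref{lem:crucialingred}: if $M_{01}$ were non-live, it would admit witnesses $M\wit \in \rset{M_{01}}$, $P\cruc \subseteq P$, and a non-empty $T\dead \subseteq T$ with every $t \in T\dead$ dead at $M\wit$ (via Proposition~\ref{prop:ignoredead}). Since $M_{01} \leq M$ with equal carriers, Proposition~\ref{prop:greatermarking}---whose proof transfers verbatim to ord-IMO, because additional observation places are not consumed by firings---lifts the execution $M_{01} \step{\sigma} M\wit$ to an execution $M \step{\sigma'} \bar{M\wit}$ with $\bar{M\wit} \geq M\wit$ and $\carrier{\bar{M\wit}} = \carrier{M\wit}$. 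Liveness of $M$ passes to $\bar{M\wit}$, so for each $t \in T\dead$ there is $\bar{M\wit} \step{\rho} \bar{M}$ with $\bar{M} \geq \premset{t}$. The aim is then to mimic $\rho$ from $M\wit$ and reach a marking whose carrier still contains $\preset{t}$, contradicting the death of $t$ at $M\wit$.

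The hard part is making this mimicry rigorous. I would work within the ID-valuation framework of Section~\ref{sec:liveinoBIMO}, splitting the tokens of $\bar{M\wit}$ into ``regular'' ones (matching $M\wit$) and ``extras'' (supported on $\carrier{M\wit}$), and scheduling at each firing of $\rho$ whether to consume a regular or an extra ID so that the carrier of the regular sub-marking stays in lockstep with that of $\bar{M\wit}$. Because ord-IMO is conservative (each transition moves exactly one source token to exactly one destination) and, by Lemma~\ref{lem:spread}, rich components of $\prlx{N}$ are strongly connected $S$-subnets with every place marked, one should be able to insert short detours through rich cycles whenever the naive ``consume regular'' schedule would break the carrier invariant (e.g.\ when the sole regular token on the source of the current firing is forced out while extras remain). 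Executing this carrier-preserving token-routing argument---exploiting both conservativity and the $S$-net structure of rich components---is what I expect to be the most delicate technical step of the proof.
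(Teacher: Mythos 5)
Your opening moves coincide with the paper's: pass to an optimal live marking $M$ via Proposition~\ref{prop:liveentailscmsc}, apply Lemma~\ref{lem:spread}, and take the carrier-marking of $M$ as the candidate $\{0,1\}$-marking $M_{01}$; assuming $M_{01}$ non-live, invoke Lemma~\ref{lem:crucialingred} to get $M_{01}\step{\sigma}M\wit$, $P\cruc$, $T\dead$. The gap is in your endgame. You lift $\sigma$ to $M\step{\sigma'}\bar{M}\wit$ with $\bar{M}\wit\geq M\wit$ and equal carriers (that lifting is fine), note that $\bar{M}\wit$ is live, and then try to \emph{mimic from $M\wit$} an execution $\bar{M}\wit\step{\rho}\bar{M}$ covering $\premset{t}$ for $t\in T\dead$. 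This downward simulation is the wrong direction of monotonicity and is false in general for these nets: extra tokens placed on already-marked places of a conservative component leave the carrier unchanged but allow strictly more \emph{sets} of places to be marked simultaneously, and covering $\premset{t}$ in an ordinary net is exactly such a simultaneity condition. (A two-place component $\{p_1,p_2\}$ with free moves $p_1\to p_2\to p_1$ and a transition observing $p_1$ with source $p_2$ is dead with one token in the component and non-dead with two; this is the same phenomenon behind the paper's remark that liveness is not monotonic even for ord-IO nets.) Your proposed repair --- routing ``extra'' IDs through rich cycles --- does not obviously close the gap either, because moving a token inside a rich component of $\prlx{N}$ is itself guarded by observation conditions that need not be satisfiable at the moment a detour is required; nothing in your sketch controls this.

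The paper avoids downward simulation entirely by reordering the argument. It first shows that every component $C$ of $\prlx{N}$ that is rich at $M$ contains a ``don't care'' place $p_\textsc{nc}\in P_C\smallsetminus P\cruc$ (the token count of $C$ is invariant because the full cycle $M\step{\sigma}M$ forces the components of $\prlx{N}$ to be pairwise isolated, so at least one sub-component of $C$ in $\prlx{\rst{N}{(P,T_L)}}$ must be rich at $M\wit$), and then uses modified copies of that cycle to move, \emph{before} the witness execution is run, every superfluous token of $M$ onto such a place. This produces $\bar{M}_0\in\rset{M}$ with $\bar{M}_0\geq M_{01}$ and $\rst{(\bar{M}_0)}{P\cruc}=\rst{(M_{01})}{P\cruc}$; plain upward monotonicity then transports the witness execution to $\bar{M}_0$, the surplus stays outside $P\cruc$, and $T\dead$ is dead at the resulting marking by Proposition~\ref{prop:ignoredead}, contradicting liveness of $M$. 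To complete your proof you would essentially have to carry out this token-relocation step anyway; doing it at $M\wit$, after the witness execution, is strictly harder than doing it at the optimal $M$, where Lemma~\ref{lem:spread} and the cycle $M\step{\sigma}M$ provide exactly the structure needed.
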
	
\eject

\begin{proof}
The ``if'' direction is trivial, so we now deal with the ``only-if''
	direction.
For the sake of contradiction we assume a fixed \oIMO net $N=(P,T,F)$
	for which a fixed marking $M_0$ is live but all $M'_{0}:P\rightarrow \{0,1\}$
	are non-live.

\medskip
	By Proposition~\ref{prop:liveentailscmsc} (and the fact that
	all $M\in\rset{M_0}$ are live if $M_0$ is live) we can assume that
	$M_0$ is optimal. Since $M_0$ is thus self-coverable and the
	\oIMO net $N$
	is conservative, there must be some full
	$\sigma=t_1t_2\cdots t_m$ (in which each transition from $T$
	occurs at least once) such that
	\begin{equation}\label{eq:basicfull}
		M_0\step{\sigma}M_0, \textnormal{ i.e. }M_0\step{t_1t_2\cdots t_m}M_0.
	\end{equation}
This easily shows that all strongly connected components of $\prlx{N}$
	are pairwise isolated (there is no edge from one scc to another
	scc). Hence for each $M$ and all $M'\in\rset{M}$ we have
	$|\rst{M'}{P_C}|=|\rst{M}{P_C}|$ for all components $C$ of
	$\prlx{N}$. (The number of tokens in each component is stable.)

\medskip
Since $M_0$ is optimal, by Lemma~\ref{lem:spread}
	for each component $C$ that is rich at $M_0$ we have
	$M_0(p)\geq 1$ for all $p\in P_C$, and for each component $C$
	that is poor at $M_0$ (hence $|\rst{(M_0)}{P_C}|<|P_C|$) we
	have $M_0(p)\in\{0,1\}$ for all $p\in P_C$.
Let $M'_0$ be the ``carrier-marking'' related to $M_0$, i.e., for each
$p\in P$ we have
	\begin{center}
		$M'_0(p)=0$ if $M_0(p)=0$, and 	$M'_0(p)=1$ if
		$M_0(p)\geq 1$.
	\end{center}
We say that $M'_0$ arises from $M_0$ by removing ``superfluous
	tokens'' (from rich components), and note 	
that $(N,M'_0)$ has the same rich components as $(N,M_0)$.
We also recall that $M'_0$ is non-live by our assumption.
Lemma~\ref{lem:crucialingred} and its proof thus yield
	\begin{center}
	$M'_0\step{\sigma'}M\wit$, $P\cruc$, $T\dead$
	\end{center}
		where $M\wit$ is a DL-marking 		for which the transitions from $T\dead$
are dead and the transitions from $T_L=T\smallsetminus T\dead$ are 	live.
Moreover, due to the proof of Lemma~\ref{lem:crucialingred}
we can choose $M\wit$ so that
	$P\cruc$ is the union of the sets $P_C$ for the components $C$
	of $\prlx{\rst{N}{(P,T_L)}}$ that are poor in
$(\rst{N}{(P,T_L)},M\wit)$.

\medskip
	We now aim to show that there is $\bar{M}_0\in\rset{M_0}$ such
	that $\bar{M}_0\geq M'_0$ and
	$\rst{(\bar{M}_0)}{P\cruc}=\rst{(M'_0)}{P\cruc}$; then the
	execution $M_0\step{*}\bar{M}_0\step{\sigma'}M'\wit$ contradicts the
	assumption that $M_0$ is live (since
	$\rst{(M'\wit)}{P\cruc}=\rst{(M\wit)}{P\cruc}$ and transitions
	from $T\dead$ are thus dead at $M'\wit$). The proof
	will thus be finished.

\medskip
	We first note that for each component $C'$ of
	$\prlx{\rst{N}{(P,T_L)}}$ we have that
	$P_{C'}$	is a subset of  $P_{C}$ for some component $C$ of
	$\prlx{N}$; in other words, each component of $\prlx{N}$ is
	partitioned into some components of
	$\prlx{\rst{N}{(P,T_L)}}$. Since for each component $C$ of $\prlx{N}$
	that is rich at (the above carrier-marking) $M'_0$
we have $|\rst{M}{P_C}|=|\rst{(M'_0)}{P_C}|=|P_C|$
for all $M\in\rset{M'_0}$, the partition of $C$ into the components
of $\prlx{\rst{N}{(P,T_L)}}$ must contain at least one component $C'$
that is rich in $(\rst{N}{(P,T_L)},M\wit)$.
Hence for each component $C$ of $\prlx{N}$ that is rich at $M'_0$
	there is a~place $p_\textsc{nc}\in P_C\smallsetminus P\cruc$
	(hence $p_\textsc{nc}$ is a don't care place in
	$(\rst{N}{(P,T_L)},M\wit)$).

\medskip
We recall that the rich components in $(N,M'_0)$ coincide with the rich
components in $(N,M_0)$. Our goal to show $M_0\step{*}\bar{M}_0$ where
	$\rst{(\bar{M}_0)}{P\cruc}=\rst{(M'_0)}{P\cruc}$ will be realized
	by moving all superfluous tokens in each component $C$ that is
	rich in $(N,M_0)$ onto a chosen place $p_\textsc{nc}\in
	P_C\smallsetminus P\cruc$.
It is clearly sufficient to show how to modify the execution~(\ref{eq:basicfull}) so that
$M_0$ is transformed just so that
one superfluous token from some $p\in P\cruc$ is moved outside $P\cruc$, onto a particular
$p_\textsc{nc}$. Such a process can be repeated until we get the
desired $\bar{M}_0$, which finishes the proof.

\medskip
Hence we fix some $p\in P\cruc$ where $M_0(p)\geq 2$ (and thus
$M_0(p)>M'_0(p)$); by our choice of
$M_0$ we have
$p\in P_C$ for a component $C$  that is rich
	in $(N,M_0)$. We fix some
	$p_\textsc{nc}\in P_C\smallsetminus P\cruc$ (whose existence
	has been discussed).
There must be a path from $p$ to $p_\textsc{nc}$ (in $\prlx{N}$); let
$t'_1t'_2\cdots t'_r$ be the sequence of transitions on this path.
We consider the execution
\begin{equation}\label{eq:movingsuperfluous}
M_0\step{\sigma_1}M_1\step{\sigma_2}M_2\cdots\step{\sigma_r}M_r
\end{equation}
where $\sigma_i$ ($i\in[1,r]$) arises from $\sigma$
in~(\ref{eq:basicfull}) by omitting all transitions contained in rich
components, except of one occurrence of $t'_i$
(hence $t'_1$ occurs once in $\sigma_1$, $t'_2$ occurs once in
$\sigma_2$, etc.).
Since all places in rich components are marked in $M_0$,
it is easy to check that~(\ref{eq:movingsuperfluous}) is a valid execution,
and that $M_r$ coincides with $M_0$ except that
$M_r(p)=M_0(p)-1$ (hence $M_r(p)\geq 1$) and
$M_r(p_\textsc{nc})=M_0(p_\textsc{nc})+1$.
Hence  the carrier-markings of $M_r$ and
$M_0$ are the same, namely
$M'_0$, but the amount of superfluous tokens in $M_r$ is less than in
$M_0$.
\end{proof}

\subsection{A structurally live ord-BIO net in which all $\{0,1\}$-markings
are non-live}\label{sec:counterBIO}

\begin{figure}[!b]
	\centering
\scalebox{0.98}{
        \begin{tikzpicture}[node distance=1.3cm,>=stealth',bend angle=15,auto]
    \tikzstyle{place}=[circle,thick,draw=black!75,fill=blue!20,minimum
    size=4mm]
    \tikzstyle{transition}=[rectangle,thick,draw=black!75,
    fill=black!20,minimum size=3mm]
    \tikzstyle{dots}=[circle,thick,draw=none,minimum
    size=6mm]

    \tikzstyle{every label}=[black]

    \begin{scope}[xshift = 7cm]
        \node [place] (p1) [label=above:$p_1$] {};
        \node [transition] (t1) [label=right:$t_1$,below of=p1] {}
        edge [pre] (p1);
        \node [place,tokens=1] (p3) [label=above right:$p_3$,below of=t1] {}
        edge [pre] (t1);

        \node [transition] (t16) [label=above:$t_{16}$,left=0.5cm of p3] {}
        edge [pre] (p3);
        \node [transition] (t15) [label=above:$t_{15}$,above=0.5cm of t16] {}
        edge [pre] (p3);

        \node [transition] (t9) [label=above:$t_9$,above right=0.5cm of p1] {}
        edge [pre] (p1);
        \node [transition] (t10) [label=above:$t_{10}$,right=0.5cm of t9] {}
        edge [post] (p1);
        \node [place] (p2) [label=left:$p_2$,below right=0.5cm of t10] {}
        edge [pre] (t9)
        edge [post] (t10)
        edge [dashed] (t1);
        \node [transition] (t13) [label=above:$t_{13}$,left=0.5cm of p1] {}
        edge [pre] (p1);
        \node [transition] (t14) [label=above:$t_{14}$,below=0.5cm of t13] {}
        edge [pre] (p1);
    \end{scope}
    \begin{scope}[xshift = 2cm]
        \node [place] (p4) [label=above:$p_4$] {};
        \node [transition] (t2) [label=left:$t_2$,below of=p4] {}
        edge [pre] (p4);
        \node [place] (p6) [label=above left:$p_6$,below of=t2] {}
        edge [pre] (t2);

        \node [transition] (t20) [label=above:$t_{20}$,right=0.5cm of p6] {}
        edge [pre] (p6);
        \node [transition] (t19) [label=above:$t_{19}$,above=0.5cm of t20] {}
        edge [pre] (p6);

        \node [transition] (t11) [label=above:$t_{11}$,above left=0.5cm of p4] {}
        edge [pre] (p4);
        \node [transition] (t12) [label=above:$t_{12}$,left=0.5cm of t11] {}
        edge [post] (p4);
        \node [place,tokens=2] (p5) [label=right:$p_5$,below left=0.5cm of t12] {}
        edge [pre] (t11)
        edge [post] (t12)
        edge [dashed] (t2);
        \node [transition] (t17) [label=above:$t_{17}$,right=0.5cm of p4] {}
        edge [pre] (p4);
        \node [transition] (t18) [label=above:$t_{18}$,below=0.5cm of t17] {}
        edge [pre] (p4);
    \end{scope}
    \begin{scope}[yshift = 3cm]
        \node [transition] (t3) [label=above:$t_7$] {}
        edge [post] (p5);
        \node [place] (p8) [label=above:$p_{12}$,right=0.5cm of t3] {}
        edge [pre] (t3)
        edge [dashed] (t14)
        edge [dashed] (t16);
        \node [transition] (t4) [label=above:$t_8$,right=0.5cm of p8] {}
        edge [pre] (p8)
        edge [post,bend right] (p5);
    \end{scope}
    \begin{scope}[xshift = 5cm,yshift = 3cm]
        \node [place] (p9) [label=above:$p_7$] {}
        edge [pre] (t4)
        edge [dashed] (t9)
        edge [dashed] (t10);
        \node [transition] (t5) [label=above:$t_3$,right=0.5cm of p9] {}
        edge [pre] (p9);
        \node [place,tokens=1] (p10) [label=above:$p_8$,right=0.5cm of t5] {}
        edge [pre] (t5)
        edge [dashed] (t17)
        edge [dashed] (t19);
        \node [transition] (t6) [label=above:$t_4$,right=0.5cm of p10] {}
        edge [pre] (p10)
        edge [post,bend left] (p2);
        \node [place] (p11) [label=above:$p_9$,right=0.5cm of t6] {}
        edge [pre] (t6)
        edge [dashed] (t18)
        edge [dashed] (t20);
        \node [transition] (t7) [label=above:$t_5$,right=0.5cm of p11] {}
        edge [pre] (p11)
        edge [post] (p2);
    \end{scope}
    \begin{scope}[xshift = 5cm,yshift = 5cm]
        \node [place] (p12) [label=above:$p_{10}$] {}
        edge [dashed] (t11)
        edge [dashed] (t12)
        edge [pre,bend left] (t7);
        \node [transition] (t8) [label=above:$t_6$,left=0.5cm of p12] {}
        edge [pre] (p12);
        \node [place] (p7) [label=above:$p_{11}$,left=0.5cm of t8] {}
        edge [dashed] (t13)
        edge [dashed] (t15)
        edge [post,bend right] (t3)
        edge [pre] (t8);
    \end{scope}

    \draw[densely dotted, thick] (t7) to[in=90,out=280] ($(10cm,0cm)$)
         to[in=0,out=270] (p3);
    \draw[densely dotted, thick] (t6) to[in=90,out=270] ($(9.5cm,0cm)$)
         to[in=10,out=270] (p3);
    \draw[densely dotted, thick] (t5) to[in=90,out=270] ($(5cm,-2cm)$)
         to[in=225,out=270] (p3);
    \draw[densely dotted, thick] (t4) to[in=90,out=270] ($(4cm,0cm)$)
         to[in=90,out=270] ($(4cm,-2cm)$)
         to[in=315,out=270] (p6);
    \draw[densely dotted, thick] (t3) to[in=90,out=260] ($(-1cm,0cm)$)
         to[in=170,out=270] (p6);
    \draw[densely dotted, thick] (t8) to[in=90,out=270] ($(-1cm,3cm)$)
         to[in=90,out=270] ($(-1.5cm,0cm)$)
         to[in=180,out=270] (p6);
\end{tikzpicture} }\vspace*{-2mm}
        \caption{A structurally live   \oBIO net $N$ in which all
        $M:P \longrightarrow \{0,1\}$ are non-live.} \label{fig:bioce}
\end{figure}

Here we show that Lemma~\ref{lem:IMOsmall}
cannot be extended to ord-BIO nets, by providing a concrete example.
Roughly speaking, if we wanted to mimic the proof
of Lemma~\ref{lem:IMOsmall} in the case of BIO nets,
a problem would arise at moving superfluous tokens to don't care
places: such moving can create further superfluous tokens due to
branching transitions.

\medskip
Our example net $N$ is depicted in Figure~\ref{fig:bioce};
for lucidity, instead of
drawing two observation edges $(p,t)$ and $(t,p)$
we draw just one, dotted or dashed, edge between $p$ and $t$ without
end-arrows.
The respective
properties of $N$ are formulated by the next two propositions.

\begin{proposition}\label{prop:counterBIOfirst}
	The net $N=(P,T,F)$ in Figure~\ref{fig:bioce} is structurally	live.
\end{proposition}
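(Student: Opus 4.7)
The plan is to exhibit an explicit live initial marking $M_0$ of $N$, namely the one depicted in Figure~\ref{fig:bioce}: $M_0(p_3)=1$, $M_0(p_5)=2$, $M_0(p_8)=1$, and $M_0(p)=0$ for all remaining places. The fact that $M_0(p_5)=2$ rather than $1$ is essential and is precisely the feature that the companion proposition in Section~\ref{sec:counterBIO} will exploit to rule out live $\{0,1\}$-markings; so the witness marking is (necessarily) not a carrier-marking.

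To show $(N,M_0)$ is live I would construct a \emph{full} firing sequence $\sigma\in T^*$, containing every one of the 20 transitions at least once, such that $M_0\step{\sigma}M$ with $M\geq M_0$. By the standard monotonicity property of Petri nets, any execution enabled at $M_0$ is also enabled at every $M'\geq M_0$, so the cycle can be iterated to produce an infinite firing in which each transition occurs infinitely often. To promote this to liveness I would further show that every $M'\in\rset{M_0}$ can reach some $M''\geq M_0$ from which the cycle can be (re)started; the auxiliary transitions $t_{13},\dots,t_{20}$ are designed to ``consume'' stray tokens from $p_1$ and $p_4$, which should suffice to argue that no reachable marking can permanently disable any transition.

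The main obstacle is actually pinning down a concrete $\sigma$ and verifying enablement step by step. The net combines branching transitions $t_3,\dots,t_8$ (which fan source tokens along the chains feeding $p_5$ and $p_2$), the feedback cycles $t_9/t_{10}$ on $p_1\leftrightarrow p_2$ and $t_{11}/t_{12}$ on $p_4\leftrightarrow p_5$, and a dense web of observation guards gating $t_1,t_2$ and the ``sink'' transitions $t_{13},\dots,t_{20}$. The second token on $p_5$ in $M_0$ is exactly what supplies the observer required to enable $t_2$ while simultaneously allowing the $p_4\leftrightarrow p_5$ loop to turn without emptying $p_5$—this is the precise obstruction that will later block all $\{0,1\}$-markings. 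Once a consistent ordering of firings is fixed, the verification reduces to a finite mechanical check: fire the sequence from $M_0$, confirm enablement at each step, and confirm the resulting marking dominates $M_0$.
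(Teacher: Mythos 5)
Your choice of witness marking is the same as the paper's (the depicted $M_0$ with $M_0(p_3)=M_0(p_8)=1$, $M_0(p_5)=2$), and your observation that the extra token on $p_5$ is the essential feature is correct. However, the proof strategy you outline has a genuine gap in its second step. Your plan is (a) exhibit a full sequence $\sigma$ with $M_0\step{\sigma}M$, $M\geq M_0$, and (b) show that \emph{every} $M'\in\rset{M_0}$ can reach some $M''\geq M_0$. Step (a) is indeed a finite mechanical check and such a $\sigma$ exists. But step (b) is where all the difficulty lives, and it is \emph{not} a finite mechanical check as you claim in your last sentence: the net is unbounded (each turn of the control cycle $p_7,\dots,p_{12}$ pumps fresh tokens into $p_2$ and $p_5$ via $t_4,t_5,t_7,t_8$), so $\rset{M_0}$ is infinite, and a universally quantified claim over it requires an inductive invariant, not an enumeration. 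Moreover, your description of the consuming transitions is off in a way that hides the danger: $t_{13},\dots,t_{20}$ consume tokens from $p_1$, $p_3$, $p_4$ \emph{and} $p_6$, so they are not benign ``clean-up'' devices but precisely the transitions that threaten to drain the observers needed by $t_1$, $t_2$ and the control cycle; one must argue that no interleaving of them can ever leave the control cycle stuck with too few tokens on $\{p_1,p_2,p_3\}$ or $\{p_4,p_5,p_6\}$.

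The paper closes exactly this gap by a different route: it formulates nine conditions on markings (e.g., ``if the control token is in $\{p_7,p_8,p_9\}$ then $M(p_5)\geq 2$'', ``if $M(p_7)=1$ then $M(p_1)+M(p_2)\geq 2$ or $M(p_3)\geq 1$'', etc.), verifies that they hold at $M_0$ and are preserved by every single transition firing, and then shows by a case analysis on the position of the control token that no transition is dead at any marking satisfying the conditions. This invariant-plus-case-analysis argument is what replaces your step (b); without some such invariant your proposal does not go through. If you want to keep your ``return to a marking $\geq M_0$'' formulation, you would still need essentially these invariants to justify that the control cycle can always complete another revolution from an arbitrary reachable marking.
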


\begin{proof}
	Let $M_0$ be the marking of $N$ depicted in
	Figure~\ref{fig:bioce}, i.e.,
	 $M_0(p_3) = M_0(p_8)=1$,
	$M_0(p_5) = 2$, and $M_0(p)=0$
	for all 	$p \in P\smallsetminus \{p_3,p_5,p_8\}$.
The set of places  $\{p_7,p_8,p_9,p_{10}, p_{11},p_{12}\}$
	can be viewed as a ``control part'' in which
 precisely one token is moving; i.e.,
	for all $M \in \rset{M_0}$ we have
    \begin{enumerate}
	    \item $\sum_{i\in[7,12]} M(p_i)=1$.
    \end{enumerate}
The first three of ``control transitions'' $t_3,t_4,t_5,t_6,t_7,t_8$
	require an observation token
in $p_3$, while the last three require an observation token
in $p_6$; we also note that each of $t_4$ and $t_5$ adds a fresh token
to $p_2$, while each of $t_7$ and $t_8$ adds a fresh token
to $p_5$.
We now add the following
    conditions that are satisfied for all $M \in \rset{M_0}$:
    \begin{enumerate}
    \itemsep=0.95pt
	    \setcounter{enumi}{1}
        \item if  $\sum_{i\in[7,9]} M(p_i)=1$, then $M(p_5) \geq 2$;
        \item if $\sum_{i\in[10,12]} M(p_i)=1$, then $M(p_2) \geq 2$;
        \item if $M(p_7) = 1$, then $M(p_1)+M(p_2) \geq 2$ or $M(p_3)
		\geq 1$;
        \item if $M(p_{10}) = 1$, then $M(p_4)+M(p_5) \geq 2$ or
		$M(p_6) \geq 1$;
        \item if $M(p_8)+M(p_9) = 1$, then $M(p_3) \geq 1$;
        \item if $M(p_{11})+M(p_{12}) = 1$, then $M(p_6) \geq 1$;
        \item if $M(p_9)=1$, then $M(p_2) \geq 1$;
        \item if $M(p_{12})=1$, then $M(p_5) \geq 1$.
    \end{enumerate}
We easily check that the conditions $1-9$ are
	satisfied for $M_0$ (i.e., for $M=M_0$).
Now let $M_1$ satisfy $1-9$, and let
	$M_1\step{t} M_2$ (for some $t \in T$).
It is a routine to verify that $M_2$ satisfies $1-9$ as well:
	E.g., if $t=t_3$, then $M_1(p_7)=1$, $M_1(p_3)=M_2(p_3)\geq 1$, and
	$M_2(p_8)=1$; $M_2$ thus obviously satisfies the conditions $1-9$
	(in particular $6$).

\medskip
In fact, we have shown that if $M$ satisfies $1-9$, then each
$M'\in\rset{M}$ satisfies $1-9$ as well. Now we show that if $M$ satisfies $1-9$, then no transition is
	dead at $M$; this entails that each $M$ satisfying $1-9$ 	(which includes $M_0$) is live.

We fix some  $M$ satisfying $1-9$. By condition $1$,
there is exactly one control 	transition $t\in\{t_3,t_4,t_5,t_6,t_7,t_8\}$
	whose source place is marked (by one token) at $M$; we perform a case analysis:
	\begin{itemize}	
		\item  If $t=t_3$ (hence $M(p_7)=1$), then by condition
			$4$ we have either $M(p_3)\geq 1$, or we can
			mark $p_3$ by transition $t_1$ that might need
			to be preceded by $t_9$ or $t_{10}$; then
			$t_3$ can be executed.
		\item  If $t=t_6$ (hence $M(p_{10})=1$), then by condition
			$5$ we have either $M(p_6)\geq 1$, or we can
			mark $p_6$ by transition $t_2$ that might need
			to be preceded by $t_{11}$ or $t_{12}$; then
			$t_6$ can be executed.
		\item If $t \in \{t_4,t_5\}$, then $t$ can be executed by condition $6$.
\item If $t \in \{t_7,t_8\}$, then $t$ can be executed by condition $7$.
	\end{itemize}
From $M$ we can thus perform all control transitions so that both
$p_3$ and $p_6$ are marked afterwards. Then we can be further
executing just the control transitions, which is increasing the number
of tokens on $p_2$ and $p_5$. This makes clear that all transitions
(including the ``token-consuming transitions''  $t_{13}, t_{14},\dots,t_{20}$)
can become enabled when we start from $M$.
\end{proof}

\begin{proposition}\label{prop:counterBIOsecond}
For the net $N=(P,T,F)$ in Figure~\ref{fig:bioce}, each marking
	$M:P\rightarrow\{0,1\}$ is non-live.
\end{proposition}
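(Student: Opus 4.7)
The plan is to prove Proposition~\ref{prop:counterBIOsecond} by contradiction: assume a marking $M : P \to \{0, 1\}$ is live and then exhibit a reachable marking at which some transition is dead. The overall strategy is to drain module B (the places $p_4, p_5, p_6$) down to a configuration where $t_2$ becomes forever disabled. First I would collect invariants: the sum $\sum_{i=7}^{12} M'(p_i)$ is preserved by every transition of $N$, so liveness of $t_3, \ldots, t_8$ forces $\sum_{i=7}^{12} M(p_i) \geq 1$; the quantity $R_B(M') := M'(p_4) + M'(p_5) + M'(p_6)$ is preserved by $t_2, t_{11}, t_{12}$, decreased by each of $t_{17}, t_{18}, t_{19}, t_{20}$, and increased (by $1$) only by $t_7$ and $t_8$, both of which require the observation $p_6 \geq 1$. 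Since $M$ is $\{0,1\}$-valued, $R_B(M) \leq 3$. Liveness of $t_1$ and $t_2$ also forces ``bootstrap'' conditions, obtained by tracing the only producers of $p_3$ and $p_6$: either $M(p_3) \geq 1$ or $M(p_1) = M(p_2) = 1$, and symmetrically for module~B.

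The main step is to construct a firing sequence from $M$ reaching a marking $M^*$ with $M^*(p_6) = 0$ and $M^*(p_4) + M^*(p_5) \leq 1$. In any such $M^*$, the quantity $p_4 + p_5$ cannot be increased from any marking in $\rset{M^*}$ (the only increments come from $t_7, t_8$, which need $p_6 \geq 1$), and $p_6$ cannot be increased (its only producer is $t_2$, which needs $p_4 \geq 1$ and $p_5 \geq 1$); hence $t_2$ is dead at $M^*$, contradicting liveness of $M$. To reach such an $M^*$, I would first ensure $p_3 \geq 1$ (firing $t_1$ once if needed, using the bootstrap condition), bring the control token to $p_8$ (completing a cycle via $t_7, t_8$ beforehand if the control starts at $p_{11}$ or $p_{12}$), fire $t_{19}$ and $t_{17}$ at $p_8$ to zero out $p_6$ and $p_4$, then advance through $p_9$ via $t_4$ (firing $t_{18}, t_{20}$ if still enabled) and stop at $p_{10}$, where only the shuffles $t_{11}, t_{12}$ (preserving $p_4 + p_5$) remain enabled in module~B.

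The main obstacle will be the case analysis on the initial control position and the bootstrap values. When the control starts at $p_{11}$ or $p_{12}$, the mandatory cycle completion via $t_7, t_8$ raises $R_B$ by $2$, so the initial bound $R_B \leq 3$ is insufficient to conclude after a single drain pass; one must verify that iterating drain-then-shuffle (using a firing of $t_2$ to convert a $p_4$-token into a $p_6$-token for further draining at $p_8$ or $p_9$) eventually brings $R_B$ down to the target. A symmetric drain argument on module~A (using $t_{13}, t_{14}, t_{15}, t_{16}$ together with the observations $p_{11}, p_{12}$) can serve as a fallback when module~B resists a single-pass drain: under the $\{0,1\}$ constraint, the consumers on one of the two sides must inevitably be driven to a dead transition, yielding the desired contradiction.
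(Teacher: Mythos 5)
Your identification of the two ``trap'' conditions (if $M(p_4)+M(p_5)\leq 1$ and $M(p_6)=0$ then $t_2,t_6,t_7,t_8$ are dead, and symmetrically for $p_1,p_2,p_3$) and your final deadness argument at $M^*$ are correct, and they coincide with the cases a) and b) that open the paper's proof. The genuine gap is in the construction of the witness execution reaching $M^*$. Your primary strategy --- drain module B down to $M^*(p_6)=0$ and $M^*(p_4)+M^*(p_5)\leq 1$ --- works only when some token already sits on $p_8$ or $p_9$ (then $t_{17},\dots,t_{20}$ empty $p_4$ and $p_6$ while $p_5\leq 1$ is untouched). Whenever the tokens of the control cycle all lie in $\{p_{10},p_{11},p_{12}\}$, reaching $p_8$ or $p_9$ forces firing $t_7$ and/or $t_8$, each of which deposits a fresh token on $p_5$; and $p_5$ can be emptied only through $t_{12}$, which observes $p_{10}$, so flushing $p_5$ at $p_{10}$ and then returning to $p_8$ or $p_9$ re-injects two tokens into $p_5$ on every cycle. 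Your proposed iteration therefore does not converge: the excess stabilizes on $p_5$ at a value $\geq 2$, and $p_4+p_5\leq 1$ is never reached. (Note also that $p_6=0$ with $p_4+p_5=2$ is \emph{not} a trap, since $t_{12}$ can restore $p_4\geq 1$ and $p_5\geq 1$ and re-enable $t_2$, so weakening the target does not help.) The ``symmetric drain of module A as a fallback'' is in fact the necessary main argument in exactly these cases --- $t_{13},\dots,t_{16}$ observe $p_{11}$ and $p_{12}$, so a token there lets you zero out $p_1$ and $p_3$ while $p_2\leq 1$ is preserved, giving the other trap --- but you only assert that ``the consumers on one of the two sides must inevitably be driven to a dead transition,'' which is essentially the proposition restated, not a proof.

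The paper resolves precisely this point by a case split on where the control tokens sit: if some $p_i$ with $i\in\{8,9,11,12\}$ is marked, drain the module whose consuming transitions observe that place (module B for $p_8,p_9$; module A for $p_{11},p_{12}$); otherwise all control tokens are on $p_7$ and $p_{10}$, and one takes a \emph{shortest} execution ending with the first occurrence of $t_3$ or $t_6$ --- along it none of $t_4,t_5,t_7,t_8$ can fire, so neither $p_2$ nor $p_5$ has grown, and immediately after that first gate crossing the appropriate module can be drained. Two further points to account for: you speak of ``the control token'' in the singular, whereas a $\{0,1\}$-marking may put up to six tokens on $p_7,\dots,p_{12}$ (this does not break your final deadness argument, but it must be handled in the case analysis; the paper's sub-case $M_0(p_7)=1$ is exactly such a configuration); and when the needed observation tokens on $p_3$ or $p_6$ cannot be produced at all, the relevant control transitions are already dead, which should be stated rather than folded into ``bootstrap conditions.'' To repair your proof, replace the ``iterate the drain'' step by the position-dependent choice of which module to drain, together with the shortest-execution argument for the $p_7/p_{10}$-only case.
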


\begin{proof}
We first show that any marking $M$ satisfying one of the following
	two conditions
	\begin{enumerate}[a)]
   \itemsep=0.96pt
		\item $M(p_1)+M(p_2)\leq 1$ and	$M(p_3)=0$,
		\item $M(p_4)+M(p_5)\leq 1$ and $M(p_6)=0$
	\end{enumerate}
 is non-live. In the case a), $t_1,t_3,t_4,t_5$ are dead at $M$, since
	each of  $t_3,t_4,t_5$  needs to its enabling that $t_1$ is performed earlier,
	while $t_1$ needs that $t_4$ or $t_5$ is performed earlier.
	The case b) is analogous, here $t_2,t_6,t_7,t_8$ are dead at $M$.

\medskip
Now we fix a marking $M_0:P \longrightarrow \{0,1\}$
and show that $M_0$ is non-live,
	by analysing the
	following cases C1 and C2.
\begin{enumerate}
    \item[C1] $M_0(p_8)+M_0(p_9)>0$ or $M_0(p_{11})+M_0(p_{12})>0$.

If $M_0(p_{11})+M_0(p_{12})>0$, then we have
$M_0\step{\sigma}M$ where $\sigma$ contains
		no other transitions than
		$t_{13},t_{14},t_{15},t_{16}$,
and
		$M(p_1)=M(p_3)=0$, while $M(p_2)=M_0(p_2)\leq 1$.
		Hence $M$ is non-live by the above case a), which
		entails that $M_0$ is non-live as well.
		If $M_0(p_{8})+M_0(p_{9})>0$, then we get the case b)
		analogously.

   \item[C2] $M_0(p_8)+M_0(p_9)=0$ and $M_0(p_{11})+M_0(p_{12})=0$.

We assume that $M_0\step{\sigma t}M$ is a shortest execution where
		$t\in\{t_3,t_6\}$; if it does not exist, then $M_0$ is
		non-live.
For concreteness, we now assume that $t=t_6$, and consider the
		following two cases separately:
\begin{itemize}		
\item
$M_0(p_7)=0$.\\
	We observe that $M(p_2)\leq
		M_0(p_2)\leq 1$ (since $t_3$ and $t_6$ do not occur in
		$\sigma$, and thus $t_4,t_5,t_7,t_8$, and $t_9$ cannot occur in
		$\sigma$ either). Since $M(p_{11})=1$, we have
		$M\step{\sigma'}M'$ where
		$\sigma'=(t_{13})^{M(p_1)}(t_{15})^{M(p_3)}$ and
		$M'(p_1)=M'(p_3)=0$. Since $M'(p_2)\leq 1$,
		the above case a) shows that
		$M'$ is non-live, which entails that $M_0$ is non-live
		as well.
	\item
		$M_0(p_7)=1$.
\\
		Here $M(p_{11})=M(p_7)=1$ (since neither
		$t_3$ nor $t_8$ occurs in $\sigma$).
We have
$M\step{\sigma'}M'$ where
		$\sigma'=(t_{10})^{M(p_2)}(t_{13})^{M(p_1)+M(p_2)}(t_{15})^{M(p_3)}$
		 and
		$M'(p_1)=M'(p_2)=M'(p_3)=0$. Again, we apply the
		above case a) to $M'$, and deduce that $M_0$ is
		non-live.
\end{itemize}
		The case $t=t_3$ (instead of
		$t=t_6$) is analogous; here the above case b)
		applies.\vspace*{-1mm}
\end{enumerate}

\vspace*{-6mm}
\end{proof}

\section{Extension to non-ordinary nets}\label{sec:extensions}

In this section we finish proving Theorem~\ref{thm:upperboundstable}
that is captured by Table~\ref{tab:res} (in Section~\ref{subsec:results}).
This will be accomplished by proving the next lemma.

\begin{lemma}[Remaining results to fill
	Table~\ref{tab:res}]\label{lem:finalbounds}
Let $N = (P,T,F)$ be a BIMO net in which the maximum edge-weight is
	$w$. We have:
	\begin{enumerate}
 \itemsep=0.9pt
        \item $N$ is structurally live if{f} there is $M:P\rightarrow \{0,\dots,w \cdot \sizeof{P}\}$ such that $(N,M)$
	is live.
\item $(N,M)$ is live if{f}  $(N,M')$ is live whenever $M'(p) = M(p)$ for all $p\in P$ such that \\
			$\min\{M(p),M'(p)\}<2\cdot w \cdot  \sizeof{P}$.
\item If $N$ is an IMO net, then it is 	structurally live if{f} there is $M:P
	\rightarrow \{0,\dots,w\}$ such that $(N,M)$ is live.
    \end{enumerate}
\end{lemma}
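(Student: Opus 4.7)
}
The plan is to reduce a general BIMO net $N=(P,T,F)$ with maximum edge-weight $w$ to an equivalent \oBIMO net $N^{\circ}$ by the place-splitting construction hinted at by Figure~\ref{fig:bimoext}. Concretely, each place $p\in P$ is replaced by $w_p\in[1,w]$ copies $p_{\lara{p,1}},\dots,p_{\lara{p,w_p}}$ (where $w_p$ is the maximum weight of any edge incident to $p$), and each transition
$\bimotrans{t}{p_s}{p_{o_1},\dots,p_{o_\ell}}{p_{d_1},\dots,p_{d_k}}$
of $N$ (where the lists may contain repetitions) is rewritten as an ordinary BIMO transition that takes its source token from $p_{\lara{p_s,1}}$, observes one token on each of the distinct copies $p_{\lara{p_{o_j},i}}$ needed to witness the original multiplicities, and produces one token on each of the distinct destination copies $p_{\lara{p_{d_j},i}}$ prescribed by $\postmset{t}$. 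To allow tokens within a single group $\{p_{\lara{p,1}},\dots,p_{\lara{p,w_p}}\}$ to be rearranged, we add a small cycle of auxiliary ordinary IO-transitions $\iotrans{t_{\lara{p,i}}}{p_{\lara{p,i}}}{}{p_{\lara{p,i+1\ \mathrm{mod}\ w_p}}}$. If $N$ is IMO, the same construction produces an \oIMO net; the construction is analogous for BIO, IO.

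\medskip
The second step is a simulation lemma: a marking $M$ of $N$ is live if{f} every marking $M^{\circ}$ of $N^{\circ}$ satisfying $M(p)=\sum_{i=1}^{w_p}M^{\circ}(p_{\lara{p,i}})$ for all $p\in P$ is live in $N^{\circ}$. The forward direction uses the auxiliary cycle transitions to redistribute tokens among copies so that any required observation or destination pattern can be produced whenever $M(p)$ is large enough; the backward direction uses the fact that summing copies is invariant under the cycle transitions and that each original transition of $N$ is mimicked step-by-step by its ordinary counterpart in $N^{\circ}$. This immediately yields the equivalence of (structural) liveness between $N$ and $N^{\circ}$.

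\medskip
The third step applies the already-proven ordinary results to $N^{\circ}$ and translates the bounds back. For part (3), ord-IMO has 1st bound $1$ by Lemma~\ref{lem:IMOsmall}, so a live $\{0,1\}$-marking $M^{\circ}$ of $N^{\circ}$ gives a live $M$ of $N$ with $M(p)=\sum_i M^{\circ}(p_{\lara{p,i}})\le w_p\le w$. For part (1), the ord-BIMO 1st bound of $\sizeof{P}$ applied in $N^{\circ}$, together with a careful choice of $M^{\circ}$ that concentrates each group's tokens onto a single copy (using the cycle transitions) before applying the bound, gives a live $M$ of $N$ with $M(p)\le w\cdot\sizeof{P}$. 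For part (2), we apply Lemma~\ref{lem:liveP} to $N^{\circ}$ viewed place-by-place: if $M(p)\ge 2w\sizeof{P}$ then some copy $p_{\lara{p,i}}$ in any compatible $M^{\circ}$ can be assumed to hold at least $2\sizeof{P}$ tokens, so increasing $M(p)$ further corresponds to adding tokens to an already ``saturated'' copy of $N^{\circ}$ and does not change liveness.

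\medskip
The main obstacle is the third paragraph, specifically the part (1) bound: the naive sum $\sum_i M^{\circ}(p_{\lara{p,i}})$ with each $M^{\circ}(p_{\lara{p,i}})\le\sizeof{P'}=w\sizeof{P}$ only yields $w^2\sizeof{P}$, not the desired $w\sizeof{P}$. The delicate point is therefore to exhibit a live witness $M^{\circ}$ of $N^{\circ}$ in which, group by group, the tokens of each original place $p$ occupy essentially a single copy (so that the bound of $\sizeof{P'}$ on that copy translates into a bound on $M(p)$). This uses the freedom granted by the cycle transitions and the structural characterisation of live markings of \oBIMO nets from Section~\ref{sec:upperord} to normalise the witness; the analogous care must be taken for part (2) when invoking Lemma~\ref{lem:liveP} inside a single group of copies.
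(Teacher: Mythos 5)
Your proposal follows essentially the same route as the paper: the same place-splitting construction with the auxiliary cycle of IO transitions (the paper's net $N'$ with the transitions $t_{\lara{i,j}}$), the same relation $M\approx M'$ given by summing over the copies of each place, and the same transfer of the bounds already proved for ordinary nets. The obstacle you flag for part (1) is real --- the paper's one-line conclusion glosses over it --- and your fix is the intended one: since the cycle transitions let any marking reach, and be reached from, the related marking with all of a group's tokens concentrated on a single copy (so all markings related to a given $M$ share the same liveness status), one applies Proposition~\ref{prop:nonlivepmar} (respectively Lemma~\ref{lem:liveP} for part (2)) to that single loaded copy, whose threshold is $|P'|\le w\cdot|P|$ (respectively $2|P'|-1\le 2\cdot w\cdot|P|-1$), instead of summing the per-copy bound over all $w$ copies, which would only give $w^2\cdot|P|$.
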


We recall that the maximum edge-weight in IO nets is $2$, which
entails that the bounds for IMO nets in Table~\ref{tab:res} entail the
bounds for IO nets.
We prove Lemma~\ref{lem:finalbounds} by a simple construction
(illustrated in Figure~\ref{fig:bimoext}) that
gives an
ord-X net to a given X net, for X$\in\{$IO, IMO, BIO, BIMO$\}$,
 and by
recalling the results for ordinary nets.

\begin{figure}[!h]
\vspace*{-4mm}
    \centering
        \begin{tikzpicture}[node distance=1.3cm,>=stealth',bend angle=30,auto]
    \tikzstyle{place}=[circle,thick,draw=black!75,fill=blue!20,minimum
    size=4mm]
    \tikzstyle{transition}=[rectangle,thick,draw=black!75,
    fill=black!20,minimum size=3mm]
    \tikzstyle{dots}=[circle,thick,draw=none,minimum
    size=6mm]

    \tikzstyle{every label}=[black]

    \begin{scope}
        \node [place] (p) [label=below:$p_1$]
        {};
        
        \node [transition] (t)
        [label=below:$t$,right of=p] {}
        edge [post,bend right] node[swap]{3}  (p)
        edge [pre,bend left] node {2} (p);
        
        \node [place] (pd1) [label=below:$p_2$,right of=t] {}
        edge [pre] (t);
    \end{scope}
    \begin{scope}[xshift = 8cm,yshift = 0cm]
        \node [place] (p) [label=below:$p_\lara{1,2}$]
        {};

        \node [place] (p2) [label=above:$p_\lara{1,1}$,above of=p] {};
        \node [transition] (t11) [label=left:$t_\lara{1,1}$,left of=p2] {}
        edge [post] (p)
        edge [pre] (p2);
        \node [transition] (t12) [label=left:$t_\lara{1,2}$,left of=p] {}
        edge [pre] (p);
        \node [place] (p3) [label=below:$p_\lara{1,3}$,below of=p] {}
        edge [pre] (t12);
        \node [transition] (t13) [label=left:$t_\lara{1,3}$,left of=p3] {}
        edge [post] (p2)
        edge [pre] (p3);
        
        \node [transition] (t)
        [label=below:$t'$,right of=p] {}
        edge [post,bend right] (p)
        edge [pre,bend left] (p)
        edge [pre,bend right] (p2)
        edge [post,bend left] (p2)
        edge [post] (p3);
        
        \node [place] (pd1) [label=below:$p_{\lara{2,1}}$,right of=t] {}
        edge [pre] (t);
    \end{scope}
\end{tikzpicture}\vspace*{-3mm}
    \caption{Transforming a BIMO transition  $\bimotrans{t}{p_1}{p_1}{p_1,p_1,p_2}$
	to an ord-BIMO transition $\bimotrans{t'}{p_\lara{1,1}}{p_\lara{1,2}}{p_\lara{1,1},p_\lara{1,3},p_\lara{2,1}}$.}
    \label{fig:bimoext}\vspace{-3mm}
\end{figure}

\paragraph{An ord-BIMO net $N'$ related to a given BIMO net $N$.}
Given a BIMO net $N = (P,T,F)$, we define the function $\wmax:P
\longrightarrow \Nat$ such that
\begin{center}
$\wmax(p) = \max(\{F(p,t) \mid t
\in T\} \cup \{F(t,p) \mid t \in T\})$.
\end{center}
(For instance, in Figure~\ref{fig:bimoext} we have $\wmax(p_1)=3$.)
\eject

From $N=(P,T,F)$, where $P=\{p_1,p_2,\dots,p_m\}$,
we create an ord-BIMO net
$N'=(P',T',F')$ as follows (cf.\ Figure~\ref{fig:bimoext}):
\begin{itemize}
	\item $P'$ arises from $P$ so that each place $p_i  \in P$ with $\wmax(p_i) \geq 1$ is replaced by new places
           $p_\lara{i,1},p_\lara{i,2},\dots,p_\lara{i,\wmax(p_i)}$.
\item 	$T'=\{t'\mid t\in T\} 	\cup\bigcup_{i\in[1,m]}\{t_{\lara{i,1}},t_{\lara{i,2}},\dots,t_{\lara{i,\wmax(p_i)}}\}$
	where
 \begin{itemize}
 \itemsep=0.96pt
			\item for each $i\in[1,m]$ we have  $\iotrans{t_\lara{i,j}}{p_\lara{i,j}}{}{p_\lara{i,j+1}}$ for all
				$j\in[1,\wmax(p_i){-}1]$, and    $\iotrans{t_\lara{i,\wmax(p_i)}}{p_\lara{i,\wmax(p_i)}}{}{p_\lara{i,1}}$;
    \item     each edge $(p_i,t)$ in $N$, with weight $F(p_i,t)$, gives
				rise to ordinary edges  $(p_\lara{i,1},t')$, $(p_\lara{i,2},t'),\dots,(p_\lara{i,F(p_i,t)},t')$ 	in $N'$;
    \item     each edge $(t,p_i)$ in $N$, with weight $F(t,p_i)$, gives
				rise to ordinary edges        $(t',p_\lara{i,1})$, $(t',p_\lara{i,2}),\dots,(t',p_\lara{i,F(p_i,t)})$ 	in $N'$.
\end{itemize}
\end{itemize}
It is easy to verify that the above ord-BIMO net $N'$ related to a BIMO net
$N$ is an ord-X net if $N$ is an X net, for X$\in\{$BIO, IMO,IO$\}$.

\medskip
We say that a marking $M$ of $N=(P,T,F)$, where $P=\{p_1,p_2,\dots,p_m\}$,
and a marking $M'$
of $N'=(P',T',F')$ are \emph{related}, which we
denote by
$M\approx M'$,
if for each
$p_i\in P$ with $\wmax(p_i)\geq 1$ we have
\begin{center}
$M(p)=\sum_{j\in[1,\wmax(p_i)]}M'(p_\lara{i,j})$.
\end{center}
It is straightforward to verify that $M_1\approx M_2$ entails:
\begin{itemize}
\itemsep=0.96pt
	\item 	if $M_1\step{t}M'_1$, then
 $M_2\step{\sigma t'}M'_2$ for some $\sigma$ consisting of occurrences of
		$t_\lara{i,j}$ ($i\in[1,m]$, $j\in[1,\wmax(p_i)]$),
		and  we have $M'_1\approx M'_2$;
	\item
		if $M_2\step{t_\lara{i,j}}M'_2$, then $M_1\approx
		M'_2$;
	\item
		if $M_2\step{t'}M'_2$, then $M_1\step{t}M'_1$ where
		$M'_1\approx M'_2$;
	\item
 $M_1$ is live in $N$ iff $M_2$ is live in $N'$.
\end{itemize}

The previous results for ordinary nets and the above construction
thus entail Lemma~\ref{lem:finalbounds}.

Moreover, Lemma~\ref{lem:crucialingred} can be generalized as follows:

\begin{lemma}[For BIMO nets, if $M_0$ is non-live, then there is
	a simple witness
	$M\wit\in\rset{M_0}$]\label{lem:crucialingredgener}
For a BIMO net $N=(P,T,F)$, with the maximum edge-weight $w$, a
marking $M_0$ is non-live iff there are
	\begin{itemize}
		\item			
$M\wit\in\rset{M_0}$
	(a~witness marking),
\item
			$P\cruc\subseteq P$ (a set of crucial
	places), and
\item
			a nonempty set $T\dead\subseteq T$ (a set of dead
	transitions)
	\end{itemize}			
	such that
	\begin{enumerate}
\itemsep=0.96pt
		\item  $0\leq M\wit(p)\leq w$ for each 	$p\in P\cruc$;
		\item $\rst{N}{(P\cruc,T\smallsetminus T\dead)}$ is
			an IMO net (and is thus  conservative);
		\item all transitions from $T\dead$ are dead
			in $(\rst{N}{(P\cruc,T)},\rst{(M\wit)}{P\cruc})$.
	\end{enumerate}
\end{lemma}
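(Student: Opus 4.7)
The ``if'' direction is immediate from Proposition~\ref{prop:ignoredead}: the transitions in $T\dead$ remain dead in $(N,M\wit)$, so $M\wit \in \rset{M_0}$ witnesses that $M_0$ is non-live. For the ``only if'' direction, I plan to reduce to Lemma~\ref{lem:crucialingred} via the ord-BIMO transformation $N \mapsto N'$ from Section~\ref{sec:extensions}: each place $p_i$ is split into copies $p_\lara{i,1},\dots,p_\lara{i,\wmax(p_i)}$ joined by the bookkeeping cycle $t_\lara{i,1},\dots,t_\lara{i,\wmax(p_i)}$, and each transition $t$ of $N$ is mirrored by an ordinary transition $t'$ of $N'$.

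Starting from a non-live $M_0$, I pick any related marking $M'_0$ of $N'$; by the $M \approx M'$ liveness correspondence recalled in Section~\ref{sec:extensions}, $M'_0$ is also non-live. Applying Lemma~\ref{lem:crucialingred} to $(N',M'_0)$ yields $M'\wit \in \rset{M'_0}$, $P'\cruc \subseteq P'$, and nonempty $T'\dead \subseteq T'$ satisfying the ord-BIMO conditions; by choosing the DL-marking used in that construction so that at least one real transition (corresponding to some $t$ non-live at $M_0$) is dead at $M'\wit$, one ensures $T'\dead$ contains a real transition. I then set
\begin{center}
$P\cruc = \{\,p_i \in P \mid p_\lara{i,j} \in P'\cruc \text{ for all } j \in [1,\wmax(p_i)]\,\}$, \quad $T\dead = \{\,t \in T \mid t' \in T'\dead\,\}$,
\end{center}
and define $M\wit$ as the unique marking of $N$ related to $M'\wit$; projecting the $N'$-execution $M'_0 \step{*} M'\wit$ by dropping bookkeeping occurrences and replacing each $t'$ by $t$ yields $M_0 \step{*} M\wit$. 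Condition~1 follows from $M\wit(p_i) = \sum_j M'\wit(p_\lara{i,j}) \leq \wmax(p_i) \leq w$ for $p_i \in P\cruc$, since each summand is in $\{0,1\}$. Condition~2 is verified transition-by-transition: summing the ord-IMO single-copy flow of $t' \in T' \setminus T'\dead$ restricted to $P'\cruc$ over the copies of each place recovers the IMO flow of $t$ restricted to $P\cruc$, with the case of source outside $P\cruc$ giving a ``do-nothing'' IMO transition via the dummy-place convention. For Condition~3, any shortest execution witnessing non-deadness of some $t \in T\dead$ in the restricted $N$ can be taken to use only transitions outside $T\dead$ before firing $t$; this execution lifts to the restricted $N'$ by interleaving bookkeeping transitions (which stay within $P'\cruc$ by the ``all or none'' property) that rearrange tokens into the specific copies required by each $t_i'$, culminating in enabling $t' \in T'\dead$ --- contradicting its deadness in the restricted $N'$.

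The key technical point will be the ``all or none'' property implicit in the definition of $P\cruc$: for each $p_i$, the copies $p_\lara{i,j}$ either all belong to $P'\cruc$ or all lie outside. This is the main obstacle, because dead bookkeeping transitions could in principle break the cycle that a priori joins the copies inside a single SCC of $\prlx{\rst{N'}{(P',T'_L)}}$. In the favourable case where all $t_\lara{i,j}$ for a given $p_i$ are live at $M'\wit$, the cycle survives in the restricted relaxed net and the copies form one SCC. I will argue that the only remaining case --- where some $t_\lara{i,j}$ is dead at $M'\wit$ --- forces via a short propagation argument that every $t_\lara{i,k}$ is dead (firing $t_\lara{i,j-1}$ would reintroduce a token into the permanently empty $p_\lara{i,j}$) and every real transition producing tokens on a copy of $p_i$ is also dead, so each copy of $p_i$ becomes a singleton poor SCC in $\prlx{\rst{N'}{(P',T'_L)}}$ and again lies in $P'\cruc$.
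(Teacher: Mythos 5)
Your proof is correct and follows the route the paper intends: Lemma~\ref{lem:crucialingredgener} is stated immediately after the $N\mapsto N'$ construction and the $\approx$-correspondence, with the transfer left implicit, and your reduction to Lemma~\ref{lem:crucialingred} applied to $N'$ is exactly that transfer. The details you supply that the paper omits --- in particular the ``all or none'' property of $P'\cruc$ on the copies of each place (argued via the bookkeeping cycle in the relaxed restricted net and the propagation of deadness around it), the counting argument for transferring the IMO condition, and the care taken to ensure $T'\dead$ contains a real (non-bookkeeping) transition --- are the right ones and are sound.
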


\section{Structural liveness for BIMO nets is in PSPACE}\label{sec:inpspace}

The previous results allow us to give a straightforward proof of the
following theorem; we note that we assume a standard presentation of
BIMO nets, with edge-weights given in binary.

\begin{theorem}\label{thm:BIMOinPSPACE}
	The structural liveness problem (SLP) for BIMO nets is in
	PSPACE (and is thus PSPACE-complete).
\end{theorem}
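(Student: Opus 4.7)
The plan is to place the structural liveness problem for BIMO nets in NPSPACE, which equals \PSPACE by Savitch's theorem; the matching hardness is already provided by Theorem~\ref{thm1}.

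First I would invoke Lemma~\ref{lem:finalbounds}(1) to restrict attention to initial markings $M_0 : P \to \{0,1,\ldots,w\cdot|P|\}$. Since $w$ is given in binary, each entry of $M_0$ takes polynomially many bits, so a \PSPACE machine can nondeterministically guess such an $M_0$. The remaining task is to verify, in \PSPACE, that $(N,M_0)$ is live; I would do this by refuting non-liveness in NPSPACE.

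Non-liveness is characterized by Lemma~\ref{lem:crucialingredgener}: it is equivalent to the existence of a simple witness $(M\wit,P\cruc,T\dead)$ with $0\leq M\wit(p)\leq w$ for every $p\in P\cruc$. Since $w$ has polynomially many bits, the restriction $\rst{(M\wit)}{P\cruc}$ is storable in polynomial space, so the algorithm nondeterministically guesses $P\cruc\subseteq P$, a nonempty $T\dead\subseteq T$, and a candidate restriction $m:P\cruc\to\{0,\ldots,w\}$. Conditions 1 and 2 of the lemma (the latter being that $\rst{N}{(P\cruc,T\smallsetminus T\dead)}$ is IMO) are syntactic and trivial to check. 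Condition 3 (that each $t\in T\dead$ is dead in $(\rst{N}{(P\cruc,T)},m)$) is a \PSPACE check: under the deadness hypothesis, any firable sequence uses only transitions from $T\smallsetminus T\dead$, so it proceeds inside the conservative IMO sub-net, reachable markings on $P\cruc$ therefore have bounded total mass, and non-coverability of $\premset{t}\cap P\cruc$ for each $t\in T\dead$ is decidable in \PSPACE via Remark~\ref{rem:pspace}.

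The hard part will be verifying that the guessed $m$ is genuinely the $P\cruc$-restriction of some reachable $M\wit\in\rset{M_0}$ in the full (non-conservative) BIMO net $N$. Here my plan is to exploit Lemma~\ref{lem:finalbounds}(2): since any reachable marking can, for the purposes of liveness analysis, be replaced by its componentwise cap at $2w|P|$, a pumping/projection argument on executions of $N$ should show that whenever $m$ occurs as the $P\cruc$-restriction of some reachable $M\wit$, it already does so along an execution whose intermediate markings stay bounded by $2w|P|$ on every place. Once such a polynomial bound is in hand, a \PSPACE transducer can nondeterministically simulate the bounded execution from $M_0$, tracking the entire current marking, and accept as soon as its restriction to $P\cruc$ equals $m$. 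Combining all guesses and checks then places non-liveness in NPSPACE, liveness in \PSPACE, and so SLP in NPSPACE$=$\PSPACE; together with Theorem~\ref{thm1} this yields \PSPACE-completeness.
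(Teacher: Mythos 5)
Your overall architecture coincides with the paper's: guess a small initial marking via Lemma~\ref{lem:finalbounds}(1), then refute non-liveness in NPSPACE using the witness characterization of Lemma~\ref{lem:crucialingredgener}, checking condition 3 in \PSPACE{} because the restricted net is a conservative IMO net. The gap is exactly where you flag "the hard part": your claim that whenever $m$ occurs as the $P\cruc$-restriction of some $M\wit\in\rset{M_0}$, it already occurs along an execution whose intermediate markings stay below $2\cdot w\cdot|P|$ on every place, is not a consequence of Lemma~\ref{lem:finalbounds}(2) (which concerns the liveness \emph{status} of markings, not reachability of restrictions), and it is in fact false. Already for ordinary nets: take places $s_1,s_2,a$, transitions $\iotrans{t_1}{s_1}{}{a}$ and $\iotrans{t_2}{s_2}{}{a}$, and $M_0=(3,3,3)$; the only witness requires draining $s_1$ and $s_2$, which forces $a$ to reach $9>2\cdot w\cdot|P|=6$. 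Worse, BIMO nets are not conservative: a chain of branching transitions whose source places must all be emptied forces token counts that grow exponentially in $|P|$, and since executions can be arbitrarily long, no a priori bound on the exact counts carried by a faithful step-by-step simulation exists at all. So the "pumping/projection argument" you defer to is the actual content of the membership proof, and it cannot be supplied in the form you state it.

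The paper closes this gap differently (Algorithm~\ref{ALG:1}): it simulates an execution from $M_0$ while storing, for each place, only the token count capped at $2\cdot w\cdot|P|$ together with one boolean recording whether the cap was ever exceeded; that boolean licenses later nondeterministic re-increments of the stored count back up to the cap. Completeness then follows because the witness values on $P\cruc$ are at most $w$ and hence unaffected by capping, and soundness follows from Theorem~\ref{thm:upperboundstable}(2) together with monotonicity of firing, which guarantee that every stored marking is liveness-equivalent to (reachable from) a genuinely reachable live marking when $(N,M_0)$ is live. To repair your proof you should replace the bounded-intermediate-markings claim by this capped-simulation abstraction (or prove an equivalent finite-abstraction lemma); the rest of your argument then goes through.
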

\begin{proof}
	We suggest a nondeterministic algorithm,
	Algorithm~\ref{ALG:1}; its input consists of a BIMO net $N$ and
	a marking $M_0$ of $N$ (with the values $M_0(p)$ given in
	binary), it works in polynomial space, and it can finish
	successfully if, and only if, $(N,M_0)$ is non-live.
	(This establishes the claim, since NPSPACE$=$PSPACE.)

\begin{algorithm}[ht!]
		\label{ALG:1}\caption{(Nondeterministically) verify non-liveness of a marked BIMO net}
\KwIn{a BIMO net $N=(P,T,F)$, where $P=\{p_1,p_2,\dots,p_m\}$
	and $w$ is the maximum edge-weight; and a marking
	$M_0:P\rightarrow\Nat$.}
	\KwOut{at least one computation returns $true$ if, and only
		if, $(N,M_0)$ is non-live.}
\Begin{
	$\textsc{M} \gets M_0$ $\{\textsc{M}$ is a program variable
		containing the current marking$\}$\;
		$(\textsc{b}_1,\dots,\textsc{b}_m) \gets (false,\dots,false)$
		$\{$each place $p_i$ has an attached boolean
		variable $\textsc{b}_i\}$\;
		\Repeat(){$false$ $\{$\textnormal{hence the cycle
		repeats forever
		if not finishing with a \Return or
a \textbf{fail}$\}$}}{
\For{$i \gets 1$ \textbf{to} $m$}{
	\If{$\textsc{M}(p_i)> 2\cdot w\cdot |P|$}{
		$\textsc{M}(p_i) \gets 2\cdot w\cdot |P|$;
		$\textsc{b}_i \gets true$ $\{\textsc{b}_i$ will not
		change anymore$\}$\;
	}		
}
		\textbf{choose (nondeterministically)}
		$c\in\{1,2,3\}$\;
	\If{$c=1$}{
		\textbf{choose} a transition $t$ that is enabled at
the current marking $M$ stored in $\textsc{M}$\;
$\{$the computation \textbf{fails} if there is no
		such $t\}$\;
		$\textsc{M} \gets M'$ where $M \step{t} M'$\;		
	}
	\If{$c=2$}{
		\textbf{choose} $i\in[1,m]$\;
		\If{$\textsc{M}(p_i)< 2\cdot w\cdot |P|$ and $\textsc{b}_i=true$}{
			$\textsc{M}(p_i) \gets \textsc{M}(p_i)+1$\;
		}
	}
	\If{$c=3$}{
		\textbf{choose} $P\cruc\subseteq P$ such that $\rst{N}{(P\cruc,T)}$
			is an IMO net\;
		\textbf{choose $t\in T$}\;
		\If{$t$ is dead in
		$(\rst{N}{(P\cruc,T)},\rst{\textsc{M}}{P\cruc})$}{
			\Return{true}\;
		}
	}
}
}
\end{algorithm}

\medskip
Inspecting the presented pseudocode,
the fact that Algorithm~\ref{ALG:1} works in polynomial space is
obvious, including
the check at line 25: since $\rst{N}{(P\cruc,T)}$ is an IMO net,
and thus a conservative net, determining whether $\premset{t}$ can be
covered from $\rst{\textsc{M}}{P\cruc}$ is clearly solvable in polynomial
space (when we again recall that PSPACE$=$NPSPACE).

\medskip
	Lemma~\ref{lem:crucialingredgener}
	and Theorem~\ref{thm:upperboundstable}(2)
guarantee that Algorithm~\ref{ALG:1} can return $true$ if, and
	only if, $(N,M_0)$ is non-live:
	\begin{itemize}
		\item	 If	$(N,M_0)$ is non-live, then there are  	$M\wit\in\rset{M_0}$, $P\cruc$, $T\dead$
as given in Lemma~\ref{lem:crucialingredgener}. In this case Algorithm~\ref{ALG:1} can simply perform a
			respective execution $M_0\step{\sigma}M\wit$, by repeatedly choosing the cases $c=1$ and
			$c=2$; 	forgetting the precise marking values above the bound
			$2\cdot w\cdot |P|$ (due to the line $7$) does  not prevent this since there are 	the
			lines $18-19$ in the case $c=2$. Finally the case $c=3$ is chosen, with
the respective $P\cruc$ and some $t\in T\dead$.

		\item If $(N,M_0)$ is live, then all markings stored		in $\textsc{M}$ must be live: due to
     Theorem~\ref{thm:upperboundstable}(2), and the monotonicity of
	net executions (if $M_1\step{\sigma}M_2$, then	$M_1+M_3\step{\sigma}M_2+M_3$), all such
			markings are reachable from live markings (and thus are live themselves).
This fact prevents Algorithm~\ref{ALG:1} from returning $true$.
	\end{itemize}
\end{proof}

\section{Additional remarks}\label{sec:addrem}
As mentioned in the introduction, IO nets model the IO
protocols, hence a subclass of the general population protocols.
The nets modelling the general population protocols, the
\emph{pp-nets} for short, are also
conservative, hence the liveness problem (LP) is also  PSPACE-complete
for them.

On the other hand, in~\cite{JLV22} we elaborate an extension of the
lower-bound proof from~\cite{DBLP:journals/acta/JancarP19} to show
that the structural liveness problem (SLP) is EXPSPACE-hard for
the pp-nets.

\paragraph{Acknowledgements.}
We thank Chana Weil-Kennedy for
useful discussions
with Ji\v{r}{\'{i}} Val\r{u}\v{s}ek. We also thank anonymous reviewers
for their helpful comments.

\end{document}